\def\ps@pprintTitle{%
 \let\@oddhead\@empty
 \let\@evenhead\@empty
 \def\@oddfoot{}%
 \let\@evenfoot\@oddfoot}
\newcommand{\support}{\mathit{support}}
\newcommand{\unitdecompositions}{\mathbold{ud}}
\newcommand{\path}{\mathfrak{p}}
\newcommand{\arborealdecomposition}{\mathcal{D}}
\newtheorem{lemma}{Lemma}
\newtheorem{definition}{Definition}
\newtheorem{theorem}{Theorem}
\newtheorem{corollary}{Corollary}
\newtheorem{proposition}{Proposition}
\newtheorem{observation}{Observation}
\newcommand{\lastposition}{j}
\newcommand{\normalizingprojection}{{\mathbold{\eta}}}
\newcommand{\unweightingprojection}{{\mathbold{\zeta}}}
\newcommand{\inverse}{\mathbf{inv}}
\newcommand{\width}{w}
\newcommand{\leadingsymbol}{\mathbold{ls}}
\newcommand{\emptystring}{{\lambda}}
\newcommand{\positions}{\mathit{Pos}}
\newcommand{\vertexlabel}{\Gamma_1}
\newcommand{\edgelabel}{\Gamma_2}
\newcommand{\vertexlabeling}{\rho}
\newcommand{\edgelabeling}{\xi}
\newcommand{\terms}{\mathit{Ter}}
\newcommand{\newslicealphabet}{\mathbold{\Sigma}}
\newcommand{\treezigzagnumber}{{tzn}}
\newcommand{\olivetreedecomposition}{\mathcal{T}}
\newcommand{\identitySlice}{\mathbf{I}}
\newcommand{\arity}{\mathfrak{a}}
\newcommand{\node}{\mathbold{n}}
\newcommand{\boldT}{\mathbf{T}}
\newcommand{\zigzagnumber}{{zn}}
\newcommand{\interpretedAlphabet}{\newslicealphabet(c,\vertexlabel,\edgelabel,\mathcal{X})}
\newcommand{\interpretedAlphabetMinusX}{\newslicealphabet(c,\vertexlabel,\edgelabel,\mathcal{X}\backslash\{X\})}
\newcommand{\mso}{{$\mbox{MSO}$\;}}
\newcommand{\msotwo}{{$\mbox{MSO}_2$\;}}
\newcommand{\treeAutomaton}{{\mathcal{A}}}
\newcommand{\boldS}{{\mathbf{S}}}
\newcommand{\lang}{{\mathcal{L}}} 
\newcommand{\N}{{\mathbb{N}}} 
\newcommand{\Z}{{\mathbb{Z}}} 
\newcommand{\projection}{\mathbold{\pi}}
\newcommand{\extrawidth}{ew}
\newcommand{\astate}{{\mathfrak{q}}}
\newcommand{\composedTOne}{{%{
  \mathrel{\vbox{\offinterlineskip\ialign{%
    \hfil##\hfil\cr
    $\scriptscriptstyle\circ$\cr
    \noalign{\kern0.1ex}
    $\boldT$\cr
}}}}}
\newcommand{\composedT}{%
  \mathrel{\vbox{\offinterlineskip\ialign{%
    \hfil##\hfil\cr
    $\scriptscriptstyle\circ$\cr
    \noalign{\kern0.1ex}
    $\boldT$\cr
}}}}
\newcommand{\composedTprime}{%
  \mathrel{\vbox{\offinterlineskip\ialign{%
    \hfil##\hfil\cr
    $\scriptscriptstyle\circ$\cr
    \noalign{\kern0.1ex}
    $\boldT'$\cr
}}}}
\newcommand{\composedTplus}{%
  \mathrel{\vbox{\offinterlineskip\ialign{%
    \hfil##\hfil\cr
    $\scriptscriptstyle\oplus$\cr
    \noalign{\kern0.1ex}
    $\boldT$\cr
}}}}
\newcommand{\graph}{{\mathcal{G}}}
\newcommand{\emptyslice}{{\bm{\varepsilon}}}
\newcommand{\automataweightingfunction}{{\mathbf{w}}} 
\newcommand{\graphweightingfunction}{{\mu}} 
\begin{document}

%%%%%%%%%%%%%%%%%%%%%%%%%%%%%%%%%%%%%%%%%%%%%%%%%%%%%%%%%%%%%%%%%%%%%%%%
%%%%%%%%%%%%%%%%%%%%%%%%%%%%%%%%%%%%%%%%%%%%%%%%%%%%%%%%%%%%%%%%%%%%%%%%
%% This macro modifies pb-diagram arrows from dots to dashes %%%%%%%%%%%%%
%%%%%%%%%%%%%%%%%%%%%%%%%%%%%%%%%%%%%%%%%%%%%%%%%%%%%%%%%%%%%%%%%%%%%%%
\makeatletter
\def\dg@dotvector(#1,#2)#3{%
   \begingroup
   \dg@XTEMP=#1\relax \dg@YTEMP=#2\relax
   \let\dg@NDOTS=\dg@XEND \let\dg@DOTDIAM=\dg@WEND
   % Find number of dots: make x-spacing be DOTSPACING for arrows
   % of |slope| <= 1, and make y-spacing be DOTSPACING otherwise.
   % Thus, true spacing is never more than 30% off from DOTSPACING.
   \dg@NDOTS=\unitlength \multiply\dg@NDOTS #3\relax
   \dg@ZTEMP=\dg@YTEMP \dg@changesign\dg@YTEMP\dg@ZTEMP
   \ifnum\dg@XTEMP>\z@
      \ifnum\dg@YTEMP>\dg@XTEMP
         \multiply\dg@NDOTS\dg@YTEMP \divide\dg@NDOTS\dg@XTEMP \fi
   \else\ifnum\dg@XTEMP<\z@
      \ifnum\dg@YTEMP>-\dg@XTEMP
         \multiply\dg@NDOTS\dg@YTEMP \divide\dg@NDOTS-\dg@XTEMP \fi
   \fi\fi
   \dg@YTEMP=\dg@ZTEMP
   \divide\dg@NDOTS\dgDOTSPACING
   \ifnum\dg@NDOTS>\z@\else \dg@NDOTS=\@ne \fi
   % Compute increment vector between dots; round to \unitlength's.
   % Use NDOTS not DOTSPACING, since DOTSPACING not exactly obeyed.
   \dg@ZTEMP=\unitlength \multiply\dg@ZTEMP #3\relax
   \divide\dg@ZTEMP\dg@NDOTS
   \ifnum\dg@XTEMP=\z@
      \dg@changesign\dg@ZTEMP\dg@YTEMP \dg@YTEMP=\dg@ZTEMP
   \else
      \dg@changesign\dg@ZTEMP\dg@XTEMP
      \multiply\dg@YTEMP\dg@ZTEMP \divide\dg@YTEMP\dg@XTEMP
      \dg@XTEMP=\dg@ZTEMP
   \fi
   \divide\dg@XTEMP\unitlength \divide\dg@YTEMP\unitlength
   % Draw dotted line with \multiput
   % and arrowhead as zero-length \vector
%%% CALCULATE THE ROTATE ANGLE OF THE DASHED LINE
\ifnum\the\dg@XTEMP>0%
  \FPdiv\arang{\the\dg@YTEMP}{\the\dg@XTEMP}%
  \FParctan\arang{\arang}%
  \FPmul\arang{\arang}{57.295}%
\else
  \def\arang{90}%
\fi
%%%%%
   \begin{picture}(0,0)%
      \dg@DOTDIAM=\dgDOTSIZE \divide\dg@DOTDIAM\unitlength
      \multiput(0,0)(\dg@XTEMP,\dg@YTEMP){\dg@NDOTS}{%
%         \circle*{\dg@DOTDIAM}}% REPLACE THIS LINE WITH THE NEXT
         \smash{\rotatebox{\arang}{\rule{2pt}{.5pt}}}}%
      \multiply\dg@XTEMP\dg@NDOTS \multiply\dg@YTEMP\dg@NDOTS
      \put(\dg@XTEMP,\dg@YTEMP){\vector(#1,#2){0}}%
   \end{picture}%
   \endgroup}%
\makeatother
%%%%%%%%%%%%%%%%%%%%%%%%%%%%%%%%%%%%%%%%%%%%%%%%%%%%%%%%%%%%%%%%%%%%%%%%
%%%%%%%%%%%%%%%%%%%%%%%%%%%%%%%%%%%%%%%%%%%%%%%%%%%%%%%%%%%%%%%%%%%%%%%%
%% This macro modifies pb-diagram arrows from dots to dashes %%%%%%%%%%%%%
%%%%%%%%%%%%%%%%%%%%%%%%%%%%%%%%%%%%%%%%%%%%%%%%%%%%%%%%%%%%%%%%%%%%%%%

\begin{frontmatter}

% Author macros %%%%%%%%%%%%%%%%%%%%%%%%%%%%%%%%%%%%%%%%%%%%%%%%
\title{An Algorithmic Metatheorem for Directed Treewidth}
\author{\vspace{-5pt}Mateus de Oliveira Oliveira}  
\address{Institute of Mathematics, Academy of Sciences of the Czech Republic\\mateus.oliveira@math.cas.cz}

\begin{abstract}
The notion of {\em directed treewidth} was introduced by Johnson, Robertson, Seymour and Thomas
[Journal of Combinatorial Theory, Series B, Vol 82, 2001] as a first step towards an algorithmic metatheory for digraphs.
They showed that some NP-complete properties such as Hamiltonicity can be decided in polynomial time on digraphs of constant directed treewidth. 
Nevertheless, despite more than one decade of intensive research, the list of hard combinatorial problems that are known to be solvable 
in polynomial time when restricted to digraphs of constant {\em directed} treewidth has remained scarce. In this work we enrich this 
list by providing for the first time an algorithmic metatheorem connecting the monadic second order logic of graphs to directed treewidth.
We show that most of the known positive algorithmic results for digraphs of constant directed treewidth can be reformulated in 
terms of our metatheorem. Additionally, we show how to use our metatheorem to provide polynomial time algorithms for two classes 
of combinatorial problems that have not yet been studied in the context of directed width measures. More precisely, for each fixed $k,w\in \N$,
we show how to count in polynomial time on digraphs of directed treewidth $w$, the number of minimum spanning strong subgraphs that are 
the union of $k$ directed paths, and the number of maximal subgraphs that are the union of $k$ directed paths and satisfy a given minor closed 
property. To prove our metatheorem we devise two technical tools which we believe to be of independent 
interest. First, we introduce the notion of tree-zig-zag number of a digraph, a new directed width measure that is at most 
a constant times directed treewidth. 
Second, we introduce the notion of $z$-saturated tree slice language, a new formalism for the specification and manipulation of 
infinite sets of digraphs. 
\end{abstract}

\begin{keyword}
Combinatorial Slice Theory \sep Directed Treewidth \sep Tree-Zig-Zag Number \sep  Monadic Second Order Logic of Graphs 
\sep Algorithmic Metatheorems
\end{keyword}

\end{frontmatter}

\section{Introduction}
\label{section:Introduction}

Since the introduction of {\em directed  treewidth} in \cite{Reed1999,JohnsonRobertsonSeymourThomas2001}
much effort has been devoted into trying to identify algorithmically useful digraph width measures. 
Such a width measure should ideally satisfy two properties. First, it should be small on several interesting instances of digraphs. Second,
many combinatorial problems should become polynomial time tractable on digraphs of constant width. While the first property is satisfied by most of the digraph width measures introduced so far 
\cite{Barat2006,BerwangerDawarHunterKreutzerObdrzalek2012,BerwangerGradel2004,BerwangerGradelKaiserRabinovich2012,GruberHolzer2008,HunterKreutzer2008,Reed1999,Safari2005}, 
the goal of identifying large classes of problems that can be solved in polynomial time when these 
measures are bounded by a constant has proven to be extremely hard to achieve. On the positive side, Johnson, Robertson, Seymour and Thomas 
showed already in their seminal paper \cite{JohnsonRobertsonSeymourThomas2001} that certain linkage problems, such as Hamiltonicity and $k$-disjoint 
paths (for constant $k$), can be solved in polynomial time on digraphs of constant directed treewidth. 
Subsequently, It was shown in \cite{DankelmannGutinKimJung2009} that for each constant $k\in \N$, one can decide in polynomial time 
the existence of a spanning tree with at most $k$ leaves on digraphs of constant directed treewidth. 
More recently, it was shown in \cite{BerwangerDawarHunterKreutzerObdrzalek2012} that determining the winner for some classes of parity games 
can be solved in polynomial time on digraphs of constant DAG-width \cite{BerwangerDawarHunterKreutzerObdrzalek2012}. 

In this work we enrich the list of problems that can be solved in polynomial time on digraphs of constant 
directed treewidth. More precisely, we devise the first algorithmic metatheorem connecting {\em directed} treewidth 
to the monadic second order logic of graphs with edge set quantifications (\msotwo logic). 
We show that most of the positive algorithmic results obtained so far on digraphs of constant {\em directed} treewidth can be 
reformulated in terms of our metatheorem. Additionally we show how to use our metatheorem to provide polynomial time 
algorithms for a parameterized version of the minimum spanning strong subgraph problem, and for a parameterized version 
of the problem of counting subgraphs satisfying a given minor closed property. 

We note that celebrated results due to Courcelle \cite{Courcelle1990MSO} and 
Arnborg, Lagergren and Seese \cite{ArnborgLagergrenSeese1991} state that any problem expressible in 
\msotwo logic can be solved in linear time on graphs of constant {\em undirected} treewidth. Additionally, 
an equally famous result due to Courcelle, Makowsky and Rotics states that any problem expressible in 
\mso logic (without edge set quantifications) can be solved in linear time on graphs of constant clique-width \cite{CourcelleMakowskyRotics2000}. 
However, we observe that there are families of digraphs of constant {\em directed} treewidth, but simultaneously 
unbounded {\em undirected} treewidth and clique-width \cite{CourcelleMakowskyRotics2000}. 
For instance, the $n\times n$ grid, in which all horizontal edges are oriented to the right and all vertical edges are oriented upwards,  
has {\em directed} $\mbox{treewidth $0$}$, but {\em undirected} treewidth $\Theta(n)$ and clique-width $\Theta(n)$.
Thus our algorithmic metatheorem is not implied by the results in \cite{Courcelle1990MSO,ArnborgLagergrenSeese1991,CourcelleMakowskyRotics2000}. On the 
other hand, the fact that $3$-colorability  is MSO expressible 
implies that a complete analog of Courcelle's is theorem for digraphs of 
constant directed treewidth cannot be achieved unless P=NP, since $3$-colorability is already NP-complete on DAGs.

Before stating our main theorem we will introduce some notation. 
An edge-weighting function for a digraph $G=(V,E)$ is a function $\graphweightingfunction:E\rightarrow \Omega$ where $\Omega$ 
is a finite commutative semigroup of size polynomial in $|V|$. We will always assume that $\Omega$
has an identity element. We define the size of $G$ as $|G|= |V|+|E|$.  
 The weight of a subgraph $H=(V',E')$ of $G$ is defined as $\graphweightingfunction(H)=\sum_{e\in E'} \graphweightingfunction(e)$. 
We say that $H$ is the union of $k$ directed paths if there exist directed simple paths 
$\path_1,\path_2,...,\path_k$ with $\path_i=(V_i,E_i)$ for $i\in \{1,...,k\}$ such that 
$H=\path_1\cup \path_2\cup ...\cup \path_k = (\cup_{i=1}^k V_i,\cup_{i=1}^k E_i)$. 
We note  that the unions we consider are not necessarily vertex-disjoint nor edge-disjoint.

\begin{theorem}[Main Theorem]
\label{theorem:MainTheoremDirectedTreewidth}
Let $\varphi$ be an \msotwo sentence and let $k,w\in \N$. There is a computable function 
$f(\varphi,w,k)$ such that, given a weighted digraph $G=(V,E,\graphweightingfunction\!:\!E\rightarrow \Omega)$ of directed treewidth $w$, 
a positive integer $l<|V|$, and an element $\alpha \in \Omega$, one can count in time $f(\varphi,w,k)\cdot |G|^{O(k\cdot(w+1))}$ the number of subgraphs $H$ of $G$ 
simultaneously satisfying the following four properties: 
\begin{enumerate}[(i)]
	\item  $H\models \varphi$,
 	\item  $H$ is the union of $k$ directed paths, 
	\item  $H$ has $l$ vertices,
	\item\label{item:Weight}  $H$ has weight $\graphweightingfunction(H)=\alpha$. \newline
\end{enumerate}
\end{theorem}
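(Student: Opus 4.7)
The plan is to follow the classical automata-theoretic paradigm for algorithmic metatheorems: encode the input digraph $G$ as a labeled tree over a finite alphabet whose letter sizes are controlled by $w$ and $k$, build a bottom-up tree automaton whose accepting runs are in one-to-one correspondence with subgraphs $H$ satisfying the four properties, and read the answer off as the number of weighted accepting runs. The two technical contributions announced in the abstract, tree-zig-zag number and $z$-saturated tree slice languages, slot into this pipeline precisely where classical slice theory and \msotwo composition alone are insufficient.

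First, I would convert the given directed tree decomposition of $G$ of width $w$ into a slice-style decomposition whose tree-zig-zag width is $O(w)$; the announced fact that $\treezigzagnumber$ is at most a constant times directed treewidth is what licenses this step. The output is a finite tree whose nodes are labeled by \emph{slices}, each slice being a small digraph with a designated boundary of $O(w)$ vertices, and whose edges encode gluings of adjacent slices along matching boundaries. A candidate subgraph $H$ of $G$ is represented by a finer tree over an enriched alphabet in which every slice additionally records which vertices and edges are selected into $H$, together with a coloring that assigns to each selected vertex and edge the subset of $\{1,\dots,k\}$ of paths $\path_1,\dots,\path_k$ passing through it. Because each slice has boundary of $O(w)$ vertices and there are only $k$ colors, the enriched alphabet has size $|G|^{O(k(w+1))}$, which is the source of the exponent in the running time.

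Next, I would build three bottom-up tree automata running in parallel over this enriched encoding. The first, $\treeAutomaton_\varphi$, certifies $H\models\varphi$: using the slice gluing operations as a finite signature, the \msotwo theory of $H$ modulo quantifier rank $\mathrm{qr}(\varphi)$ composes across gluings by the Feferman--Vaught / Shelah composition lemma, yielding a finite-state automaton whose state set depends only on $\varphi$ and $w$. The second, $\treeAutomaton_{\text{paths}}$, verifies that the coloring really encodes $k$ simple directed paths: for each color $i$ it propagates the at most two endpoints of the partial $\path_i$ currently crossing the boundary, enforces the local in/out-degree constraints at interior and endpoint vertices, and rules out closed loops. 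The $z$-saturated tree slice language machinery is what keeps such path-tracking states finite and, crucially, provides a \emph{normalizing projection} $\normalizingprojection$ that selects, for every subgraph $H$ representable as a union of $k$ paths, a single canonical decorated encoding. The third component is a weighted counter over $\{0,\dots,l\}\times\Omega$ that tracks the number of selected vertices and the running sum of selected edge weights.

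Taking the product of these three automata and restricting to final state $(\cdot,\cdot,l,\alpha)$ yields a bottom-up tree automaton whose accepting runs on the encoding of $G$ are in bijection with subgraphs $H$ satisfying (i)--(iv), thanks to the canonicalization provided by $\normalizingprojection$. Counting accepting runs of a bottom-up tree automaton is a standard dynamic programme along the decomposition, running in time proportional to the product of the number of transitions and the size of the decomposition; combined with the bounds from the previous paragraphs this matches $f(\varphi,w,k)\cdot|G|^{O(k(w+1))}$. The step I expect to be the main obstacle, and the one for which the $z$-saturated formalism is introduced, is precisely the canonicalization: different relabelings of the $k$ paths, different orderings of overlapping path segments, and different slice-by-slice reconstructions of the same $H$ must all collapse into exactly one accepting run, without destroying the locality needed for the automaton to stay finite. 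Proving that $z$-saturation delivers exactly one canonical run per valid $H$, uniformly over every slice decomposition produced in the first step, is the technical heart of the argument.
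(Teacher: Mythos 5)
Your overall architecture --- decompose $G$ into a tree of bounded-boundary slices, run a product of automata for the \msotwo property, the $k$-path structure, the vertex count and the weight, and extract the answer by dynamic programming over the tree --- matches the paper's, and your accounting for the exponent $|G|^{O(k(w+1))}$ is essentially right. The gap sits exactly where you place the technical heart: the bijection between accepting runs and subgraphs $H$. You propose to obtain it by \emph{canonicalization}, with the $z$-saturated machinery and the normalizing projection $\normalizingprojection$ selecting one canonical decorated encoding per valid $H$ so that all path-relabelings and all slice-by-slice reconstructions of $H$ collapse to a single run. Neither device does this. The normalizing projection merely renumbers frontier vertices inside a slice, and $z$-saturation is the \emph{opposite} of canonicalization: a slice language is $z$-saturated when it contains \emph{every} unit decomposition of tree-zig-zag number at most $z$ of every digraph in its graph language, i.e., it deliberately retains all encodings rather than choosing one. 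A finite-state canonical selector of the kind you invoke is never constructed, and building one is precisely the kind of step that tends to fail.

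The paper gets uniqueness from elsewhere. It fixes one unit decomposition $\boldT$ of $G$ of tree-zig-zag number $z=O(w)$ and uses the fact that subgraphs of the digraph represented by $\boldT$ are in bijection with the \emph{sub-decompositions} of $\boldT$ (this is a statement about strict equality of subgraphs, not isomorphism). A deterministic automaton $\treeAutomaton(\boldT,k\cdot z)$ accepting exactly these sub-decompositions of width at most $k\cdot z$ is built from $\boldT$ itself; the property automaton $\treeAutomaton(\varphi,k,z,l,\alpha)$ is required to be $z$-saturated precisely so that, for each valid $H$, \emph{whichever} sub-decomposition of $\boldT$ represents $H$ is guaranteed to be accepted. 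Hence the intersection accepts exactly one term per valid subgraph, and one counts accepted terms of the deterministic intersection at depth $\mathit{depth}(\boldT)$. Without this fixed-ambient-decomposition argument, counting runs of your product automaton over all decorated encodings overcounts each $H$ (once per encoding and once per path-coloring). Two smaller divergences are harmless: the paper expresses ``union of $k$ directed paths'' as an \msotwo sentence folded into the logic automaton rather than building a separate path-tracking automaton, and it constructs the logic automaton by structural induction on the formula rather than by Feferman--Vaught composition.
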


We note that in \cite{deOliveiraOliveira2013IPEC} we proved an analog theorem for digraphs of constant {\em directed} pathwidth.
Nevertheless it can be shown that there exist families of digraphs of constant directed treewidth but unbounded {\em directed} 
pathwidth \cite{BerwangerDawarHunterKreutzerObdrzalek2012}. Therefore, Theorem \ref{theorem:MainTheoremDirectedTreewidth} is a strict generalization of the results in \cite{deOliveiraOliveira2013IPEC}. 
To prove Theorem \ref{theorem:MainTheoremDirectedTreewidth} we will introduce two new technical tools which may be of independent 
interest. The first, the tree-zig-zag number of a digraph, is a new directed width measure that is at most a constant times directed 
treewidth. The second, the notion of $z$-saturated tree slice languages, is a new framework for the manipulation of infinite families of digraphs.

\subsection{Applications}
\label{section:Applications}

The parameters $l$ and $\alpha$ in Theorem \ref{theorem:MainTheoremDirectedTreewidth} are upper bounded by
$|V|^{O(1)}$. By varying these parameters we can consider different flavours of optimization problems. 
For instance, we can choose to count the number of subgraphs of $G$ that are the union of $k$ directed paths, satisfy 
$\varphi$ and have maximal/minimal number of vertices, or maximal/minimal weight. In this section we  provide a list of 
natural combinatorial problems that can be solved in polynomial time on digraphs of constant directed treewidth 
using Theorem \ref{theorem:MainTheoremDirectedTreewidth}. In Subsection \ref{subsection:ApplicationsKnown}, we show
how to use Theorem \ref{theorem:MainTheoremDirectedTreewidth} to rederive three known  positive algorithmic results 
for digraphs of constant directed treewidth. In Subsection \ref{subsection:ApplicationsNotKnown}, we  show 
how Theorem \ref{theorem:MainTheoremDirectedTreewidth} can be used to solve in polynomial time two interesting classes 
of combinatorial problems  which have not yet been studied in the context of digraph width measures. Concerning the 
first class of problems, we 
show how to count the number of {\em minimum spanning strong subgraphs} that are the union of $k$ directed paths. Concerning the second class, 
we show how to count the number of maximal subgraphs that are the union of $k$ directed paths and satisfy some given minor closed property.

\subsubsection{First Examples}
\label{subsection:ApplicationsKnown}

In order to use Theorem \ref{theorem:MainTheoremDirectedTreewidth} to solve a counting problem 
in polynomial time, we need to exhibit an \msotwo sentence $\varphi$ specifying a suitable class 
of digraphs to be counted, and to specify values for the parameters $l$ and $\alpha$ which respectively determine 
the number of vertices and the weight of the subgraphs being counted. We observe that the 
class of digraphs specified by $\varphi$ is fixed and does not vary with the input digraph. 
The parameters $l$ and $\alpha$ on the other hand, may vary with the input. 

\paragraph{\textbf{Counting Hamiltonian Cycles}}
We set $\varphi$ to be an \msotwo sentence defining cycles, i.e., connected digraphs in which each vertex has precisely one 
incoming edge and one outgoing edge. We set $l=|V|$ since we are only interested in counting sub-cycles of $G$ that span all 
of its vertices. Finally, since any cycle is the union of $2$ directed paths, we set $k=2$.
We observe that counting Hamiltonian cycles on digraphs of constant directed treewidth can also be done 
via an adaptation of the techniques in \cite{JohnsonRobertsonSeymourThomas2001}. 

\paragraph{\textbf{Counting $\sigma$-Linkages}}

Given a sequence $\sigma=  (s_1,t_1,s_2,t_2,...,s_k,t_k)$ of $2k$ not necessarily distinct vertices, a $\sigma$-linkage
is a set of internally disjoint directed paths $\path_1,\path_2,...,\path_k$
where for each $i\in \{1,...,k\}$, the path $\path_i$ connects $s_i$ to $t_i$. To count the number of $\sigma$-linkages
on a digraph $G$ we first assign a distinct color to each vertex in the set 
$\{s_1,...,s_k,t_1,...,t_k\}$ and assume that all other vertices of $G$ are uncolored. 
Then we define an \msotwo sentence $\varphi_{\sigma}$ that is true in a digraph $H$ whenever 
it consists of the union of $k$ internally disjoint paths $\path_1,...,\path_k$ 
where for each $i$, the path $\path_i$ connects a vertex of color $c(s_i)$ to a vertex of color $c(t_i)$ in such 
a way that all internal vertices of $\path_i$ are uncolored. 
For each $l\in \{1,...,|V|\}$ we can use Theorem \ref{theorem:MainTheoremDirectedTreewidth} to 
count the number of $\sigma$-linkages of size $l$. We observe that counting $\sigma$-linkages 
can  also be done by via an adaptation of the results in \cite{JohnsonRobertsonSeymourThomas2001}.

\paragraph{\textbf{Counting Spanning-Out Trees with at most $k$-leaves}}
A spanning-out tree is a spanning tree in which all edges are directed towards the leaves.
To count the number of spanning-out trees with at most $k$-leaves we set 
$\varphi$ to be an \msotwo sentence defining trees with at most $k$-leaves. In other words, $\varphi$ defines 
connected digraphs without cycles in which at most $k$ vertices have no out-going edge. Since the tree has to span 
all vertices of $G$, we set $l=|V|$. Finally, we note that any spanning-out tree with at most $k$ leaves is the union of $k$ directed paths. 
We observe that counting spanning-out trees can also be done via an adaptation of the results in 
in \cite{DankelmannGutinKimJung2009}.

\subsubsection{New Applications}
\label{subsection:ApplicationsNotKnown}

In this section we exhibit two natural classes of counting problems that can be 
solved in polynomial time on digraphs of constant directed treewidth using Theorem 
\ref{theorem:MainTheoremDirectedTreewidth}. To the best of our knowledge these 
problems cannot be addressed in polynomial time using previously existing techniques.

\paragraph{\bf Minimum Spanning Strong Subgraph} 
The classic {\em Minimum Spanning Strong Subgraph (MSSS) problem} is defined as follows. 
Given a strongly connected digraph $G$, find a spanning strongly connected subgraph of $G$ with 
the minimum number of edges. This problem is in general NP complete since it generalizes 
the Hamiltonian cycle problem. Even though the MSSS problem has received a considerable amount of 
attention \cite{Aho1972transitive,BangHuangYeo2003Strongly,BessyThomasse2003,BangYeo2001Minimum,Gabow2004special,Vetta2001Approximating}, 
the connections between this problem and directed width measures are, to the limit of our knowledge, unexplored. Here we show 
that a parameterized version of the MSSS problem can be solved in polynomial time on digraphs 
of constant directed treewidth. A $k$-MSSS is a minimum spanning strong subgraph that is the union of $k$ directed paths. 
We note that determining the existence of a $\mbox{$k$-MSSS}$ on general digraphs is still NP-complete for each constant $k\geq 2$, 
since any Hamiltonian cycle is a $2$-MSSS. Using Theorem \ref{theorem:MainTheoremDirectedTreewidth}
we can not only determine the existence of a $k$-MSSS on digraphs of constant directed treewidth,
but also count in polynomial time the number of occurrences of such subgraphs. All we need to do is to set $l=|V|$, since the subgraphs 
we are counting are spanning, and to set $\varphi_{\mathit{str}}$ as the monadic second order sentence  
that is true in a digraph if and only if it is strongly connected. 

We observe that the techniques in \cite{JohnsonRobertsonSeymourThomas2001} cannot be directly applied 
to solve the $k$-MSSS problem in polynomial time due to the fact that the $k$ paths covering a
$k$-MSSS need not to be internally disjoint. For instance, in Figure \ref{figure:MSSS} we show a family $H_1,H_2,...$ 
of digraphs where for each $n\in \N$, $H_n$ is the union of $2$ paths. Note however that one needs 
$2n$ internally disjoint paths to cover all vertices and edges of $H_n$.

\begin{figure}[!hb] 
\centering 
\includegraphics[scale=0.30]{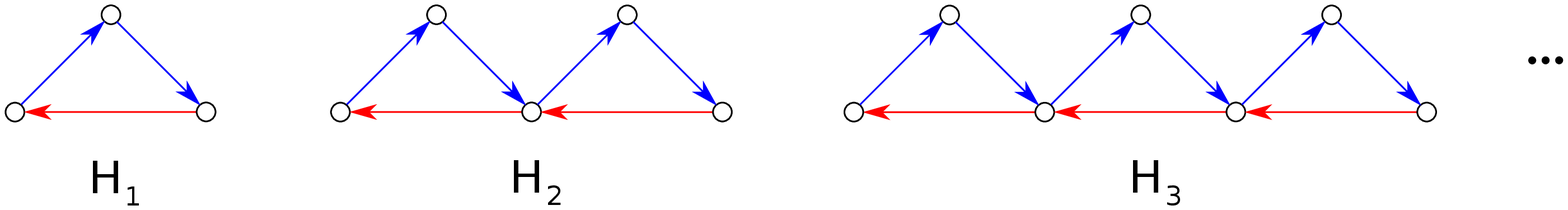} 
\vspace{-5pt}
\caption{For each $n\in \N$, the digraph $H_n$ is the union of $2$ paths. On the other hand, 
$2n$ internally disjoint paths are necessary to cover all vertices and edges of $H_n$.}
\label{figure:MSSS}
\end{figure}

\vspace{-5pt}
\paragraph{\bf Subgraphs Excluding a Minor}

An undirected graph $H$ is a minor of an undirected graph $G$ if $H$
can be obtained from a subgraph of $G$ by a sequence of edge contractions. 
A family of undirected graphs $\mathcal{F}$ is said to be minor closed if whenever 
a graph $G$ belongs to $\mathcal{F}$, any minor of $G$ is also in $\mathcal{F}$. 
Many interesting graph families are minor closed, such as,  planar graphs, outerplanar graphs, 
graphs of bounded genus, forests, series-parallel graphs, graphs of bounded undirected treewidth, etc. 
Given a minor closed family $\mathcal{F}$ and a graph $G$ it is often NP-complete to find 
a maximal subgraph of $G$ that belongs to the family $\mathcal{F}$. For instance, the following 
problems are NP-complete: finding a maximal outerplanar subgraph
\cite{Cimikowski1996Sizes,Poranen2005Heuristics}, finding a maximal planar subgraph
\cite{JungerMutzel1996,Cualinescu1998} and finding a maximal subgraph of a given genus $g$ \cite{Cualinescu1996Finding}. 

By the celebrated graph minor theorem of Robertson and Seymour \cite{RobertsonSeymour2004}, for any minor closed 
family of undirected graphs $\mathcal{F}$ there exists a finite set of undirected graphs $\hat{\mathcal{F}}$,
such that for each graph $H$, $H\in \mathcal{F}$ if and only if none of the graphs in $\hat{\mathcal{F}}$
is a minor of $H$. Thus, using the finite set $\hat{\mathcal{F}}$ one can define an \msotwo sentence
$\varphi_{\mathcal{F}}$ such that $\varphi_{\mathcal{F}}$ is true in a graph $H$ if and only if $H\in \mathcal{F}$ (see for instance \cite{CourcelleEngelfriet2012}). 
This fact implies that Theorem \ref{theorem:MainTheoremDirectedTreewidth} can be used to count in polynomial time, 
on digraphs of constant directed treewidth, the number of subgraphs that are the union of $k$ directed 
paths and whose underlying undirected graph satisfy a minor closed property. More precisely, 
if $H$ is a directed graph, let $\stackrel{\leftrightarrow}{H}$ denote the undirected graph obtained from $H$ by 
forgetting the directions of the edges in $H$. We have the following corollary of Theorem \ref{theorem:MainTheoremDirectedTreewidth}.

\begin{corollary}
\label{corollary:StructuralCounting}
Let $\mathcal{F}$ be a minor closed family of undirected graphs, $G$ be a digraph of directed treewidth $w$, and let $k\in \N$. 
Then one can count in time $f(\varphi_{\mathcal{F}},k,w)\cdot |G|^{O(k\cdot (w+1))}$ the number of (maximal) subgraphs $H$
of $G$ subject to the following restrictions: 
\begin{enumerate}
	\item $H$ is the union of $k$ directed paths.
	\item $\stackrel{\leftrightarrow}{H}$ belongs to $\mathcal{F}$.
\end{enumerate}
\end{corollary}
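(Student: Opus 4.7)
The plan is to reduce the counting to a direct application of Theorem \ref{theorem:MainTheoremDirectedTreewidth}. The key ingredient is the Robertson--Seymour graph minor theorem: since $\mathcal{F}$ is minor closed, there is a finite set of forbidden minors $\hat{\mathcal{F}}$ such that an undirected graph $J$ belongs to $\mathcal{F}$ iff no element of $\hat{\mathcal{F}}$ is a minor of $J$. Since ``contains a fixed graph $H_0$ as a minor'' is expressible in \msotwo (see \cite{CourcelleEngelfriet2012}), the conjunction of the negations of these properties over the finitely many $H_0\in \hat{\mathcal{F}}$ yields an \msotwo sentence $\psi_{\mathcal{F}}$ on undirected graphs that defines membership in $\mathcal{F}$.

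Next I would translate $\psi_{\mathcal{F}}$ into an \msotwo sentence $\varphi_{\mathcal{F}}$ on digraphs whose value on a digraph $H$ coincides with $\psi_{\mathcal{F}}(\stackrel{\leftrightarrow}{H})$. This is a routine syntactic translation in which vertex quantifications pass through unchanged, and an undirected edge $\{u,v\}$ of $\stackrel{\leftrightarrow}{H}$ is rendered as ``$(u,v)$ or $(v,u)$ is a directed edge of $H$'', with anti-parallel pairs collapsed so that the edge relation of $\stackrel{\leftrightarrow}{H}$ is faithfully represented.

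With $\varphi_{\mathcal{F}}$ in hand, the non-maximal count is obtained by applying Theorem \ref{theorem:MainTheoremDirectedTreewidth} with sentence $\varphi_{\mathcal{F}}$, parameter $k$, a trivial weighting, and summing the resulting counts over $l\in\{1,\dots,|V|\}$. The extra factor of $|V|$ is absorbed into $|G|^{O(k(w+1))}$, giving the claimed running time.

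The main obstacle is the inclusion-maximal variant, since maximality depends on the ambient $G$ and not only on $H$, and is hence not captured by Theorem \ref{theorem:MainTheoremDirectedTreewidth} as stated. My plan is to invoke a mild extension of Theorem \ref{theorem:MainTheoremDirectedTreewidth} where the counted sentence is allowed free second-order variables $(X,Y)$ for the vertex and edge sets of the subgraph and is interpreted on $G$ rather than on $H$. Such a strengthening should follow from the same slice-language machinery used to prove Theorem \ref{theorem:MainTheoremDirectedTreewidth}, because a free labeling by $(X,Y)$ can be absorbed into an enriched alphabet. In this extended framework one writes an \msotwo formula asserting that the subgraph selected by $(X,Y)$ is a union of $k$ directed paths satisfying $\varphi_{\mathcal{F}}$ and that for every edge $e\in E(G)\setminus Y$ the extension $Y\cup\{e\}$ violates one of these two conditions; the counting is then carried out by the extended theorem within the same time bound.
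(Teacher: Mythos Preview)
Your construction of $\varphi_{\mathcal{F}}$ via Robertson--Seymour and the non-maximal count by summing over $l$ is exactly the paper's argument, which is given informally in the paragraph preceding the corollary.

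The issue is with the word ``maximal''. You have read it as \emph{inclusion-maximal}, but the paper means \emph{of maximum size}: see the opening paragraph of Section~\ref{section:Applications}, which explicitly says that by varying $l$ one can ``count the number of subgraphs \ldots\ that have maximal/minimal number of vertices''. This is also consistent with the NP-hard problems cited just before the corollary (maximum planar subgraph, maximum subgraph of given genus), which are optimization problems; an inclusion-maximal planar subgraph, by contrast, can be found greedily in polynomial time and is not the intended target. Under the intended reading, the maximal case requires no extension of Theorem~\ref{theorem:MainTheoremDirectedTreewidth}: one simply takes the largest $l$ for which the count is nonzero and reports the count at that $l$.

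Your proposed extension for inclusion-maximality is not the mild change you suggest, and there is a real obstruction. The reason the automaton $\treeAutomaton(\varphi,k,z)$ in the paper's framework can be built over slices of width at most $k\cdot z$ is precisely that $\varphi$ is evaluated on $H$, and $H$ is a union of $k$ paths (Proposition~\ref{proposition:UnionZigZag}). If the sentence is instead interpreted on $G$ with free set variables $(X,Y)$ picking out $H$, then the automaton must process a unit decomposition of $G$, whose slice width $q$ is $\Theta(|E|)$ in general; the MSO-to-automaton translation over $\newslicealphabet(q,\vertexlabel,\edgelabel)$ then has size depending on $q$, and the polynomial bound is lost. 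Absorbing $(X,Y)$ into an enriched label alphabet does not circumvent this: the sub-decompositions in $\lang(\treeAutomaton(\boldT,k\cdot z))$ carry only the edges of $H$, so the edges of $G\setminus H$ over which your maximality clause quantifies are simply absent from the term the automaton reads.
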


For instance, Corollary \ref{corollary:StructuralCounting} implies that we can count in polynomial time, on digraphs of constant directed 
treewidth, maximal planar subgraphs that are the union of $k$ directed paths, or maximal subgraphs that are the 
union of $k$ directed paths and can be embedded on a torus. In our opinion it is rather surprising that the problems addressed in 
Corollary \ref{corollary:StructuralCounting} can be solved in polynomial time, in view of the complexity of the subgraphs 
that are being counted, and in view of the fact that digraphs of constant directed treewidth may have simultaneously unbounded 
{\em undirected} treewidth and clique-width.

\subsection{Hardness Results}

We argue briefly that under the assumption that the \textbf{W} hierarchy does not collapse to \textbf{FPT}, a widely believed assumption in parameterized 
complexity theory \cite{DowneyFellows1992}, the dependence on $w$ and 
on $k$ on the exponent of the running time $f(\varphi,w,k)\cdot |G|^{O(k\cdot (w+1))}$ of Theorem \ref{theorem:MainTheoremDirectedTreewidth} cannot be removed.
Concerning the dependence on $k$, we note that the problem of determining whether there exists 
$k$ disjoint paths on DAGs from prescribed pairs of nodes is \textbf{W[1]} hard with respect to $k$ \cite{Slivkins2003}. Since any 
DAG has {\em directed} treewidth $0$, we have that the existence of $k$-disjoint paths is already \textbf{W[1]}-hard even for digraphs of directed treewidth $0$.
Concerning the dependence on $w$, we note that it can be shown \cite{LampisKaouriMitsou2011} that finding Hamiltonian paths on digraphs is \textbf{W[2]} 
hard with respect to the cycle-rank of the digraph in question. Since constant directed treewidth is more expressive than constant 
cycle rank, the hardness results in \cite{LampisKaouriMitsou2011} extends to {\em directed} treewidth. Thus the dependence on $w$ in the 
exponent of the running time $f(\varphi,w,k)\cdot |G|^{O(k\cdot (w+1))}$ cannot be removed even if $k=1$.

\subsection{Proof Techniques and Organization of the Paper}
\label{subsection:ProofTechniques}

We will prove Theorem \ref{theorem:MainTheoremDirectedTreewidth} using slice theoretic techniques. 
The notion of slice language was introduced in \cite{deOliveiraOliveira2010}
and used to solve several problems in the partial order theory of concurrency. 
Subsequently, slice languages were lifted to the context of digraphs and used to provide the first algorithmic 
metatheorem for digraphs of constant {\em directed} pathwidth \cite{deOliveiraOliveira2013IPEC}. 
In this work we extend the results in \cite{deOliveiraOliveira2013IPEC} by introducing the 
notions of {\em tree slice language} and {\em slice tree automata}. We use {\em tree} slice-languages to provide the 
first algorithmic metatheorem for digraphs of constant {\em directed} treewidth (Theorem \ref{theorem:MainTheoremDirectedTreewidth}).
More precisely, we will show that the problem of counting the number of subgraphs satisfying 
the conditions ($i$)-($iv$) of Theorem \ref{theorem:MainTheoremDirectedTreewidth} can be reduced to 
the problem of counting the number of terms accepted by a suitable deterministic slice tree-automaton.
We note that the results in this work strictly generalize the results in \cite{deOliveiraOliveira2013IPEC},
since there are families of digraphs of constant {\em directed} treewidth but unbounded {\em directed} pathwidth. 
Below we give a brief description of the main technical tools used in this paper and how they fit 
together to yield a proof of Theorem \ref{theorem:MainTheoremDirectedTreewidth}. All notions introduced 
in the following paragraphs will be re-defined more carefully along the paper.

A unit slice of arity $r$ is a digraph $\boldS$ whose vertex set is partitioned into a center $C$, 
an out-frontier $F_0$ and $r$ in-frontiers $F_1,...,F_r$ in such a way that the center $C$ has at 
most one vertex and each frontier vertex is incident with precisely one edge of $\boldS$ (Figure \ref{figure:Slices}). 
Intuitively a slice $\boldS$ 
can be glued to a slice $\boldS'$ at frontier $j$ if the out-frontier of $\boldS$ can be matched with the $j$-th in-frontier of 
$\boldS'$. A finite set $\newslicealphabet$ of slices with possibly distinct arities is called a slice alphabet. In this paper we will only 
be interested in slices of arity $0$, $1$ and $2$. A term over $\newslicealphabet$ 
is a tree-like expression $\boldT$ in which each node $p$ is labeled with 
a slice $\boldT[p]$ whose arity is equal to the number of children of $p$. We say that $\boldT$ is a unit decomposition if 
for each position $pj$ the slice $\boldT[pj]$ can be glued to the slice $\boldT[p]$ at its $j$-th frontier (Figure \ref{figure:UnitDecomposition}). 

Each unit decomposition $\boldT$ gives rise to a digraph $\composedT$ which is intuitively obtained by glueing 
each two adjacent slices in $\boldT$ along their matching frontiers. Conversely, for each digraph $G$ there is a suitable 
slice alphabet $\newslicealphabet$ and unit decomposition $\boldT$ over $\newslicealphabet$ such that $\composedT$ is 
isomorphic to $G$. We can represent infinite families of digraphs via tree-automata over slice alphabets. We say that 
such an automaton $\treeAutomaton$ is a slice tree-automaton if all terms generated by $\treeAutomaton$ are 
unit decompositions. With a slice tree-automaton $\treeAutomaton$ one can associate two types of languages. The first, 
the slice language $\lang(\treeAutomaton)$, is simply 
the set of all unit decompositions accepted by $\treeAutomaton$. The second, the graph language $\lang_{\graph}(\treeAutomaton)$ 
is the set of all digraphs represented by unit decompositions in $\lang(\treeAutomaton)$.

We say that a unit decomposition $\boldT$ has tree-zig-zag number $z$ if each {\em simple} path in the 
digraph $\composedT$ represented by $\boldT$ crosses each frontier of each slice in $\boldT$ at most $z$ times $\mbox{(Figure \ref{figure:UnitDecomposition})}.$
A slice tree-automaton $\treeAutomaton$ has tree-zig-zag number $z$ if each unit decomposition $\boldT\in \lang(\treeAutomaton)$
has tree-zig-zag number $z$. Finally, we say that a slice tree-automaton $\treeAutomaton$ is $z$-saturated over a slice alphabet $\newslicealphabet$
if the presence of a digraph $H$ in the graph language $\lang_{\graph}(\treeAutomaton)$ implies that each 
unit decomposition $\boldT$ of tree-zig-zag number $z$ representing $H$ belongs to $\lang(\treeAutomaton)$. 
The importance of the notion of saturation stems from the following fact. Given a slice tree-automaton 
$\treeAutomaton$ of tree-zig-zag number $z$ and a slice tree-automaton $\treeAutomaton'$ that 
is $z$-saturated, it is possible to show that $\lang_{\graph}(\treeAutomaton\cap \treeAutomaton') = \lang_{\graph}(\treeAutomaton)\cap \lang_{\graph}(\treeAutomaton')$. 
In other words, the set of digraphs represented by the intersection $\treeAutomaton\cap \treeAutomaton'$ is equal 
to the intersection of the sets of digraphs represented by $\treeAutomaton$ and $\treeAutomaton'$ separately. 
We note that this crucial property is not satisfied by general slice tree-automata. 
Within this framework, the proof of Theorem \ref{theorem:MainTheoremDirectedTreewidth} can be divided into the following 
steps. 

\begin{enumerate}
	\item In the first step, we will show that given a digraph $G$ of directed treewidth $w$ 
	one can construct a unit decomposition $\boldT$ of $G$ of tree-zig-zag number $z\leq 9w+18$. This construction 
	will follow from a combination of Theorem \ref{theorem:ComparisonWithOtherMeasures} with 
	Proposition \ref{proposition:OliveTreeDecompositionUnitDecomposition}.  Subsequently, we will show that using 
	$\boldT$ one can construct a slice tree-automaton $\treeAutomaton(\boldT,k\cdot z)$ of tree-zig-zag number $z$ 
	whose graph language contains all subgraphs of $G$ that are the union of $k$ directed paths. The construction of 
	$\treeAutomaton(\boldT,k\cdot z)$ will be given in the proof of Lemma \ref{lemma:SubgraphsC}.  
	\item In the second step we will show that given an \msotwo sentence $\varphi$ and an integer $k$, one can automatically 
	construct a $z$-saturated slice tree-automaton $\treeAutomaton(\varphi,k,z)$ whose graph language $\lang_{\graph}(\treeAutomaton(\varphi,k,z))$
	consists precisely of the digraphs which at the same time satisfy $\varphi$ and are the union of $k$ directed paths 
	(Theorem \ref{theorem:MonadicSliceTreeAutomataZSaturated}).
	Additionally, given a positive integer $k\in \N$ and a weight $\alpha \in \Omega$, we can use $\treeAutomaton(\varphi,k,z)$
	to construct another $z$-saturated tree-automaton $\treeAutomaton(\varphi,k,z,l,\alpha)$ whose graph language contains 
	only those digraphs generated by $\treeAutomaton(\varphi,k,z)$ which have $l$ vertices and weight $\alpha$ (Lemma \ref{lemma:AutomatonSizeWeight}).
	\item Finally, in the third step we will show that the slice language of the tree-automaton 
	$\treeAutomaton(\boldT,k\cdot z)\cap \treeAutomaton(\varphi,k,z,l,\alpha)$ has precisely one unit decomposition 
	$\boldT$ for each subgraph of $G$ that is the union of $k$ directed paths, 
	satisfy $\varphi$ and have prescribed length $l$ and weight $\alpha$. This claim will follow from Lemma \ref{lemma:IntersectionSubdecompositions} 
	using the fact that $\treeAutomaton(\boldT,k\cdot z)$ has tree-zig-zag number $z$ and that $\treeAutomaton(\varphi,k,z,l,\alpha)$ is $z$-saturated. 
	At this point, the problem of counting subgraphs of $G$ satisfying these four properties boils down to the problem of 
	counting the number of unit decompositions accepted by $\treeAutomaton(\boldT,k\cdot z)\cap \treeAutomaton(\varphi,k,z,l,\alpha)$.  
	Since the latter automaton is deterministic, this counting process can be carried in polynomial time. 
	This step will be carried in Theorem \ref{theorem:CountingSizeWeight} via an application of Theorem \ref{theorem:CountingSubgraphs}.
\end{enumerate}

The remainder of this paper is organized as follows. Next, in Section \ref{section:DirectedTreewidth} we recall 
the definition of directed treewidth \cite{JohnsonRobertsonSeymourThomas2001}. Subsequently, in 
Section \ref{section:TreeZigZagNumber}, we introduce the {\em tree-zig-zag} number of a digraph, a new directed width 
measure. 
In Section \ref{section:ComparisonDirectedTreewidth}, we show that the tree-zig-zag number of a 
digraph is at most a constant times its directed treewidth. In Section \ref{section:TreeAutomata} we recall 
some of the main definitions of tree-automata theory.  
In Section \ref{section:TreeSliceLanguage} we introduce tree slice languages and 
slice tree-automata. In Section \ref{section:zSaturationAndCounting} we introduce the notion of $z$-saturation 
and state a slice theoretic metatheorem (Theorem \ref{theorem:CountingSubgraphs}).
 In Section \ref{section:MSOandTreeSliceLanguages}
we will show that for any \msotwo sentence $\varphi$ and any $k,z\in \N$ one can construct a $z$-saturated slice automaton
$\treeAutomaton(\varphi,k,z)$ whose graph language consists of all digraphs that are the union of $k$ directed paths and satisfy $\varphi$. 
In $\mbox{Section \ref{section:ProofOfMainTheorem}}$ we will show how to restrict $\treeAutomaton(\varphi,k,z)$ into an automaton
$\treeAutomaton(\varphi,k,z,l,\alpha)$ whose graph language consists precisely of the digraphs that, at the same time, are the union of 
$k$ paths, satisfy $\varphi$, have $l$ vertices and weight $\alpha$. In the same section we prove our main theorem, 
Theorem \ref{theorem:MainTheoremDirectedTreewidth}. Finally, in Section \ref{section:Conclusion} we make 
some final remarks and discuss some future directions.

\section{Directed Treewidth}
\label{section:DirectedTreewidth}

In this section we recall the definitions of arboreal decomposition and {\em directed} treewidth. For a matter of uniformity with other notions of 
tree decompositions encountered in this paper, our notation slightly differs from the notation used in \cite{JohnsonRobertsonSeymourThomas2001}. 
Let $\{1,...,r\}^*$ denote the set of all strings over $\{1,...,r\}$ and let $\emptystring$ denote 
the empty string. A subset
$N\subseteq \{1,...,r\}^*$ is prefix closed if for every $p\in \{1,...,r\}^*$ and every $j\in \{1,...,r\}$, $pj\in N$ implies that $p\in N$.
We note that the empty string $\emptystring$ is an element of any prefix closed subset of $\{1,...,r\}^*$.
We say that $N\subseteq \{1,...,r\}^*$ is well numbered if for every $p\in \{1,...,r\}^*$ and every $j\in \{1,...,r\}$, 
the presence of $pj$ in $N$ implies that $p1,...,p(j-1)$ also belong to $N$.
An $r$-ary tree is a pair $T=(N,F)$ whose set of nodes $N$ is a finite prefix 
closed, well numbered subset of $\{1,...,r\}^*$, and whose set of arcs $F$ is defined as $F = \{(p,pj)\;|\; p,pj\in N, j \in \{1,...,r\}\}$.
Observe that by our definition, the root of an $r$-ary tree is the empty string $\emptystring$. 
A binary tree is an $r$-ary tree in which $r=2$. 
If $pj\in N$, then we say that 
$pj$ is a child of $p$, or interchangeably, that $p$ is the parent of $pj$.  A {\em leaf} is a node $p\in N$ without children. 
 If $pu\in N$ for $u\in \{1,...,r\}^*$, then we say that 
$pu$ is a descendant of $p$. For a node $p\in N$ we let $N(p) =\{pu\in N\; |\; u\in \{1,...,r\}^*\}$
denote the set of all descendants of $p$. Note that $p$ is a descendant of itself and therefore $p\in N(p)$.

Let $G=(V,E)$ be a digraph and let $Z$ and $K$ be two disjoint subsets of vertices of $G$. 
We say that $K$ is $Z$-normal if there is no directed walk in $V\backslash Z$ with first and 
last vertex in $K$ that uses a vertex of $V\backslash (Z\cup K)$. In other words, $K$ is $Z$-normal if every walk which 
starts and ends in $K$ is either wholly contained in $K$ or uses a vertex of $Z$.

An arboreal decomposition of a digraph $G=(V,E)$ is a four-tuple $\arborealdecomposition = (N,F,W,Z)$ 
where $(N,F)$ is an $r$-ary tree for some $r\in \N$,\; $W:N\rightarrow 2^{V}$ is a function that associates 
with each node $p\in N$ a {\em non-empty} set of vertices $W(p)\subseteq V$, and $Z:F\rightarrow 2^{V}$ is a function that 
associates with each arc $(p,pj)\in F$, a set of vertices $Z(p,pj)$. In the sequel, we may refer to the 
sets $W(p)$ as the {\em bags} of $\arborealdecomposition$. For a node $p\in N$ we let 
$V(p,\arborealdecomposition)  = \bigcup_{u\in N(p)} W(u)$ denote the set of all vertices of $G$ that belong 
to some bag associated with a descendant of $p$. 
The functions $W$ and $Z$ satisfy the following two properties. 

\begin{enumerate}[1)]
	\itemsep0.2em	
	\item \label{item:ArborealOne} $\{W(p)\;|\; p\in N\}$ is a partition of $V$ into non-empty sets.
	\item \label{item:ArborealTwo} For each $(p,pj)\in F$, the set $V(pj,\arborealdecomposition)$ is $Z(p,pj)$-normal. 
\end{enumerate}

Intuitively, Condition \ref{item:ArborealTwo} says that for each $(p,pj)\in F$, the set of all vertices of 
$G$ that belong to bags associated with descendants of $pj$ is $Z(p,pj)$ normal.
If $e$ is an arc in $F$ and $p$ is a node in $N$ then we write $e\sim p$ to indicate that $p$
is incident with $e$. In other words, $e\sim p$ means that either $e=(p,p')$ or $e=(p',p)$ for some $p'\in N$. 
The width $w(\arborealdecomposition)$ of the arboreal decomposition $\arborealdecomposition$ is the least 
integer $w$ such that for every node $p\in N$, $|W(p) \cup \bigcup_{e\sim p} Z(e)| \leq w+1$. 
The {\em directed treewidth} of $G$ is the 
least integer $w$ such that there is an arboreal decomposition of $G$ of width $w$. An arboreal 
decomposition $\arborealdecomposition=(N,F,W,Z)$ of a digraph $G$ is {\em good} if additionally the following condition is satisfied. 
\begin{enumerate}[3)]
	\item For each position $p\in N$, if $pi\in N$ and $pj\in N$ with $i<j$, then there is no edge in $G$ with source in $V(pj,\arborealdecomposition)$
		and target in $V(pi,\arborealdecomposition)$. 
\end{enumerate}

A haven of order $w$ in a digraph $G=(V,E)$ is a function $\beta$ that assigns to each set $Z\subseteq V$ with $|Z|<w$, 
the vertex-set of a strongly connected component of the digraph $G\backslash Z$, in such a way 
that for each two sets of vertices $Z,Z' \subseteq V$, if  $Z'\subseteq Z$ with $|Z|< w$, then $\beta(Z)\subseteq \beta(Z')$. 
It can be shown that if $G$ has a haven of order $w$ then its directed treewidth 
is at least $w-1$. Theorem $3.3$ of reference \cite{JohnsonRobertsonSeymourThomas2001} states that a digraph $G$ either has 
directed treewidth at most $3w-2$, or it has a haven of order $w$. The proof of this theorem is algorithmic. The algorithm 
either constructs a good arboreal decomposition of $G$ of width $3w-2$ or declares that $G$ has a haven of order $w$. Since 
a haven of order $w$ is a certificate that the directed treewidth of $G$ is at least $w-1$, one can be sure that 
if the directed treewidth of $G$ is at most $w-2$, a good arboreal decomposition of $G$ of width at most $3w-2$ will be found.
Equivalently, if $G$ has directed treewidth at most $w$ then one can always find an arboreal decomposition for $G$ of width 
at most $3w+4$. 

\begin{theorem}[\cite{JohnsonRobertsonSeymourThomas2001}]
\label{theorem:ConstructionGoodArborealDecomposition}
Let $G$ be a digraph of directed treewidth at most $w$. One can construct in time $|G|^{O(w)}$ a good arboreal 
decomposition of $G$ of width at most $3w+4$. 
\end{theorem}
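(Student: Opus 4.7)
The plan is to invoke the algorithmic content of Theorem 3.3 of \cite{JohnsonRobertsonSeymourThomas2001}, which was stated informally in the paragraph immediately preceding the theorem, and then fit together the bookkeeping so that the parameters land at $3w+4$. The key observation is that Theorem 3.3 is really a dichotomy: for any target order $w'$, its algorithm either returns a good arboreal decomposition of width at most $3w'-2$ or exhibits a haven of order $w'$. I would apply this procedure with the parameter $w' = w+2$.

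First I would formalize the simple fact, already noted in the preceding discussion, that a haven of order $w'$ in $G$ is an obstruction witnessing that the directed treewidth of $G$ is at least $w'-1$. This is standard and appears explicitly in \cite{JohnsonRobertsonSeymourThomas2001}; I would just cite it. Given that $G$ has directed treewidth at most $w$, a haven of order $w+2$ would force the directed treewidth of $G$ to be at least $w+1$, a contradiction. Hence the dichotomy must resolve on the constructive side, and the algorithm necessarily outputs a good arboreal decomposition $\arborealdecomposition$ of $G$ of width at most $3(w+2)-2 = 3w+4$, which is the bound claimed.

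For the complexity part, I would cite that the algorithm underlying Theorem 3.3 of \cite{JohnsonRobertsonSeymourThomas2001} runs in time $|G|^{O(w')}$, which with our choice $w'=w+2$ is $|G|^{O(w)}$. No further optimization is needed, since the theorem only asserts a polynomial bound of this shape. The main ``obstacle'' here is really just that one must be careful about the off-by-a-constant shift between the parameter fed to the algorithm and the width guarantee it returns; the factor-of-three blow-up together with the haven-detection threshold forces the $+4$ in the final bound, and this is the only nontrivial bookkeeping step. Once this shift is handled, the proof is a direct corollary of the cited algorithmic dichotomy.
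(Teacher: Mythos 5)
Your argument is exactly the one the paper itself gives in the paragraph preceding the theorem: run the algorithmic dichotomy of Theorem 3.3 of \cite{JohnsonRobertsonSeymourThomas2001} with parameter $w'=w+2$, rule out the haven branch because a haven of order $w+2$ would certify directed treewidth at least $w+1$, and conclude with width $3(w+2)-2=3w+4$ in time $|G|^{O(w)}$. The bookkeeping is handled correctly, so the proposal is correct and matches the paper's approach.
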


\section{Olive-Tree Decompositions and the Tree-Zig-Zag Number of a Digraph}
\label{section:TreeZigZagNumber}

In this section we will introduce the tree-zig-zag number of a digraph, a new directed width measure. 
Next, in Section \ref{section:TreeZigZagNumber} we will show that the tree-zig-zag number of a digraph 
is at most a constant times its directed treewidth. 

\begin{definition} 
\label{definition:OliveTreeDecomposition}
An {\em olive-tree decomposition} of a digraph $G=(V,E)$ is a triple $\olivetreedecomposition = (N,F,\mathfrak{m})$ where
$(N,F)$ is a binary tree and $\mathfrak{m}:V\rightarrow N$ is an injective map from vertices of $G$ to nodes of $T$. 
\end{definition}  

The notion of olive-tree decomposition is similar to the notion of carving decomposition introduced 
by Seymour and Thomas in \cite{SeymourThomas1994}. The only difference is that in a carving decomposition, as defined in \cite{SeymourThomas1994}, the vertices
of the digraph $G$ are bijectively mapped to the leaves of the tree, while in our definition these vertices can also be mapped to 
the internal nodes of the tree, and the mapping is required to be injective, but not necessarily bijective. 
If $\olivetreedecomposition = (N,F,\mathfrak{m})$ is an olive-tree decomposition of a digraph $G=(V,E)$ 
then we let $V(p,\olivetreedecomposition)= \mathfrak{m}^{-1}(N(p))$ denote the set of all vertices of $G$ that are mapped to some
descendant of $p$.
If $V_1,V_2\subseteq V$ are two subsets of vertices of $G$ with $V_1\cap V_2 =\emptyset$, then 
we let $E(V_1,V_2)$ denote the set of all edges of $G$ with one endpoint in $V_1$ and another endpoint in $V_2$. 
The width $w(p)$ of a node $p\in N$ is defined as $w(p)=|E(V(p,\olivetreedecomposition),V\backslash V(p,\olivetreedecomposition))|$. The width $w(\olivetreedecomposition)$
of $\olivetreedecomposition$ is defined as the maximum width of a node in $N$. More precisely, 
$w(\olivetreedecomposition) = \max \{w(p)\;|\; p\in N\}$. 
We observe that an olive-tree decomposition of a digraph $G=(V,E)$ has width at most $|E|$. 
In this work we will not be interested in olive-tree decompositions of minimum width. 
Rather, we will be concerned with decompositions having small {\em tree-zig-zag number}, a digraph width measure
that will be defined below.

Let $\olivetreedecomposition= (N,F,\mathfrak{m})$ be
an olive-tree decomposition of a digraph $G=(V,E)$, $H=(V',E')$ be a subgraph of $G$,
and $\mathfrak{m}|_{V'}:V'\rightarrow N$ be the restriction of $\mathfrak{m}$ to $V'$.
We say that the 
triple $\olivetreedecomposition' = (N,F,\mathfrak{m}|_{V'})$ is the olive-tree decomposition of $H$ induced by $\olivetreedecomposition$.
A simple path in a digraph $G$ is an alternated sequence $\path=v_1e_1v_2e_2....v_{n-1}e_{n-1}v_n$ of vertices 
and edges of $G$ such that for each $i\in \{1,...,n-1\}$, the edge $e_i$ has $v_{i}$ as source and $v_{i+1}$ as target, and such that $v_i\neq v_j$ for 
each $i,j$ with $i\neq j$. 
We view $\path$ as a subgraph of $G$ by setting $\path=(V_{\path},E_{\path})$ where 
$V_{\path}=\{v_1,v_2,...,v_n\}$ and $E_{\path}=\{e_1,e_2,...,e_{n-1}\}$. We let 
$$w(\olivetreedecomposition,\path) = \max_{u\in N} |\; E_{\path} \cap E(V(u,\olivetreedecomposition), V\backslash V(u,\olivetreedecomposition))\;|$$
be the width of the path $\path$ along the olive-tree decomposition $\olivetreedecomposition$. Intuitively, 
$w(\olivetreedecomposition,\path)$  quantifies the amount of times the path $\path$ enters or leaves the set $V(u,\olivetreedecomposition)$ for 
each $u\in N$.

\begin{definition}[Tree-Zig-Zag Number]
Let $G=(V,E)$ be a digraph and $\olivetreedecomposition=(N,F,\mathfrak{m})$ be an olive-tree decomposition of $G$. The 
tree-zig-zag number of $\olivetreedecomposition$ is defined as 
$$
\treezigzagnumber(\olivetreedecomposition) =  \max\{w(\olivetreedecomposition, \path) \; | \;  \path \mbox{ is a simple path in $G$} \}. 
$$
The tree-zig-zag number of $G$ is defined as the minimum tree-zig-zag number of an olive-tree decomposition of $G$:
$$\treezigzagnumber(G) = \min\{ \treezigzagnumber(\olivetreedecomposition) \; |\; \olivetreedecomposition \mbox{ is an olive-tree decomposition of $G$}\}.$$
\end{definition}

In Equation \ref{equation:ComparisonWidthMeasures} below we compare the tree-zig-zag number of a digraph with several 
other directed width measures. In \cite{deOliveiraOliveira2013IPEC} we defined the zig-zag number $\zigzagnumber(G)$ 
of a digraph $G$ as a measure that quantifies the amount of times a directed path is allowed enter 
or leave any initial segment of a total ordering of the vertices of $G$. The tree-zig-zag number 
$\treezigzagnumber(G)$ may be regarded as an analog of $\zigzagnumber(G)$ which quantifies the amount 
of times a directed path is allowed to enter or leave any sub-tree of an olive-tree decomposition of $G$.
If $G$ is a digraph, we write $\mathit{dtw}(G)$ for its directed treewidth \cite{JohnsonRobertsonSeymourThomas2001}, 
$\mathit{Dw}(G)$ for 
its {\em $D$-width} \cite{Gruber2012}, $\mathit{dagw}(G)$ for its DAG-width \cite{BerwangerDawarHunterKreutzerObdrzalek2012}, 
$\mathit{dpw}(G)$ for its directed path-width \cite{Barat2006},  $\mathit{Kelw}(G)$ for its Kelly-width 
\cite{GanianHlinenyKneisLangerObdrzRossmanith2014}, 
$\mathit{ddp}(G)$ for its DAG-depth \cite{GanianHlinenyKneisLangerObdrzRossmanith2014}, 
$\mathit{Kw}(G)$ for its K-width \cite{GanianHlinenyKneisLangerObdrzRossmanith2014}, 
$s(G)$ for its  weak separator number \cite{Gruber2012} and $\mathit{cr}(G)$ for its cycle rank \cite{Gruber2012}. 
A dashed arrow $A\dashrightarrow B$ from measure $A$ to measure $B$ indicates that $A$ is 
at least as expressive as $B$. More precisely, there exist constants $\alpha_1,\alpha_2\in \N$ such that 
for every digraph $G$, $A(G)\leq \alpha_1\cdot B(G) + \alpha_2$. A full arrow $A\rightarrow B$ indicates that 
the measure $A$ is strictly more expressive than measure $B$. More precisely, $A$ is at least as expressive as 
$B$, and there exists an infinite class of digraphs in which $A$ is bounded by a constant, but 
$B$ is unbounded. 

\begin{equation}
\label{equation:ComparisonWidthMeasures}
\begin{diagram}
\node[3]{\mathit{zn}(G)} 
	\arrow{sse,t,1}{\mbox{\tiny{\cite{deOliveiraOliveira2013IPEC}}}} 
%%%%%%%%%%%%%%%%%%%%%%%%%%%
\\
\node[3]{\mathit{Kelw}(G)} 
	\arrow{se,t,1}{\mbox{\tiny{\cite{GanianHlinenyKneisLangerObdrzRossmanith2014}}}} 
\node[2]{\mathit{Kw}(G)} 
\\
%%%%%%%%%%%%%%%%%%%%%%%%%%%% 
\node{\mathbold{\treezigzagnumber(G)}} 
	\arrow{e,t,..}{\mathit{Th.}\;\ref{theorem:ComparisonWithOtherMeasures}} 
	\arrow[2]{ne,..}
\node{\mathit{dtw}(G)} 
	\arrow{ne,t}{\mbox{\tiny{\cite{HunterKreutzer2008}}}} 
	\arrow{e,t}{\mbox{\tiny{\cite{BerwangerDawarHunterKreutzerObdrzalek2012}}}}
	\arrow{se,t}{\mbox{\tiny{\cite{AmiriKaiserKreutzerRabinovichSiebertz2015}}}} 
\node{\mathit{dagw}(G)} 
	\arrow{e,t}{\mbox{\tiny{\cite{BerwangerDawarHunterKreutzerObdrzalek2012}}}}
\node{\mathit{dpw}(G)}
	\arrow{e,t}{\mbox{\tiny{\cite{Gruber2012}}}} 
\node{\mathit{cr}(G)} 
	\arrow{n,r}{\mbox{\tiny{\cite{GanianHlinenyKneisMeisterObdrzalekRossmanithSikdar2010}}}} 
	\arrow{s,r}{\mbox{\tiny{\cite{GanianHlinenyKneisMeisterObdrzalekRossmanithSikdar2010}}}} 
%%%%%%%%%%%%%%%%%%%%%%%%%%%
\\
\node{\frac{\mathit{cr}(G)}{\log |G|}}
	\arrow{e,t,..}{\mbox{\tiny{\cite{Gruber2012}}}} 
\node{s(G)} 
	\arrow{e,t,..}{\mbox{\tiny{\cite{Gruber2012}}}}  
\node{\mathit{Dw}(G)} \arrow{ne,t}{\mbox{\tiny{\cite{Safari2005,deOliveiraOliveira2013IPEC}}}} 
\node[2]{\mathit{ddp}(G)}
\end{diagram}
\end{equation}

%\begin{equation}
%\begin{diagram}
%\node{\mathbold{\treezigzagnumber(G)}} \arrow{e}  \node{\mathit{dtw}(G)} \arrow{e,t}{\cite{BerwangerDawarHunterKreutzerObdrzalek2012}} 
%\node{\mathit{dagw}(G)} \arrow{e,t}{\mbox{\tiny{\cite{BerwangerDawarHunterKreutzerObdrzalek2012}}}} \node{\mathit{dpw}(G)}
%\arrow{e,t}{\mbox{\tiny{\cite{Gruber2012}}}} \node{\mathit{cr}(G)} \arrow{ne,t}{\mbox{\tiny{\cite{GanianHlinenyKneisMeisterObdrzalekRossmanithSikdar2010}}}} \arrow{se,b}{\mbox{\tiny{\cite{GanianHlinenyKneisMeisterObdrzalekRossmanithSikdar2010}}}} \node{\mathit{Kw}(G)} \node{\mathit{ddp}(G)}
%\end{diagram}
%\end{equation}
%
%

The numbers above each arrow $A\rightarrow B$ ($A\dashrightarrow B$) in Equation \ref{equation:ComparisonWidthMeasures} 
refer to the works in which the corresponding relation between the measures $A$ and $B$ was established. 
All relations listed above can be inferred from the literature, except for the relation
$\mathbold{\treezigzagnumber}(G)\dashrightarrow \mathit{zn}(G)$, which is immediate, and the relation 
$\mathbold{\treezigzagnumber(G)} \dashrightarrow dtw(G)$, which will be formally stated in 
Theorem \ref{theorem:ComparisonWithOtherMeasures} below and proved in Section \ref{section:ComparisonDirectedTreewidth}.
The fact that DAG-width, Kelly-width and D-width are strictly more expressive than directed pathwidth follows 
from the fact that the width of the complete undirected\footnote{In this setting each undirected edge is represented 
by two directed edges in opposite directions.} binary tree on $n$ leaves is bounded with respect to these three
measures, but unbounded ($\Omega(\log n)$) with respect to directed pathwidth \cite{deOliveiraOliveira2013IPEC}.

It is worth noting that the precise statement of our main theorem (Theorem \ref{theorem:MainTheoremDirectedTreewidth})
holds if the parameter $w$ corresponds to the width of the digraph $G$ with respect to any measure reachable from 
$\mathit{dtw}(G)$ in Equation \ref{equation:ComparisonWidthMeasures}. More Precisely, 
Theorem \ref{theorem:MainTheoremDirectedTreewidth} also holds when $w$ is the Kelly width, DAG-width, D-width, directed pathwidth,
cycle rank, K-width or DAG-depth of $G$. 
{Theorem \ref{theorem:MainTheoremDirectedTreewidth}} can also be applied if the parameter $w$ is the 
tree-zig-zag number of $G$. However, in this particular case, an explicit olive-tree decomposition of $G$ of 
tree-zig-zag number $O(w)$ must be given in the input. For directed tree-width and less expressive measures, 
such an olive-tree decomposition of width $O(w)$ can be automatically constructed in time $|G|^{O(w)}$. 
This construction will be carried in Section \ref{section:ComparisonDirectedTreewidth} together with the proof of 
Theorem \ref{theorem:ComparisonWithOtherMeasures}. 
%by requiring that an olive-tree decomposition of $G$ of zig-zag number $O(w)$ be given in the input. For directed 
%treewidth and for less expressive measures, such olive-tree decomposition of width $O(w)$ can be automatically constructed 
%in time $|G|^{O(w)}$. 
%

\begin{theorem} 
\label{theorem:ComparisonWithOtherMeasures}
Let $G$ be a digraph, $\treezigzagnumber(G)$ be its tree-zig-zag number and $dtw(G)$ be its directed treewidth. 
Then $\treezigzagnumber(G)\leq 9\cdot dtw(G) + 18$. 
\end{theorem}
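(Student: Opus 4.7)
My plan is to derive the bound by converting a good arboreal decomposition of $G$ into an olive-tree decomposition whose tree-zig-zag number is controlled by the arboreal separators. First I would apply Theorem~\ref{theorem:ConstructionGoodArborealDecomposition} to obtain a good arboreal decomposition $\arborealdecomposition = (N, F, W, Z)$ of $G$ of width $w' \le 3w + 4$, so that every $p \in N$ satisfies $|W(p) \cup \bigcup_{e \sim p} Z(e)| \le w' + 1$. The goal is to build an olive-tree decomposition $\olivetreedecomposition = (N', F', \mathfrak{m})$ in which every subtree-cut $V(\tilde{q}, \olivetreedecomposition)$ is normal with respect to a separator of size at most $w' + 1$ inherited from the arboreal structure.

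\textbf{Construction.} For each $p \in N$ I fix an enumeration $W(p) = \{v_1, \ldots, v_{k_p}\}$ and replace $p$ in the olive tree by a path $\tilde{p}_1 \to \cdots \to \tilde{p}_{k_p}$ with $\mathfrak{m}(v_i) := \tilde{p}_i$. The children $c_1, \ldots, c_r$ of $p$ in $N$, listed in the order supplied by the good arboreal property, are then attached below $\tilde{p}_{k_p}$ as a binary caterpillar of auxiliary unlabeled nodes $\tilde{q}_1, \ldots, \tilde{q}_{r-1}$: $\tilde{p}_{k_p}$ is the parent of $\tilde{q}_1$, and each $\tilde{q}_i$ has the recursively built olive-subtree root $\tilde{c}_i$ of $c_i$ together with $\tilde{q}_{i+1}$ as its two children (with $\tilde{q}_{r-1}$ carrying $\tilde{c}_{r-1}, \tilde{c}_r$). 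The resulting tree is binary, $\mathfrak{m}$ is injective, and a straightforward induction on $\arborealdecomposition$ yields $V(\tilde{p}_i, \olivetreedecomposition) = \{v_i, \ldots, v_{k_p}\} \cup \bigcup_c V(c, \arborealdecomposition)$ and $V(\tilde{q}_i, \olivetreedecomposition) = \bigcup_{j \ge i} V(c_j, \arborealdecomposition)$.

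\textbf{Normality and the zig-zag bound.} The heart of the proof is to exhibit, for every olive-tree node $\tilde{q}$ associated with an arboreal node $p$, a separator $Y_{\tilde{q}} \subseteq W(p) \cup \bigcup_{e \sim p} Z(e)$ of size at most $w' + 1$ such that $V(\tilde{q}, \olivetreedecomposition)$ is $Y_{\tilde{q}}$-normal in $G$. I would take $Y_{\tilde{p}_i} := Z(e_p) \cup \{v_1, \ldots, v_{i-1}\}$ for subdivided-bag nodes and $Y_{\tilde{q}_i} := W(p) \cup Z(e_p) \cup \bigcup_{j < i} Z(e_{(p, c_j)})$ for caterpillar nodes, where $e_p$ denotes the parent arc of $p$ in $N$. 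Normality is checked by contradiction: a walk in $V \setminus Y_{\tilde{q}}$ from $V(\tilde{q}, \olivetreedecomposition)$ back to itself through an excluded vertex either leaves $V(p, \arborealdecomposition)$, violating its $Z(e_p)$-normality since the walk avoids $Z(e_p) \subseteq Y_{\tilde{q}}$, or stays inside $V(p) \setminus W(p) = \bigcup_c V(c, \arborealdecomposition)$, in which case the good arboreal condition forces all transitions between sibling subtrees to be monotonically forward in the child ordering, forbidding any return to $V(\tilde{q})$ after a detour into some $V(c_j, \arborealdecomposition)$ with $j < i$. Granted normality, a standard accounting --- each maximal outside excursion of a simple path back to $V(\tilde{q}, \olivetreedecomposition)$ must consume a distinct vertex of $Y_{\tilde{q}}$, and the simple path visits each such vertex at most once --- bounds the number of boundary crossings per simple path by $3|Y_{\tilde{q}}| + 3$, giving $\treezigzagnumber(\olivetreedecomposition) \le 3(w'+1) + 3 = 3w' + 6 \le 9w + 18$.

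The main obstacle I foresee is the normality step for the caterpillar nodes $\tilde{q}_i$: the argument must simultaneously rule out ``long'' excursions that escape $V(p, \arborealdecomposition)$ entirely (via the arboreal $Z$-normality) and ``short'' detours that dip into a forbidden sibling subtree $V(c_j, \arborealdecomposition)$ with $j < i$ and return (via the good-arboreal forward-edge property), all while ensuring that the selected separator $Y_{\tilde{q}_i}$ remains inside $W(p) \cup \bigcup_{e \sim p} Z(e)$ --- this containment is what ultimately drives the linear dependence of $\treezigzagnumber(G)$ on $w$.
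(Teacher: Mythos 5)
Your proposal is correct and follows essentially the same route as the paper: obtain a good arboreal decomposition of width $w'\le 3w+4$, replace each bag by a path of singleton nodes followed by a binary caterpillar hosting the children, and certify each subtree-cut as normal with respect to a separator of size at most $w'+1$ contained in $W(p)\cup\bigcup_{e\sim p}Z(e)$. The only cosmetic difference is that for the caterpillar nodes the paper invokes an auxiliary proposition about crossing a set $K_2$ when $K_1\cup K_2$ is normal and no edge goes from $K_2$ to $K_1$, whereas you establish normality of $V(\tilde{q}_i,\olivetreedecomposition)$ directly from the forward-edge property of good decompositions; both yield the same $3w'+6\le 9w+18$ bound.
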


\section{Tree-Zig-Zag Number vs Directed Treewidth}
\label{section:ComparisonDirectedTreewidth}

In this section we will prove Theorem \ref{theorem:ComparisonWithOtherMeasures}. 
First we will state a couple of propositions concerning $Z$-normal sets.

\begin{proposition}
\label{proposition:NormalComplement}
Let $G=(V,E)$ be a digraph and $K,Z\subseteq V$ be such that $K$ is $Z$-normal. Then for each subset $X\subseteq K$, 
$K\backslash X$ is $Z\cup X$-normal. 
\end{proposition}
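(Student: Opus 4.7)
The plan is to argue by contradiction, directly from the definition of $Z$-normality. Suppose, toward a contradiction, that $K \setminus X$ is \emph{not} $(Z \cup X)$-normal. Then there exists a directed walk $W$ in $V \setminus (Z \cup X)$ whose first and last vertices lie in $K \setminus X$ and which passes through some vertex $v$ lying outside $(Z \cup X) \cup (K \setminus X)$.

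The first key observation is a set-theoretic simplification: since $X \subseteq K$, we have
\[
(Z \cup X) \cup (K \setminus X) \;=\; Z \cup K,
\]
so the ``forbidden intermediate vertex'' $v$ belongs to $V \setminus (Z \cup K)$. The second observation is that the walk $W$ also avoids $Z$, because it already avoids the larger set $Z \cup X$. Combining these, $W$ is a directed walk in $V \setminus Z$ whose first and last vertices lie in $K \setminus X \subseteq K$ and which uses a vertex $v$ of $V \setminus (Z \cup K)$. But this is exactly a witness that $K$ is \emph{not} $Z$-normal, contradicting the hypothesis.

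Hence no such walk $W$ exists, and $K \setminus X$ is $(Z \cup X)$-normal. There is no real obstacle here; the statement is essentially a bookkeeping consequence of the definition, and the only subtlety is the computation $(Z \cup X) \cup (K \setminus X) = Z \cup K$, which relies crucially on the assumption $X \subseteq K$. If this containment were dropped, the argument would fail because the forbidden intermediate vertex set on the $K \setminus X$ side would no longer coincide with $V \setminus (Z \cup K)$.
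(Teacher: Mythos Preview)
Your proof is correct and follows essentially the same approach as the paper: argue by contradiction, unfold the definition of normality, and use the set identity $(Z \cup X) \cup (K \setminus X) = Z \cup K$ (valid since $X \subseteq K$) to exhibit a walk violating the $Z$-normality of $K$. The paper's version is terser but the content is identical.
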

\begin{proof}
The proof is by contradiction. 
Assume that there is an $X\subseteq K$ such that $K\backslash X$ is not $Z\cup X$-normal. Then there is a walk in $G\backslash (Z\cup X)$ that 
starts and ends in $K\backslash X$, but that uses a vertex from $V\backslash ((Z\cup X)\cup (K\backslash X))= V\backslash (Z\cup K)$. This contradicts the 
assumption that $K$ is $Z$-normal.  
\end{proof}

If $G=(V,E)$ is a digraph, $Z$ is a subset of $V$ and $\path = v_1e_1v_2....v_{n-1}e_{n-1}v_n = (V_{\path},E_{\path})$ is a path on $G$ then we say that 
$\path$ is internally disjoint from $Z$ if $Z\cap V_{\path} \subseteq \{v_1,v_n\}$. In other words, $\path$ is internally disjoint
from $Z$ if none of its internal vertices belongs to $Z$. The next proposition says that if $K$ is a $Z$-normal subset of $V$ and 
$\path$ is a path that is internally disjoint from $Z$, then $\path$ can enter or leave $K$ at most $2$ times.

\begin{proposition}
\label{proposition:InternallyDisjointPath}
Let $G=(V,E)$ be a digraph and $K,Z \subseteq V$ be subsets of $V$ such that $K$ is $Z$-normal. Let 
$\path = (V_{\path},E_{\path})$ be a path in $G$ that is internally disjoint from $Z$. Then $$|E_{\path} \cap E(K,V\backslash K)| \leq 2.$$
\end{proposition}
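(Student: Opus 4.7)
The plan is to argue by contradiction. Suppose $\path$ admits three distinct crossing edges $e_{i_1}, e_{i_2}, e_{i_3}$ with $i_1 < i_2 < i_3$ between $K$ and $V\backslash K$. Since each crossing edge flips the side (between $K$ and $V\backslash K$) on which consecutive vertices of $\path$ lie, these three edges partition the vertex sequence of $\path$ into four maximal segments of alternating side: either $K,\, V\backslash K,\, K,\, V\backslash K$ (when $v_1\in K$) or $V\backslash K,\, K,\, V\backslash K,\, K$ (when $v_1\in V\backslash K$).

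From this alternation I would extract a directed sub-walk of $\path$ whose two endpoints both lie in $K$ and whose interior touches $V\backslash K$. In the first case I take the sub-walk $v_{i_1}, v_{i_1+1},\ldots, v_{i_2+1}$: its endpoints $v_{i_1}$ and $v_{i_2+1}$ lie in $K$, while $v_{i_1+1}\in V\backslash K$ lies in its interior. In the second case I take instead the sub-walk $v_{i_1+1},\ldots,v_{i_3+1}$: its endpoints $v_{i_1+1}$ and $v_{i_3+1}$ lie in $K$, while $v_{i_2+1}\in V\backslash K$ lies in its interior.

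It then remains to verify that the chosen sub-walk lies entirely in $V\backslash Z$, which combined with the $Z$-normality of $K$ will yield the desired contradiction. Every strictly interior vertex of the sub-walk is an internal vertex of $\path$, and hence avoids $Z$ by the internal-disjointness hypothesis. The only potential issue arises when an endpoint of the sub-walk coincides with $v_1$ or $v_n$; however, by construction both endpoints of the sub-walk lie in $K$, and since $K$ and $Z$ are disjoint by hypothesis, they cannot belong to $Z$ either. This produces a directed walk in $V\backslash Z$ with both endpoints in $K$ and using a vertex of $V\backslash(Z\cup K)$, contradicting the $Z$-normality of $K$ and giving $|E_\path\cap E(K,V\backslash K)|\leq 2$. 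The one delicate point in the write-up will be precisely this bookkeeping of the corner cases $i_1=1$ and $i_3+1=n$, where a sub-walk endpoint coincides with an endpoint of $\path$; it is the disjointness of $K$ and $Z$ that makes the argument go through uniformly in both cases.
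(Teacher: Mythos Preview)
Your proposal is correct and follows essentially the same approach as the paper: assume three crossings, exploit the alternation of sides to locate a sub-walk with both endpoints in $K$ passing through $V\backslash K$, and contradict $Z$-normality. The paper's write-up is slightly terser (it writes $\path=\path_0 e_1 \path_1 e_2 \path_2 e_3 \path_3$ and picks $e_1\path_1 e_2$ or $e_2\path_2 e_3$ as the offending walk depending on which of $\path_1,\path_2$ lies in $K$), but the logic is the same; your explicit treatment of the endpoint corner cases via $K\cap Z=\emptyset$ is in fact something the paper leaves implicit.

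One small imprecision worth tightening in the write-up: you should state that $e_{i_1},e_{i_2},e_{i_3}$ are the \emph{first three} (or three consecutive) crossing edges along $\path$. As written, you only posit three crossing edges with $i_1<i_2<i_3$; if further crossings lie strictly between them, the ``four maximal segments of alternating side'' description fails and your membership claims about $v_{i_1+1}$, $v_{i_2+1}$, $v_{i_3+1}$ could be wrong. Taking the first three crossings (as the paper does) makes the intermediate segments genuinely monochromatic and the rest of your argument goes through verbatim.
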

\begin{proof}
The proof is by contradiction. Assume that $|E_{\path} \cap E(K,V\backslash K)| \geq 3$. Let $e_1$, $e_2$ and $e_3$ be the first three 
edges of $\path$ that have one endpoint in $K$ and other endpoint in $V\backslash K$. Then 
$\path = \path_0 e_1 \path_1  e_2 \path_2 e_3  \path_3$ where for each $i\in \{1,2,3\}$, the source of $e_i$ is 
the last vertex of $\path_{i-1}$ and the target of $e_i$ is the first vertex of $\path_i$. Since the path $\path$ is internally disjoint 
from $Z$, we have that either $\path_1$ is entirely contained in $K$ and $\path_2$ is entirely contained in $V\backslash (K\cup Z)$, or 
$\path_1$ is entirely contained in $V\backslash (K\cup Z)$ and $\path_2$ is entirely contained in $K$. Therefore
either $e_1\path_1e_2$ or $e_2\path_2e_3$ is a path that starts and finishes at $K$ and uses a vertex of $V\backslash (K\cup Z)$. 
This contradicts the assumption that $K$ is $Z$-normal. 
\end{proof}

The next proposition says that if $\path$ is a path of $G$, then the number of edges of $\path$ crossing 
a $Z$-normal set is upper bounded by  $2\cdot |Z|+2$. 

\begin{proposition}
\label{proposition:NumberOfEdges}
Let $G=(V,E)$ be a digraph and $K,Z \subseteq V$ be subsets of $V$ such that $K$ is $Z$-normal. Then for each
path $\path = (V_{\path},E_{\path})$ in $G$, $$|E_{\path} \cap E(K,V\backslash K)| \leq 2\cdot |Z|+2.$$ 
\end{proposition}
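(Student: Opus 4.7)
The plan is to reduce the statement to the already-established Proposition~\ref{proposition:InternallyDisjointPath} by decomposing the path $\path$ along its intersections with $Z$. The key observation is that Proposition~\ref{proposition:InternallyDisjointPath} controls only sub-paths whose internal vertices avoid $Z$, so I need to cut $\path$ precisely at the vertices of $Z$ that it visits.

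More concretely, I would let $u_1,u_2,\dots,u_m$ be the vertices of $V_\path \cap Z$ enumerated in the order in which they appear along $\path$, where $m \leq |Z|$ since $\path$ is simple and therefore visits each vertex of $Z$ at most once. Splitting $\path$ at each $u_i$ yields at most $m+1 \leq |Z|+1$ consecutive sub-paths $\path_0, \path_1, \dots, \path_m$ such that the $u_i$'s appear only as endpoints of these sub-paths; thus every $\path_i$ is internally disjoint from $Z$. The edge set $E_\path$ is the disjoint union of the edge sets of the $\path_i$'s, so
\[
|E_\path \cap E(K,V\setminus K)| \;=\; \sum_{i=0}^{m} |E_{\path_i} \cap E(K,V\setminus K)|.
\]

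By Proposition~\ref{proposition:InternallyDisjointPath} applied to each $\path_i$, every term in the sum is at most $2$. Hence the sum is bounded by $2(m+1) \leq 2(|Z|+1) = 2|Z|+2$, which gives the desired inequality. I do not foresee any genuine obstacle; the only minor subtlety is to handle correctly the corner cases in which $m=0$ (i.e.\ $\path$ avoids $Z$ entirely, in which case the bound $2 \leq 2|Z|+2$ holds trivially from Proposition~\ref{proposition:InternallyDisjointPath}) and in which the endpoints of $\path$ themselves belong to $Z$ (this causes no problem because splitting at an endpoint just produces a trivial empty sub-path at the beginning or end, contributing $0$ crossings).
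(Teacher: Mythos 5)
Your proposal is correct and follows essentially the same route as the paper's own proof: both split $\path$ at the vertices of $V_{\path}\cap Z$ into at most $|Z|+1$ sub-paths internally disjoint from $Z$, apply Proposition~\ref{proposition:InternallyDisjointPath} to each, and sum the bound $2$ per piece. Your explicit treatment of the corner cases is a minor addition that the paper omits.
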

\begin{proof}
Let $\path=(V_{\path},E_{\path})$ be a path in $G$ and assume that $V_{\path}\cap Z  = \{v_1,...,v_k\}$. We may assume without 
loss of generality that for each $i\in \{1,...,k-1\}$, $v_i$ occurs before $v_{i+1}$ in $\path$. In other words we may assume that 
$\path = \path_0 \cup \path_1 \cup ... \cup \path_k$ where $\path_0,\path_1,...,\path_k$ are internally disjoint paths in 
which for each $i\in \{1,...,k\}$, $v_i$ is the last vertex of $\path_{i-1}$ and the first vertex of $\path_i$.
We note that for each $i\in \{1,...,k\}$ the path $\path_i=(V_{\path_i},E_{\path_i})$ is internally disjoint from $Z$. Therefore, from Proposition
\ref{proposition:InternallyDisjointPath} we have that $|E_{\path_i} \cap E(K,V\backslash K)| \leq 2$. 
This implies that $|E_{\path} \cap E(K,V\backslash K)| \leq \sum_{i=0}^k |E_{\path_i}\cap E(K,V\backslash K)| \leq 2k+2 \leq 2|Z|+2$.  
\end{proof}

The next proposition says that if $G$ is a digraph and $K_1$ and $K_2$ are disjoint subsets of vertices of $G$ such that 
no edge has source in $K_2$ and target in $K_1$, then any path crossing $K_1\cup K_2$ at most $2$ times, crosses 
$K_2$ at most $3$ times. 

\begin{proposition}
\label{proposition:UnionSetsPath}
Let $G=(V,E)$ be a digraph and $K_1,K_2$ be subsets of vertices of $G$ such that 
$K_1\cap K_2 = \emptyset$ and such that there is no edge with source in $K_2$ and 
target in $K_1$. Let $\path=(V_{\path},E_{\path})$ be a path in $G$ such that 
$|E_{\path}\cap E(K_1\cup K_2, V\backslash (K_1\cup K_2))| \leq 2$. Then $|E_{\path}\cap E(K_2,V\backslash K_2)| \leq 3$.
\end{proposition}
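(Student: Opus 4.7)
The plan is to classify the edges of $\path$ crossing the $K_2$-boundary by the location of their two endpoints, and then exploit the absence of $K_2\to K_1$ edges. Set $X = V\setminus(K_1\cup K_2)$ and let $a$, $b$, $c$ denote the number of edges of $\path$ of the three respective types $K_1\to K_2$, $K_2\to X$, and $X\to K_2$. Since the hypothesis forbids $K_2\to K_1$ edges, these three classes exhaust $E_{\path}\cap E(K_2,V\setminus K_2)$, so the goal becomes $a+b+c\le 3$. Because every $b$-edge and every $c$-edge lies in $E(K_1\cup K_2,X)$, the hypothesis on $\path$ also immediately yields $b+c\le 2$.

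Next, I would consider the decomposition of $\path$ into its maximal subpaths whose vertices lie in $K_1\cup K_2$ (call these the $(K_1\cup K_2)$-blocks). A routine count shows that the number of such blocks is at most $2$: any two consecutive blocks are separated by a subpath in $X$ contributing one exiting and one entering edge in $E(K_1\cup K_2,X)$, and there are only two such edges available. Within a single block the forbidden edge type $K_2\to K_1$ forces the vertex sequence to have the shape $K_1^{\ast}K_2^{\ast}$, so each block contributes at most one internal $K_1\to K_2$ edge; therefore $a\le 2$.

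The main obstacle is that $a\le 2$ combined with $b+c\le 2$ only yields $a+b+c\le 4$, so the extremal case $a=2$ needs a sharper analysis. If $a=2$, then there must be exactly two $(K_1\cup K_2)$-blocks, each with a $K_1^{+}K_2^{+}$ shape; in particular $\path$ starts and ends in $K_1\cup K_2$ and has exactly one exit and one entry between $K_1\cup K_2$ and $X$. The exit of the first block leaves from a vertex of $K_2$ (the last vertex of its $K_2$-part), so it is a $K_2\to X$ edge and contributes to $b$; the entry of the second block lands at a vertex of $K_1$ (the first vertex of its $K_1$-part), so it is an $X\to K_1$ edge and contributes to neither $b$ nor $c$. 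Hence $b=1$ and $c=0$, giving $a+b+c=3$.

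Finally, I would close the argument by noting that in the complementary case $a\le 1$ the bound $b+c\le 2$ immediately yields $a+b+c\le 3$. Combining the two cases completes the proof.
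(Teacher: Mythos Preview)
Your proof is correct. The approach differs in organization from the paper's, though both rest on the same structural observation that a maximal subpath of $\path$ contained in $K_1\cup K_2$ has shape $K_1^{\ast}K_2^{\ast}$ (because $K_2\to K_1$ edges are forbidden).

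The paper proceeds by a direct case split on the value of $|E_{\path}\cap E(K_1\cup K_2, V\setminus(K_1\cup K_2))|\in\{0,1,2\}$, writing $\path$ as $\path_0$, $\path_0 e_1\path_1$, or $\path_0 e_1\path_1 e_2\path_2$ accordingly, and then checking in each sub-case which of the pieces $\path_i$, $e_j$ can cross the $K_2$-boundary. Your argument instead classifies the $K_2$-boundary edges of $\path$ by endpoint type into the three bins $a$ ($K_1\to K_2$), $b$ ($K_2\to X$), $c$ ($X\to K_2$), gets $b+c\le 2$ for free from the hypothesis, bounds $a$ by the number of $(K_1\cup K_2)$-blocks ($\le 2$), and isolates the single extremal case $a=2$ for a sharper look. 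This is a cleaner bookkeeping scheme: it makes transparent \emph{why} the bound is $3$ (at most two block-internal $K_1\to K_2$ edges, and at most one $K_2$--$X$ edge when both block slots are used), whereas the paper's case analysis reaches $3$ only after several sub-sub-cases. The paper's route, on the other hand, avoids having to argue separately that $a=2$ forces $\path$ to start and end in $K_1\cup K_2$; this follows implicitly from the segment decomposition. Both proofs are of comparable length and neither relies on any external lemma.
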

\begin{proof}
Let $\path=(V_{\path},E_{\path})$ be a path in $G$.
If $|E_{\path}\cap E(K_1\cup K_2, V\backslash (K_1\cup K_2))|=0$  then $\path$ is either entirely contained in $K_1\cup K_2$
or entirely contained in $V\backslash (K_1\cup K_2)$ and the proposition holds trivially. Now let $|E_{\path}\cap E(K_1\cup K_2, V\backslash (K_1\cup K_2))|=1$
and let $e_1$ be the unique edge with one endpoint in $K_1\cup K_2$ and another endpoint in $V\backslash (K_1\cup K_2)$. 
Then $\path = \path_0e_1\path_1$ where the source of $e_1$ is the last vertex of $\path_0$ and the target of $e_1$ is the first vertex 
of $\path_1$. Note that for each $i\in \{0,1\}$, either $\path_i$ is entirely contained in $K_1\cup K_2$ or entirely contained 
in $V\backslash (K_1 \cup K_2)$. Since there is no edge with source in $K_2$ and target in $K_1$, we have that $\path_0$ and 
$\path_1$ can each cross $K_2$ at most one time. In other words, $|E_{\path_i}\cap E(K_2,V\backslash K_2)| \leq 1$. 
Therefore $\path_0$, $e_1$ and $\path_1$ together cross $K_2$ at most three times and the proposition holds in this case. 
Finally, let $|E_{\path}\cap E(K_1\cup K_2, V\backslash (K_1\cup K_2))|=2$. 
Let $e_1$ and $e_2$ be the only edges of $\path$ with one endpoint in $K_1\cup K_2$ and the other endpoint in $V\backslash (K_1\cup K_2)$, 
and assume that $e_1$ is visited before $e_2$. Then there are paths $\path_1,\path_2,\path_3$ such that $\path = \path_0e_1\path_1e_2\path_2$
and for $i\in \{1,2\}$, the source of $e_i$ is the last vertex of $\path_{i-1}$ and the target of $e_i$ is the source of $\path_i$. 
Note that for $i\in \{0,1,2\}$, $\path_i$ is either entirely contained in $K_1\cup K_2$ or entirely contained in $V\backslash (K_1\cup K_2)$. Note also 
that each $\path_i$ crosses $K_2$ at most one time, since there is no edge with source in $K_2$ and target in $K_1$. 
This already implies that $\path$ can cross $K_2$ at most $5$ times. We claim that with further analysis it can be shown that the number of 
crossings is at most $3$, which is optimal. The analysis is as follows. If the source of $e_1$ belongs to $V\backslash (K_1\cup K_2)$, then the 
target of $e_2$ is also in $V\backslash (K_1\cup K_2)$. In this case 
both $\path_0$ and $\path_2$ are entirely contained in $V\backslash (K_1\cup K_2)$, and therefore only $e_1,e_2$ and $\path_1$ have 
the possibility of crossing $K_2$. If the source of $e_1$ belongs 
to $K_1\cup K_2$, then the target of $e_2$ also belongs to $K_1\cup K_2$. This implies that $\path_1$ is entirely contained in 
$V\backslash (K_1\cup K_2)$. In this situation there are two sub-cases to be analysed. 
If the target of $e_2$ belongs to $K_2$ then $\path_2$ is entirely contained in $K_2$ and only $e_1,e_2$ and $\path_0$ have the 
possibility of crossing $K_2$. On the other hand, if the target of $e_2$ is in $K_1$ then $e_2$ does not cross $K_2$ and thus only 
$e_1$, $\path_0$ and $\path_2$ have the possibility to cross $K_2$. 
\end{proof}

Using Proposition \ref{proposition:UnionSetsPath} we can show that if $K_1$ and $K_2$ are disjoint subsets of vertices 
of a digraph $G$ such that  $K_1\cup K_2$ is a $Z$-normal and such that there is no edge with source in $K_2$ and target 
in $K_1$, then each path in $G$ crosses $K_2$ at most $3|Z|+3$ times.

\begin{proposition}
\label{proposition:UnionSets}
Let $G=(V,E)$ be a digraph and $K_1,K_2,Z \subseteq V$ be subsets of vertices of $G$ such that 
$K_1\cup K_2$ is $Z$-normal, $K_1\cap K_2 =\emptyset$, and such that there is no edge with source in $K_2$ and target in $K_1$. 
Then for each path $\path = (V_{\path},E_{\path})$ in $G$ we have that $$|E_{\path}\cap E(K_2,V\backslash K_2)| \leq 3\cdot |Z| + 3.$$  
\end{proposition}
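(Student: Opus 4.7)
The plan is to mimic the argument used for Proposition \ref{proposition:NumberOfEdges}: decompose $\path$ at its intersection with $Z$, bound the number of crossings of $K_2$ on each piece by a constant using Proposition \ref{proposition:UnionSetsPath}, and then sum the bounds.

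More concretely, I would start by writing $V_{\path}\cap Z = \{v_1, \dots, v_k\}$, listed in the order these vertices appear along $\path$, and decompose $\path = \path_0 \cup \path_1 \cup \dots \cup \path_k$ into internally disjoint sub-paths where, for each $i \in \{1,\dots,k\}$, $v_i$ is the last vertex of $\path_{i-1}$ and the first vertex of $\path_i$. By construction, every $\path_i = (V_{\path_i}, E_{\path_i})$ is internally disjoint from $Z$. Since $K_1 \cup K_2$ is $Z$-normal, Proposition \ref{proposition:InternallyDisjointPath} applied to each $\path_i$ yields
$$|E_{\path_i} \cap E(K_1 \cup K_2,\, V\backslash(K_1\cup K_2))| \leq 2.$$

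Next, the hypotheses $K_1 \cap K_2 = \emptyset$ and the absence of edges with source in $K_2$ and target in $K_1$ are inherited verbatim from the statement we are proving, so Proposition \ref{proposition:UnionSetsPath} applies to every $\path_i$ and gives $|E_{\path_i} \cap E(K_2, V\backslash K_2)| \leq 3$. Summing over $i = 0, 1, \dots, k$ and using that $k \leq |Z|$, I obtain
$$|E_{\path} \cap E(K_2, V\backslash K_2)| \;\leq\; \sum_{i=0}^{k} |E_{\path_i} \cap E(K_2, V\backslash K_2)| \;\leq\; 3(k+1) \;\leq\; 3|Z| + 3,$$
which is the desired inequality.

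The only subtle point — and the sole candidate for an ``obstacle'' — is checking that the decomposition truly partitions the edge set $E_{\path}$, so that no $K_2$-crossing of $\path$ escapes all of the $\path_i$'s. This is immediate from the fact that the cut points $v_1,\dots,v_k$ are vertices of $\path$, hence every edge of $\path$ lies inside exactly one $\path_i$. Everything else is a direct application of the previously established propositions, so no further machinery is needed.
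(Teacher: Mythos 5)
Your proposal is correct and follows essentially the same route as the paper's proof: decompose $\path$ at the vertices of $V_{\path}\cap Z$ into internally $Z$-disjoint pieces, apply Proposition \ref{proposition:InternallyDisjointPath} to bound each piece's crossings of $K_1\cup K_2$ by $2$, then Proposition \ref{proposition:UnionSetsPath} to bound its crossings of $K_2$ by $3$, and sum. No gaps.
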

\begin{proof}
Analogously to the proof of Proposition \ref{proposition:NumberOfEdges} we let $V_{\path}\cap Z = \{v_1,...,v_k\}$, and assume that
 $\path = \path_0\cup \path_1 \cup ... \cup \path_k$ where $\path_1,\path_2,...,\path_k$ are internally disjoint paths such that  
for each $i\in \{1,...,k\}$, $v_i$ is the last vertex of $\path_{i-1}$ and the first vertex of $\path_i$. We note that 
for each $i\in \{1,...,k\}$ the path $\path_i$ is internally disjoint from $Z$. Therefore, since $K_1\cup K_2$ is $Z$-normal,
by Proposition \ref{proposition:InternallyDisjointPath} we have that 
$|E_{\path_i}\cap E(K_1\cup K_2, V\backslash(K_1\cup K_2))| \leq 2$. 
Now, since there is no edge with source in $K_2$ and target in $K_1$, we can apply Proposition \ref{proposition:UnionSetsPath} 
to infer that  $|E_{\path_i}\cap E(K_2,V\backslash K_2)|\leq 3$. 
This implies that $|E_{\path} \cap E(K_2,V\backslash K_2)| \leq \sum_{i=0}^k |E_{\path_i}\cap E(K_2,V\backslash K_2)| \leq 3k+3 \leq 3|Z|+3$.  
\end{proof}

The main technical lemma of this section states that each good arboreal decomposition $\arborealdecomposition$ of width 
$w$ can be transformed into an olive-tree decomposition $\olivetreedecomposition$ of tree-zig-zag number at most $3w+6$.

\begin{lemma}
\label{lemma:FromArborealPredecompositionToOliveTreeDecomposition}
Let $G=(V,E)$ be a digraph and $\arborealdecomposition = (N,F,W,Z)$ be a good arboreal decomposition of $G$ of width 
$w$. One can construct in time $O(w\cdot |N|)$ an olive-tree decomposition $\olivetreedecomposition = (N',F',\mathfrak{m})$ of 
$G$ of tree-zig-zag number at most $3w+6$.
\end{lemma}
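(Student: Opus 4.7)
My plan is to convert $\arborealdecomposition$ into $\olivetreedecomposition$ by expanding each bag into a chain of binary-tree nodes and binarizing the $r$-ary children structure using a ``spine'' of unlabelled dummies. Formally, for each $p \in N$ with $W(p) = \{v_1,\dots,v_k\}$ and children $p_1,\dots,p_r$ listed in their left-to-right order, the subtree $B(p)$ begins with a single-child chain $n_1 \to n_2 \to \cdots \to n_k$ and sets $\mathfrak{m}(v_i) = n_i$; beneath $n_k$ I hang a right-combed spine of dummy nodes $c_1,\dots,c_{r-1}$ whose left subtrees are $B(p_1),\dots,B(p_{r-1})$ respectively and whose bottom child is $B(p_r)$. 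The resulting $(N',F')$ is a binary tree, $\mathfrak{m}$ is injective, and property (1) of arboreal decompositions ensures every vertex of $G$ receives a node. The construction is plainly computable in time $O(w\cdot|N|)$ because $|W(p)| \leq w+1$ for every $p \in N$.

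The bound $\treezigzagnumber(\olivetreedecomposition) \leq 3w+6$ will follow from a case analysis on $u\in N'$. Every node has one of two forms: \textbf{(A)} $u=n_i$ inside the chain of some $B(p)$, in which case $V(u,\olivetreedecomposition) = V(p,\arborealdecomposition) \setminus \{v_1,\dots,v_{i-1}\}$; or \textbf{(B)} $u=c_i$ is a spine dummy, in which case $V(u,\olivetreedecomposition) = \bigcup_{j\geq i} V(p_j,\arborealdecomposition)$. In case (A), property (2) of $\arborealdecomposition$ gives that $V(p,\arborealdecomposition)$ is $Z(p',p)$-normal, where $p'$ is the parent of $p$ (and $Z(p',p)$ may be taken empty when $p$ is the root of $N$, in which case $V(u,\olivetreedecomposition)=V$ has empty boundary). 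Applying Proposition \ref{proposition:NormalComplement} with $X=\{v_1,\dots,v_{i-1}\}\subseteq W(p)$ and using the width bound $|Z(p',p)\cup W(p)|\leq w+1$ shows that $V(u,\olivetreedecomposition)$ is $Z$-normal for some $|Z|\leq w+1$; then Proposition \ref{proposition:NumberOfEdges} caps the crossings of any simple path by $2(w+1)+2 = 2w+4$.

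Case (B) is where the \emph{goodness} of $\arborealdecomposition$ is essential, and it is the main obstacle. Setting $K_2 = V(u,\olivetreedecomposition) = \bigcup_{j\geq i} V(p_j,\arborealdecomposition)$ and $K_1 = \bigcup_{j<i} V(p_j,\arborealdecomposition)$, I observe that $K_1 \cap K_2 = \emptyset$ and that $K_1 \cup K_2 = V(p,\arborealdecomposition)\setminus W(p)$, which is $(Z(p',p)\cup W(p))$-normal by Proposition \ref{proposition:NormalComplement}, with $|Z(p',p)\cup W(p)|\leq w+1$. Condition (3) of good arboreal decompositions forbids edges from $V(p_j,\arborealdecomposition)$ to $V(p_i,\arborealdecomposition)$ for $i<j$, hence no edge has source in $K_2$ and target in $K_1$. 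Proposition \ref{proposition:UnionSets} then bounds any simple path's crossings of $K_2$ by $3(w+1)+3 = 3w+6$. Taking the maximum over both cases yields the claim. The delicate point is that the left-to-right assembly of the spine must respect the ordering of children in $\arborealdecomposition$: if a dummy's ``later'' side contained a non-suffix subset of the children, goodness would not forbid the relevant back-edges and Proposition \ref{proposition:UnionSets} could not be applied.
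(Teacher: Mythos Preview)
Your proposal is correct and follows essentially the same approach as the paper. The paper's construction replaces each node $p$ by a line $a_p^0 a_p^1 \cdots a_p^{|W(p)|} b_p^1 \cdots b_p^r$ (with one extra dummy $a_p^0$ at the top and $r$ rather than $r-1$ spine dummies), and the two-case analysis using Propositions~\ref{proposition:NormalComplement}, \ref{proposition:NumberOfEdges}, and \ref{proposition:UnionSets} is identical to yours, including the observation that goodness is exactly what supplies the ``no back-edges from $K_2$ to $K_1$'' hypothesis of Proposition~\ref{proposition:UnionSets}.
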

\begin{proof}
We start by defining the sets of nodes and arcs of the olive-tree decomposition $\olivetreedecomposition$. Intuitively, $\olivetreedecomposition$
is obtained by replacing each node $p$ of $\arborealdecomposition$, labeled with a bag $W(p)$ and having $r$ children, with a line 
$L_p \equiv a_p^0a_p^1...a_p^{|W(p)|}b_p^1b_p^2...b_p^{r}$ as depicted in Figure \ref{figure:conversionDirectedTreewidthCarving}. 

\begin{figure}[!hf] 
\centering 
\includegraphics[scale=0.25]{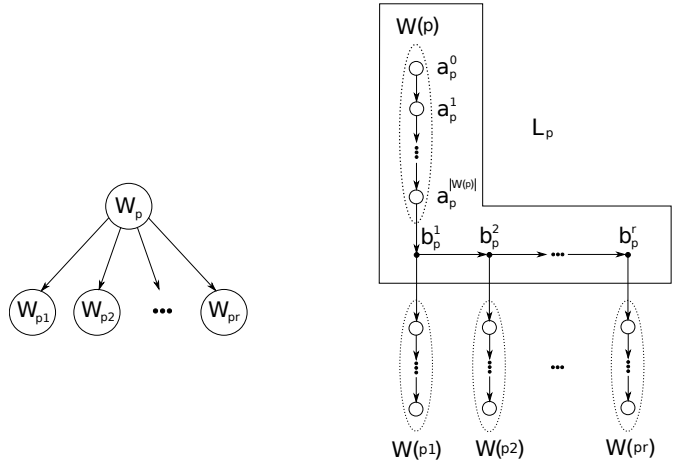}
\caption{From a good arboreal decomposition of width $w$ to an olive-tree decomposition of tree-zig-zag number at most $3w+6$. }
\label{figure:conversionDirectedTreewidthCarving}
\end{figure} 

Each vertex in $W(p)$ is mapped by $\mathfrak{m}$ to a node in $\{a_p^1,...,a_p^{|W(p)|}\}$ in such a way that no two distinct vertices 
in $W(p)$ are mapped to the same node. No vertex of $G$ is mapped to the node 
$a_p^0$ nor to the nodes $b_p^1,b_p^2,...,b_p^r$. These nodes are used to connect the line $L_p$ corresponding to the node $p$ of $\arborealdecomposition$ 
to lines corresponding to other positions. In particular for each $j\in \{1,...,r\}$, $b_p^j$ is connected to $a^0_{pj}$. In other words, $b_p^j$ is connected
to the first vertex of the line $L_{pj}$. We will show below that the olive-tree decomposition defined in this way has width 
at most $3w+6$. First however we formally define the sets $N'$ and $F'$ and the mapping function $\mathfrak{m}$.

\begin{equation}
\label{equation:setOfNodes}
N' =  \{a_p^{i}\;|\;p\in N, 0\leq i \leq |W(p)| \} \cup \{b_p^j \;|\; (p,pj)\in F \}
\end{equation}

\begin{equation}
\label{equation:setofArcs}
\begin{array}{l}
F' = \{(a_{p}^i,a_p^{i+1})\;|\;p\in N,\;0\leq i \leq |W(p)|-1 \}\; \cup\;  \{ (a_p^{|W(p)|}, b_{p}^1)\;|\; p\in N \} \; \cup   \\
\\ 
\hspace{1cm}  \{(b_p^i,b_p^{i+1})\;|\; p\in N, 1\leq i\leq r-1\}  \;\cup\; \{(b_p^{j}, a_{pj}^0)\;|\; (p,pj) \in F\} 
\end{array}
\end{equation}

Finally, the labeling function $\mathfrak{m}:V\rightarrow N'$ is chosen arbitrarily as long as it satisfies the following 
condition for each node $p\in N$.

\begin{equation}
\label{equation:MapBags}
\mathfrak{m}(W(p))= \{a_p^1,...,a_p^{|W(p)|}\}.
\end{equation}

In other words we choose $\mathfrak{m}$ in such a way that for each position $p\in N$, the vertices in $W(p)$ are 
bijectively mapped to the nodes in $\{a_p^1, ...,a_p^{|W(p)|}\}$.  
We argue that $\olivetreedecomposition$ is indeed an olive-tree decomposition of 
tree-zig-zag number at most $3w+6$.
Recall that if $G=(V,E)$ is a digraph and $\arborealdecomposition$ is an arboreal decomposition of $G$, then 
for each node $p$ of $\arborealdecomposition$, $V(p,\arborealdecomposition)$ denotes the set of 
vertices of $G$ that belong to some bag associated with a descendant of $p$ (including $p$ itself). 
Analogously, if $\olivetreedecomposition$ is an olive-tree decomposition of $G$ then for each node $u$ of $\olivetreedecomposition$, 
 $V(u,\olivetreedecomposition)$ denotes the set of vertices of $G$ that are mapped to some descendant of $u$. 
To show that $\olivetreedecomposition$ has tree-zig-zag number at most $3w+6$ we 
need to show that for each node $u\in N'$, and each path $\path$ in $G$, 
$$| E_{\path}\cap E(V(u,\olivetreedecomposition),V\backslash V(u,\olivetreedecomposition))|\leq 3w+6.$$ 
There are two cases to be considered, depending on whether $u=a_p^i$ or whether $u=b_p^j$. We analyse each of these cases below.  
\begin{enumerate}
	\item ($u=a_p^i$) We start by noting that for each node $p\in N$, $V(p,\arborealdecomposition) = V(a_p^0,\olivetreedecomposition)$.
		If $p$ is the root of $\arborealdecomposition$ then $V(p,\arborealdecomposition) = V$ and thus both $V(p,\arborealdecomposition)$ 
		and $V(a_{p}^0,\olivetreedecomposition)$ are $\emptyset$-normal.
		If $p$ is not the root of $\arborealdecomposition$ then $p$ has a parent $p'$. In this case by definition of arboreal decomposition 
		we have that $V(p,\arborealdecomposition)$ is $Z(p',p)$-normal. Thus $V(a_p^0,\olivetreedecomposition)$ is also $Z(p',p)$-normal.
		We let $X_p$ be equal to $\emptyset$ if $p$ is the root of $\arborealdecomposition$, and equal to $Z(p',p)$ if $p'$ is the parent of $p$. 
		Thus we can simply say that $V(p,\arborealdecomposition)$ is $X_p$-normal. 

		Now let $j\in \{1,...,|W(p)|\}$. Then 
		$V(a_p^j,\olivetreedecomposition) = V(a_p^0,\olivetreedecomposition) \backslash \mathfrak{m}^{-1}(\{a_p^1,...,a_p^{j-1}\})$. In other words, 
		$V(a_p^j,\olivetreedecomposition)$ is equal to $V(a_p^0,\olivetreedecomposition)$ minus the vertices of $G$ that are mapped by 
		$\mathfrak{m}$ to some node in $\{a_p^1,...,a_p^{j-1}\}$. This implies, by 
		Proposition \ref{proposition:NormalComplement} that the set $V(a_p^j,\olivetreedecomposition)$ is 
		$X_p \cup \mathfrak{m}^{-1}(\{a_p^1,...,a_p^{j-1}\})$-normal. 

		Since by the construction of $\olivetreedecomposition$, $\mathfrak{m}^{-1}(\{a_p^1,...,a_p^{j-1}\}) \subseteq W(p)$ 
		(Equation \ref{equation:MapBags}), and since $\arborealdecomposition$ has width $w$, we have that 
		$|X_p \cup \mathfrak{m}^{-1}(\{a_p^1,...,a_p^{j-1}\})| \leq w+1$. 
		Therefore, by Proposition \ref{proposition:NumberOfEdges}, for each $j\in \{0,...,|W(p)|\}$, 
		$$|E_{\path}\cap E(V(a_p^j,\olivetreedecomposition),V\backslash V(a_p^j,\olivetreedecomposition))| \leq 2(w+1)+2 \leq 3w +6.$$ 

	\item ($u=b_p^j$) Let $u=b_p^j$ for some $p\in N$ and $j\in \{1,...,r\}$. Since by the construction of $\olivetreedecomposition$,
		no vertex of $G$ is mapped to $b_p^j$, we 
		have that $V(b_{p}^j,\olivetreedecomposition) = \bigcup_{k=j}^r V(a_{pk}^0,\olivetreedecomposition)$. Additionally, since 
		$V(a_p^0,\olivetreedecomposition) = V(p,\arborealdecomposition)$ for each $p\in N$, we have that 
		$V(b_{p}^j,\olivetreedecomposition) = \bigcup_{k=j}^r V(pk,\arborealdecomposition)$.
		Note that $V(b_p^j,\olivetreedecomposition)\subseteq V(b_p^1,\olivetreedecomposition)$ for each $j\in \{1,...,r\}$. 
		Since $\arborealdecomposition$ is a good arboreal decomposition, we have that for $i,j\in \{1,...,r\}$ with $i<j$ 
		there is no edge with source in $V(pj,\arborealdecomposition)$ and target in $V(pi,\arborealdecomposition)$.
		This implies that for each $j\in \{1,...,r\}$, there is no edge with source in $V(b_{p}^j,\olivetreedecomposition)$ and 
		target in $V(b_p^1,\olivetreedecomposition)\backslash V(b_{p}^j,\olivetreedecomposition)$. Now let $X_p$ be equal 
		to $\emptyset$ if $p$ is the root of $\arborealdecomposition$, and equal to $Z(p',p)$ if $p'$ is the parent of $p$. 
		We note that $V(a_p^0,\olivetreedecomposition)=V(p,\arborealdecomposition)$ is $X_p$-normal, and that 
		$V(b_p^1,\olivetreedecomposition)= V(a_p^0,\olivetreedecomposition) \backslash W(p)$. Therefore, by Proposition \ref{proposition:NormalComplement},
		$V(b_p^1,\olivetreedecomposition)$ is $X_p\cup W(p)$-normal. 
		Now, we can apply Proposition \ref{proposition:UnionSets} with 
		$K_1 = V(b_p^1,\olivetreedecomposition)\backslash V(b_p^j,\olivetreedecomposition)$, $K_2= V(b_p^j,\olivetreedecomposition)$, 
		and $Z=X_p \cup W(p)$ to infer that  
		$$|E_{\path}\cap E(V(b_p^j,\olivetreedecomposition), V\backslash V(b_p^j,\olivetreedecomposition))|\leq 3|Z|+3 \leq  3(w+1) + 3 = 3w+6.$$
		The inequality $3|Z|+3\leq 3(w+1)+3$ follows from the fact that $|Z|=|X_p\cup W(p)|\leq w+1$,  since $\arborealdecomposition$ has width $w$.
		 
\end{enumerate}
\end{proof}

Finally we are in a position to prove Theorem \ref{theorem:ComparisonWithOtherMeasures}.  

\paragraph{\bf Proof of Theorem \ref{theorem:ComparisonWithOtherMeasures}}

By Lemma \ref{theorem:ConstructionGoodArborealDecomposition}, given a digraph 
$G$ of directed treewidth $w$ one can construct in time $|G|^{O(w)}$ a good arboreal 
decomposition $\arborealdecomposition$ of $G$ of width at most $3w+4$. By Lemma \ref{lemma:FromArborealPredecompositionToOliveTreeDecomposition} one can transform
$\arborealdecomposition$ into an olive-tree decomposition $\olivetreedecomposition$ of $G$ of tree-zig-zag number at most $3(3w+4)+6 = 9w+18$. 
$\square$

\section{Tree Automata}
\label{section:TreeAutomata}

In this section we recall some of the main concepts of tree-automata theory. For an extensive
treatment of the subject we refer the reader to the standard reference \cite{Tata2007}. As two 
non-standard applications, we consider the problem of counting the number of terms of depth 
$d$ accepted by a deterministic tree-automaton, and the problem of generating terms having a prescribed weight. 

A ranked alphabet is a finite set $\newslicealphabet = \newslicealphabet_0 \cup \newslicealphabet_1 \cup ... \cup \newslicealphabet_r$  of 
function symbols where the elements of $\newslicealphabet_i$ are function symbols of arity $i$. 
Intuitively, the arity of a function symbol specifies its number of inputs.
Constants are regarded as function symbols of arity $0$. If $f$ is a function symbol in $\newslicealphabet$
then we let $\arity(f)$ denote the arity of $f$. In other words $\arity(f)=i$ if and only if $f\in \newslicealphabet_i$.  
The set $\terms(\newslicealphabet)$ of all terms\footnote{In this work we will {\em not} be interested in terms containing variables. 
In other words, all terms considered here are ground terms.} over $\newslicealphabet$ is inductively defined as follows:

\begin{itemize}
	\item if $f\in \newslicealphabet_0$ then $f$ is a term in $\terms(\newslicealphabet)$,
	\item if $f\in \newslicealphabet_{\arity(f)}$ and $t_1,...,t_{\arity(f)}$  are terms in $\terms(\newslicealphabet)$ 
		then $f(t_1,t_2,...,t_{\arity(f)})$ is a term in $\terms(\newslicealphabet)$. 
\end{itemize}

Let $t=f(t_1,...,t_{\arity(f)})$ be a term over the ranked alphabet $\newslicealphabet$. Then we define $\leadingsymbol(t)=f$ as the leading 
symbol of $t$. We denote by $\positions(t)$ the set of positions of $t$, which is a prefix closed subset of $\{1,...,r\}^*$ used to index the subterms of $t$. 
More precisely, if  $t=f(t_1,...,t_{\arity(f)})$ then

$$\positions(t)=\{\lambda\} \cup \bigcup_{ j \in \{1,...,\arity(f)\}} \{ j p\;|\; p\in \positions(t_{ j })\}.$$

We note that if $t$ is a constant, i.e., a function symbol of arity $0$, then $\positions(t)=\{\lambda\}$.
If $t\in \terms(\newslicealphabet)$ then we let $|t|=|\positions(t)|$.   
The subterm $t|_p$ of $t$ at position $p$ is 
inductively defined as follows: $t|_{\emptystring} = t$; if $t=f(t_1,t_2,...,t_{\arity(f)})$, then for each $ j \in [\arity(f)]$ and each position 
$j p\in \positions(t)$, $t|_{ j  p} = t_{ j }|_{p}$.
If $t$ is a term and $p\in \positions(t)$ then $t[p] = \leadingsymbol(t|_p)$ denotes the leading symbol of the subterm of $t$ at position $p$. 

A tree-language over a ranked alphabet $\newslicealphabet$ is any subset $\lang\subseteq \terms(\newslicealphabet)$. In particular 
the empty set $\emptyset$ is a tree-language. A {\em bottom-up tree-automaton} over $\newslicealphabet$ is a tuple $\treeAutomaton=(Q,\newslicealphabet,Q_{F},\Delta)$
where $Q$ is a set of states, $Q_{F}\subseteq Q$ a set of final states and $\Delta=\Delta_0 \cup \Delta_1 \cup ....\cup \Delta_r$ is a transition relation 
where $\Delta_0\subseteq \Sigma_0\times Q$ and $\Delta_i\subseteq Q^i\times \newslicealphabet_i \times Q$ for each $i\in \{1,...,r\}$.
The size of $\treeAutomaton$, which is defined as $|\treeAutomaton|=|Q|+|\Delta|$, measures the number of states in $Q$ plus the number of transitions in $\Delta$.
The set $\lang(\treeAutomaton,\astate,i)$ of all terms reaching a state $\astate\in Q$ in depth at most $i$ is inductively defined as follows.

\vspace{-2pt}
\begin{equation}
\label{equation:InductiveDefinition}
\lang(\treeAutomaton,\astate,1) = \{a\in \newslicealphabet_0\; |\; (a,\astate)\in \Delta_0\}
\end{equation}
\begin{equation*}
\begin{array}{l}
\lang(\treeAutomaton,\astate,i) =  \lang(\treeAutomaton,\astate,i-1)\; \cup\; \{f(t_1,...,t_{\arity(f)})\;|\;  (\astate_1,...,\astate_{\arity(f)},f,\astate)\in \Delta_{\arity(f)}, \\
\hspace{8.5cm} t_j\in \lang(\treeAutomaton,\astate_j,i-1)\}
\end{array}
\end{equation*}

We denote by $\lang(\treeAutomaton)$ the set of all terms reaching a final state in $Q_{F}$ at any finite depth.

\begin{equation}
\lang(\treeAutomaton) =\bigcup_{{\astate\in Q_{F},i\in \N}} \lang(\treeAutomaton,\astate,i)
\end{equation}

We say that the set $\lang(\treeAutomaton)$ is the language generated by $\treeAutomaton$.
Let $\treeAutomaton=(Q,\newslicealphabet,Q_{F},\Delta)$ be a tree-automaton. We say that $\treeAutomaton$ is
{\em deterministic} if for every function symbol $f\in \newslicealphabet$ and every tuple $(\astate_1,...,\astate_{\arity(f)})\in Q^{\arity(f)}$
there exists at most one $\astate\in Q$ such that $(\astate_1,...,\astate_{\arity(f)},f,\astate)\in \Delta_{\arity(f)}$. We say that 
$\treeAutomaton$ is complete if for every function symbol $f$ and every tuple $(\astate_1,...,\astate_{\arity(f)}) \in Q^{\arity(f)}$
there exists at least one $\astate\in Q$ for which $(\astate_1,...,\astate_{\arity(f)},f,\astate)\in \Delta_{\arity(f)}$. Observe that 
from any tree-automaton $\treeAutomaton$ one can derive a complete tree-automaton $\treeAutomaton'$ generating the same 
language by adding a dead state $\astate_{\mathit{dead}}$, and creating a transition $(\astate_1,...,\astate_{\arity(f)},f,\astate_{\mathit{dead}})$ 
whenever there is no transition in $\treeAutomaton$ whose left side is $(\astate_1,...,\astate_{\arity(f)},f)$. 

If $t$ is a term in $\terms(\newslicealphabet)$, then the depth of $t$ is defined as $\max\{|p|: p\in \positions(t)\}$. In other 
words, the depth of a term $t$ is the size of the longest path from the root of $t$ to one of its leaves. We denote 
by $\mathit{depth}(t)$ the depth of $t$. 
If $\treeAutomaton$  is a tree-automaton and  $t\in \lang(\treeAutomaton)$ is a term of depth $d$, then we say that 
$\treeAutomaton$ accepts $t$ in depth $d$. 
The next lemma says that for any deterministic tree-automaton 
$\treeAutomaton$ and any $d\in \N$, one can count in polynomial time the number of terms accepted by $\treeAutomaton$ 
in depth at most $d$. 

\begin{lemma}
\label{lemma:CountingTrees}
Let $\treeAutomaton$ be a deterministic tree-automaton and let $d\in \N$. One can 
count in time $d^{O(1)}\cdot|\treeAutomaton|^{O(1)}$ the number of terms accepted by $\treeAutomaton$ in 
depth at most $d$. 
\end{lemma}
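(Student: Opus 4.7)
The plan is a standard bottom-up dynamic programming computation that exploits determinism to avoid any double-counting. For each state $\astate \in Q$ and each integer $i$ with $1 \leq i \leq d$, let $M(\astate, i) := |\lang(\treeAutomaton, \astate, i)|$ denote the number of terms of depth at most $i$ sent to $\astate$ by $\treeAutomaton$. Because $\treeAutomaton$ is deterministic, every term reaches at most one state, so the sets $\lang(\treeAutomaton, \astate, d)$ for $\astate \in Q_F$ are pairwise disjoint. Consequently the desired count of accepted terms is simply $\sum_{\astate \in Q_F} M(\astate, d)$, and it suffices to show how to fill in the whole table $M(\cdot,\cdot)$ in polynomial time.

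The recurrence mirrors the inductive definition of $\lang(\treeAutomaton, \astate, i)$ given in Equation~\eqref{equation:InductiveDefinition}. In the base case,
\[
M(\astate, 1) \;=\; |\{a \in \newslicealphabet_0 : (a, \astate) \in \Delta_0\}|,
\]
which is read off $\Delta_0$ in a single scan. For $i \geq 2$, I will use
\[
M(\astate, i) \;=\; \sum_{(\astate_1, \ldots, \astate_{\arity(f)}, f, \astate) \in \Delta} \; \prod_{j=1}^{\arity(f)} M(\astate_j, i-1),
\]
where the empty product associated with an arity-$0$ transition is understood to equal $1$. Determinism is the ingredient that makes this identity true without any inclusion-exclusion: every term in $\lang(\treeAutomaton, \astate, i)$ has the form $f(t_1, \ldots, t_{\arity(f)})$ with each subterm $t_j$ of depth at most $i-1$, and determinism forces each $t_j$ to reach a unique state $\astate_j$. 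Hence the decomposition of a term into (top-level transition, tuple of subterms) is uniquely determined. Different transitions with target $\astate$ therefore correspond to disjoint subsets of terms, and within a single transition, distinct tuples $(t_1, \ldots, t_{\arity(f)})$ yield syntactically distinct terms. The inclusion $\lang(\treeAutomaton, \astate, i-1) \subseteq \lang(\treeAutomaton, \astate, i)$ built into the definition is handled implicitly: constants carried over from level $i-1$ are re-captured through the arity-$0$ transitions contributing $1$ to the sum, and taller terms from level $i-1$ reappear through their own top-level transitions whose children still satisfy $t_j \in \lang(\treeAutomaton, \astate_j, i-1)$.

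The DP table has $|Q|\cdot d$ entries, and filling level $i\geq 2$ amounts to a single sweep through $\Delta$, evaluating for each transition a product of length bounded by the maximum arity $A$ of $\treeAutomaton$ and adding it to the entry of its target state. This yields total running time $O(d\cdot |\Delta|\cdot A)$, which is $d^{O(1)}\cdot |\treeAutomaton|^{O(1)}$ as required. The only genuinely delicate step in the argument is the correctness of the multiplicative recurrence, and this is precisely where determinism of $\treeAutomaton$ is indispensable: without it a term could be absorbed by several states at once, and one would be forced into an inclusion-exclusion over subsets of $Q$, incurring a blow-up exponential in $|Q|$.
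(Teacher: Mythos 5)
Your proof is correct and takes essentially the same route as the paper: the same bottom-up dynamic programming with the base case read off $\Delta_0$, the same multiplicative recurrence summing over transitions with target $\astate$, and the same (here made explicit) appeal to determinism to guarantee that each term is decomposed uniquely and hence counted once. The only divergence is the final aggregation, where your $\sum_{\astate\in Q_F} M(\astate,d)$ is the correct reading given that $\lang(\treeAutomaton,\astate,i)$ is monotone in $i$; the paper's Equation \ref{equation:FinalCounting} as literally written also sums over all $i\leq d$, which would over-count, so your version is in fact the cleaner one.
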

\begin{proof}
The proof follows by a standard dynamic programming argument. 
First we write a recursive formula that counts the number of terms that reach a given state $\astate$ in depth $i$:

\begin{equation}
\label{equation:CountingAcceptedTermsOne}
|\lang(\treeAutomaton,\astate,1)|= |\{(f,\astate)\;|\; f\in \newslicealphabet_0,\;\astate \in Q\}|
\end{equation}
\begin{equation}
\label{equation:CountingAcceptedTermsTwo}
|\lang(\treeAutomaton,\astate,i)|= \sum_{(\astate_1,...,\astate_{\arity(f)},f,\astate) \in \Delta} \;\; \prod_{j=1}^{\arity(f)} |\lang(\treeAutomaton,\astate_j,i-1)|
\end{equation}

Now the number of terms accepted by $\treeAutomaton$ in depth $d$ is the number of terms that reach a final state at depth 
$d$. 

\begin{equation}
\label{equation:FinalCounting}
|\lang(\treeAutomaton)| = \sum_{\astate\in Q_{F}, i\leq d} |\lang(\treeAutomaton,\astate, i)|.
\end{equation}

Thus to determine $|\lang(\treeAutomaton)|$, one can use Equation \ref{equation:CountingAcceptedTermsOne} to 
compute and store in memory the value $|\lang(\treeAutomaton,\astate,1)|$ for each $\astate\in Q$. Subsequently, using the values 
stored in memory, one can use Equation \ref{equation:CountingAcceptedTermsTwo} to compute and store in memory the values $|\lang(\treeAutomaton,\astate,2)|$ and so on. We repeat this process 
until we have computed all values $|\lang(\treeAutomaton,\astate,d)|$. At this point, we apply Equation \ref{equation:FinalCounting}
to determine the number of terms that reach a final term in depth at most $d$. Since in this work, $\arity(f)$ is
bounded by a constant (indeed in our applications $\arity(f)\leq 2$), this counting process can clearly be done in time 
$d^{O(1)}\cdot |\treeAutomaton|^{O(1)}$.

\end{proof}

\subsection{Properties of Tree-Automata}
\label{subsection:PropertiesOfTreeAutomata}

If $\lang$  is a tree language over 
a ranked alphabet $\newslicealphabet$ then the complement of $\lang$ is defined as $\overline{\lang} = \terms(\newslicealphabet)\backslash \lang$. 
A projection between ranked alphabets $\newslicealphabet$ and $\newslicealphabet'$ is any arity preserving total mapping 
$\projection:\newslicealphabet\rightarrow \newslicealphabet'$. By arity preserving we mean that if $f$ is a function symbol of arity $r$ in 
$\newslicealphabet$, then $\projection(f)$ is a function symbol of arity $r$ in $\newslicealphabet'$.
Recall that if $t$ is a term and $p\in \positions(t)$ then $t[p] = \leadingsymbol(t|_p)$ denotes the leading symbol of the subterm of $t$ rooted at position $p$. 
A projection $\projection:\newslicealphabet\rightarrow \newslicealphabet'$ can be homomorphically extended to a mapping $\projection:\terms(\newslicealphabet)\rightarrow \terms(\newslicealphabet')$ between terms
by setting $\projection(t)[p]=\projection(t[p])$ for each position $p\in \positions(t)$. 
Additionally, such mapping $\projection$ can 
be further extended to tree languages $\lang\subseteq \terms(\newslicealphabet)$ by setting $\projection(\lang) = \{\projection(t)\;|\;t\in \terms(\newslicealphabet)\}$.
Finally, given a projection $\projection:\newslicealphabet\rightarrow \newslicealphabet'$ and a tree language $\lang$ over $\newslicealphabet'$, 
the inverse homomorphic image of $\lang$ under $\projection$ is defined as $\projection^{-1}(\lang)=\{t\in \terms(\newslicealphabet)\;|\;\projection(t)\in \lang \}$,
i.e., the set of all terms over $\terms(\newslicealphabet)$ which are mapped to some term in $\lang$. 
In Lemma \ref{lemma:PropertiesOfTreeAutomata} below we list several well known closure properties 
of tree languages recognizable by tree-automata (see for instance \cite{Tata2007}).

\begin{lemma}[Properties of Tree Automata]
\label{lemma:PropertiesOfTreeAutomata}
Let $\treeAutomaton$ be an arbitrary tree-automaton over a ranked alphabet $\newslicealphabet$ and let $\treeAutomaton_1$ and $\treeAutomaton_2$ be 
deterministic complete tree-automata over $\newslicealphabet$. 
\begin{enumerate}[(i)]
	\item \label{lemma:PropertiesOfTreeAutomata:Deterministic}
		There exists a unique minimal deterministic complete tree-automaton $\mathit{det}(\treeAutomaton)$
		such that $\lang(\mathit{det}(\treeAutomaton)) = \lang(\treeAutomaton)$. Additionally, 
		$\mathit{det}(\treeAutomaton)$ can be constructed in time $2^{O(|\treeAutomaton|)}$. 
	\item \label{lemma:PropertiesOfTreeAutomata:Complement}
		One can construct in time $O(|\treeAutomaton_1|)$ a 
		deterministic complete tree-automaton $\overline{\treeAutomaton_1}$ such that 
		$\lang(\treeAutomaton_1)= \lang(\overline{\treeAutomaton_1})$.
	\item \label{lemma:PropertiesOfTreeAutomata:UnionIntersection}
		One can construct in time $O(|\treeAutomaton_1|\cdot |\treeAutomaton_2|)$ deterministic complete 
		tree-automata $\treeAutomaton_1\cup \treeAutomaton_2$ and $\treeAutomaton_1\cap \treeAutomaton_2$ such that 
		$\lang(\treeAutomaton_1\cup \treeAutomaton_2)  = \lang(\treeAutomaton_1) \cup \lang(\treeAutomaton_2)$ and 
		$\lang(\treeAutomaton_1\cap \treeAutomaton_2)  = \lang(\treeAutomaton_1) \cap \lang(\treeAutomaton_2)$. 
	\item \label{lemma:PropertiesOfTreeAutomata:TermRecognition}
		Let $t\in \terms(\newslicealphabet)$ be a term over $\newslicealphabet$. 
		Then one may determine in time $O(|\treeAutomaton|\cdot |t|)$ whether $t\in \lang(\treeAutomaton)$.
	\item \label{lemma:PropertiesOfTreeAutomata:Projection}
		Let $\projection:\newslicealphabet\rightarrow \newslicealphabet'$ be a projection. Then one can construct in 
		time $O(|\treeAutomaton|)$ a tree-automaton $\projection(\treeAutomaton)$ 
		over $\newslicealphabet'$ such that $\lang(\projection(\treeAutomaton)) = \projection(\lang(\treeAutomaton))$.
	\item \label{lemma:PropertiesOfTreeAutomata:InverseHomomorphism}
		Let $\projection:\newslicealphabet'\rightarrow \newslicealphabet$ 
		be a projection. One can construct in time $O(|\newslicealphabet'|\cdot |\treeAutomaton|)$  a tree-automaton $\projection^{-1}(\treeAutomaton)$ 
		over $\newslicealphabet'$ such that $\lang(\projection^{-1}(\treeAutomaton)) = \projection^{-1}(\lang(\treeAutomaton))$. Additionally, 
		if $\treeAutomaton$  is deterministic, then $\projection^{-1}(\treeAutomaton)$ is also deterministic.
\end{enumerate}
\end{lemma}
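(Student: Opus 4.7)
The plan is to verify each of the six items in turn using well-known constructions from tree-automata theory, and to argue that the claimed size and time bounds are achieved. Since all six constructions are standard, the main work is to describe the appropriate automaton in each case and check that the accepted language matches the specification; correctness proofs then proceed by straightforward induction on term depth using the definition of $\lang(\treeAutomaton,\astate,i)$ given in Equation \ref{equation:InductiveDefinition}.

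For (\ref{lemma:PropertiesOfTreeAutomata:Deterministic}) I would apply the standard subset construction for bottom-up tree automata: take $2^{Q}$ as the state set, and for each $f\in\newslicealphabet_r$ and each tuple $(S_1,\dots,S_r)\in (2^{Q})^{r}$ define the unique transition to $S=\{\astate\in Q\mid \exists (\astate_1,\dots,\astate_r)\in S_1\times\cdots\times S_r\text{ with }(\astate_1,\dots,\astate_r,f,\astate)\in\Delta_r\}$, declaring $S$ final whenever $S\cap Q_{F}\neq\emptyset$; minimality is then obtained by the classical Hopcroft-style partition refinement for tree automata. Item (\ref{lemma:PropertiesOfTreeAutomata:Complement}) follows by simply flipping the accept/reject status of states in a deterministic complete automaton, which works precisely because on every term there is exactly one run. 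For (\ref{lemma:PropertiesOfTreeAutomata:UnionIntersection}) I use the product construction: states $Q_1\times Q_2$, a transition $((\astate_1,\astate_1'),\dots,(\astate_r,\astate_r'),f,(\astate,\astate'))$ whenever both components carry the corresponding transitions, and final sets $Q_{F,1}\times Q_{F,2}$ or $(Q_{F,1}\times Q_2)\cup(Q_1\times Q_{F,2})$ depending on whether we want intersection or union; the size bound $O(|\treeAutomaton_1|\cdot|\treeAutomaton_2|)$ is immediate.

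For (\ref{lemma:PropertiesOfTreeAutomata:TermRecognition}) I would run $\treeAutomaton$ bottom-up on $t$: at each position $p\in\positions(t)$, compute the set of states reachable on $t|_{p}$ by looking up the transitions whose input symbol is $t[p]$ and whose input state tuple matches states already computed at the children; each position is processed in time $O(|\treeAutomaton|)$, giving total time $O(|\treeAutomaton|\cdot|t|)$, and $t\in\lang(\treeAutomaton)$ iff some final state is reachable at the root. Item (\ref{lemma:PropertiesOfTreeAutomata:Projection}) is handled by keeping the state set and replacing every transition $(\astate_1,\dots,\astate_r,f,\astate)$ with $(\astate_1,\dots,\astate_r,\projection(f),\astate)$; arity preservation of $\projection$ ensures this is well defined, and a routine induction on depth shows that the new automaton accepts exactly $\projection(\lang(\treeAutomaton))$. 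For (\ref{lemma:PropertiesOfTreeAutomata:InverseHomomorphism}), keep the same states and, for each transition $(\astate_1,\dots,\astate_r,f,\astate)$ of $\treeAutomaton$, insert into $\projection^{-1}(\treeAutomaton)$ one transition $(\astate_1,\dots,\astate_r,f',\astate)$ for every $f'\in\newslicealphabet'$ with $\projection(f')=f$; the size bound $O(|\newslicealphabet'|\cdot|\treeAutomaton|)$ is immediate, and determinism is preserved because $\projection$ is a function: if two such new transitions agreed on their left side $(\astate_1,\dots,\astate_r,f')$, their underlying $f=\projection(f')$ would also agree, forcing the target state to coincide by determinism of $\treeAutomaton$.

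None of the six items involves a serious combinatorial obstacle; the only point that requires some care is the minimality claim in (\ref{lemma:PropertiesOfTreeAutomata:Deterministic}), where I would invoke the Myhill-Nerode theorem for tree languages to conclude that the minimal deterministic complete recognizer is unique up to isomorphism and can be obtained from the subset construction by merging Myhill-Nerode equivalent states in time polynomial in the size of the determinized automaton, hence within the stated $2^{O(|\treeAutomaton|)}$ bound. Everything else reduces to easy inductions on term depth and direct bookkeeping of state and transition counts, and the lemma can be regarded as a convenient repackaging of standard material from \cite{Tata2007}.
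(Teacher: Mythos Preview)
Your proposal is correct. The paper itself does not prove this lemma at all: it simply states the six closure properties and refers the reader to the standard reference \cite{Tata2007}, so there is no ``paper's own proof'' to compare against. What you have written is exactly the standard material one would find there---subset construction plus Myhill--Nerode minimization, final-state flipping, product automata, bottom-up evaluation, relabeling of transitions for projection, and preimage expansion for inverse homomorphism---and your correctness arguments and complexity bounds are all sound.
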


\subsection{Weighted Terms}
\label{subsection:WeightedTerms}

Let $\newslicealphabet$ be a ranked alphabet, $\Xi$ be a finite semigroup, and 
let $\automataweightingfunction:\newslicealphabet \rightarrow \Xi$  be a function that associates with  
each symbol $f\in\newslicealphabet$, a weight $\automataweightingfunction(f)\in \Xi$. The weight of a 
term $t\in \terms(\newslicealphabet)$ is inductively defined  as follows. 
\begin{equation}
\automataweightingfunction(t)=\left\{\begin{array}{lcl}
\automataweightingfunction(f) & & \mbox{if $t=f$ for $f\in \newslicealphabet_{0}$} \\
\automataweightingfunction(f) + \sum_{i=1}^{\arity(f)} \automataweightingfunction(t_i) & & \mbox{if $t=f(t_1,...,t_{\arity(f)})$ and $\arity(f)\geq1$} \\
\end{array}\right.
\end{equation} 

The following lemma says that given an alphabet $\newslicealphabet$, a weighting function $\automataweightingfunction:\newslicealphabet \rightarrow \Xi$, and 
a weight $a\in \Xi$, one can construct a tree-automaton $\treeAutomaton(\newslicealphabet,\automataweightingfunction,a)$ generating precisely 
the terms in $\terms(\newslicealphabet)$ with weight $a$. 

\begin{lemma}
\label{lemma:WeightedTerms}
Let $\newslicealphabet = \newslicealphabet_0 \cup ...\cup \newslicealphabet_r$ be a ranked alphabet and $\automataweightingfunction:\newslicealphabet\rightarrow \Xi$ 
be a weighting function on $\newslicealphabet$. Then for each weight $a\in \Xi$, one can construct in time $|\newslicealphabet|\cdot |\Xi|^{O(r)}$ 
a tree-automaton $\treeAutomaton(\newslicealphabet,\automataweightingfunction,a)$ such that 
$\lang(\treeAutomaton(\newslicealphabet,\automataweightingfunction,a))= \{t\in \terms(\newslicealphabet)\;|\; \automataweightingfunction(t) = a\}$. 
\end{lemma}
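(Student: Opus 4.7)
The plan is to construct a bottom-up tree-automaton $\treeAutomaton(\newslicealphabet,\automataweightingfunction,a)$ whose state set is $\Xi$ itself (augmented with a sink state if convenient), and whose invariant is that a state $b\in \Xi$ is reached by a term $t$ if and only if $\automataweightingfunction(t)=b$. With this invariant in place, it will suffice to declare $a$ to be the unique accepting state.

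Concretely, I would set $Q=\Xi$, $Q_F=\{a\}$, and define the transition relation as follows. For each constant $f\in \newslicealphabet_0$, include the transition $(f,\automataweightingfunction(f))\in \Delta_0$. For each $i\in\{1,\dots,r\}$, each $f\in \newslicealphabet_i$, and each tuple $(b_1,\dots,b_i)\in \Xi^i$, include the transition $(b_1,\dots,b_i,f,\automataweightingfunction(f)+b_1+\cdots+b_i)\in \Delta_i$. Here the sum is taken in the semigroup $\Xi$. Note that for every choice of $f$ and children states there is exactly one successor state, so the automaton is deterministic and complete, and the construction time is dominated by listing the transitions, giving $|\newslicealphabet|\cdot |\Xi|^{O(r)}$ as claimed.

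Correctness is by a straightforward structural induction on terms $t\in \terms(\newslicealphabet)$, showing that the unique state reached by $t$ in $\treeAutomaton(\newslicealphabet,\automataweightingfunction,a)$ equals $\automataweightingfunction(t)$. The base case is immediate from the definition of $\Delta_0$ and from the inductive clause of $\automataweightingfunction$ for constants. For the inductive step, if $t=f(t_1,\dots,t_{\arity(f)})$ and each $t_j$ reaches the state $\automataweightingfunction(t_j)$, then the transition defined above sends $(t_1,\dots,t_{\arity(f)})$ under $f$ to the state $\automataweightingfunction(f)+\sum_{j=1}^{\arity(f)}\automataweightingfunction(t_j)$, which matches exactly the inductive clause defining $\automataweightingfunction(t)$. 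Hence $t\in \lang(\treeAutomaton(\newslicealphabet,\automataweightingfunction,a))$ iff the unique reached state is $a$, iff $\automataweightingfunction(t)=a$.

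There is no real obstacle here; the only minor subtlety is ensuring that the sum $\automataweightingfunction(f)+b_1+\cdots+b_i$ is well defined, which is guaranteed because $\Xi$ is a semigroup (associativity suffices, and the paper's convention that $\Omega$-style semigroups come with an identity element makes this convention even cleaner, though it is not strictly needed). The bound on the size and construction time then follows just by counting the transitions symbol by symbol: there are at most $|\newslicealphabet_i|\cdot |\Xi|^i$ transitions for each arity $i$, summing to $|\newslicealphabet|\cdot |\Xi|^{O(r)}$.
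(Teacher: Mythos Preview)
Your proposal is correct and follows essentially the same approach as the paper: both construct the automaton with state set (a copy of) $\Xi$, unique final state $a$, and transitions $(b_1,\dots,b_{\arity(f)},f,\automataweightingfunction(f)+\sum_j b_j)$, then verify by induction that the state reached by $t$ is exactly $\automataweightingfunction(t)$. The paper phrases the induction over depth rather than over term structure, but the argument and the size/time analysis are the same.
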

\begin{proof}
Let $\treeAutomaton = (Q,\newslicealphabet,Q_{F},\Delta)$ where
$$Q=\{\astate_b\;|\; b\in \Xi\} \mbox{\hspace{1cm}} Q_F=\{\astate_a\}$$
$$\Delta = \{(f,\astate_{\automataweightingfunction(f)})\;|\; f\in \newslicealphabet_0\} \cup 
\{(\astate_{b_1},...,\astate_{b_{\arity(f)}},f,\astate_b) \; |\; f\in \newslicealphabet,\; \arity(f)\geq 1, \; b=\automataweightingfunction(f) + \sum_{i=1}^{\arity(f)} b_i\}$$ 

We will show that $\treeAutomaton$ generates precisely the terms in $\terms(\newslicealphabet)$ of weight $a$. 
First, we claim that for each $b\in \Xi$ and each $i\in \N$, 
\begin{equation}
\label{equation:LevelsWeight}
\lang(\treeAutomaton,\astate_b,i) = \{t\in \terms(\newslicealphabet) \; |\; \automataweightingfunction(t) = b,\; t \mbox{ has depth }i \}.
\end{equation}

The proof of this claim follows by induction on $i$. Equation \ref{equation:LevelsWeight} is true for $i=1$, 
since in this case $\lang(\treeAutomaton,\astate_b,1)=\{f\in \newslicealphabet_0\; |\; (f, \astate_b)\in \Delta,\; \automataweightingfunction(f)=b\}$. 
Assume that Equation \ref{equation:LevelsWeight} holds for $i\in \N$. We show that it also holds for $i+1$. By Equation \ref{equation:InductiveDefinition}, 
we have that 
\begin{equation*}
\begin{array}{lcl}
\lang(\treeAutomaton,\astate_b,i+1) = \lang(\treeAutomaton,\astate_b,i) & \cup & 
\{f(t_1,...,t_{\arity(f)})\;|\; (\astate_{b_1},...,\astate_{b_{\arity(f)}},f,\astate_b)\in \Delta_{\arity(f)}',\; \\ 
& & \hspace{2.7cm}  t_j\in \lang(\treeAutomaton,\astate_{b_j},i)\;\} \\
\end{array}
\end{equation*}

By the induction hypothesis, $\automataweightingfunction(t_j)=b_j$ for each $t_j\in \lang(\treeAutomaton, \astate_{b_j},i)$. Therefore 
the weight of $f(t_1,...,t_{\arity(f)})$ is 
$\automataweightingfunction(f) + \sum_{j=1}^{\arity(f)} \automataweightingfunction(t_i) = \automataweightingfunction(f) + \sum_{j=1}^{\arity(f)} b_j$.
Since  $(\astate_{b_1},...,\astate_{b_{\arity(f)}},f,\astate_b)\in \Delta_{\arity(f)}$ if and only if $b=\automataweightingfunction(f) + \sum_{j=1}^{\arity(f)} b_j$, 
our claim is proved. Note that $\astate_a$ is the only final state in $Q_F$. Therefore the language accepted by $\treeAutomaton$ is 

\begin{equation}
\label{equation:Union}
\lang(\treeAutomaton) =  \bigcup_{i\in \N} \lang(\treeAutomaton,\astate_a, i). 
\end{equation}

Since for each $i\in \N$ the language $\lang(\treeAutomaton,\astate_a,i)$ consists of all terms of weight $a$ accepted in 
depth $i$, we have that $\lang(\treeAutomaton)$ is the set of all terms of weight $a$ accepted in any finite depth, proving 
in this way the lemma. 
\end{proof}

\section{Tree Slice Languages}
\label{section:TreeSliceLanguage}

As mentioned in the introduction, the proof of Theorem \ref{theorem:MainTheoremDirectedTreewidth} is based on the framework 
of tree slice languages. We dedicate this section to the introduction of this framework. We start by defining, in Subsection 
\ref{subsection:Slices}, the notion of {\em slice} of arity $r$. Intuitively, a slice of arity $r$ is a digraph whose vertex set is 
partitioned into a center $C$, an out-frontier $F_0$ and $r$ in-frontiers $F_1,...,F_r$. Each such a slice should be 
regarded as a function symbol of arity $r$. Within this point of view, a finite set $\newslicealphabet$ of 
slices with possibly distinct arities can be regarded as a ranked alphabet (Subsection \ref{subsection:SliceAlphabets}).
In Subsection \ref{subsection:GluabilityOfSlices} we introduce a notion of gluability for slices.
A slice $\boldS$ can be glued to a slice $\boldS'$ at frontier $j$ if the out-frontier of $\boldS$ can be matched with the $j$-th in-frontier of $\boldS'$. 
In Subsection \ref{subsection:TermsAndUnitDecompositions} we define {\em unit decompositions}, and {\em tree slice languages}. A unit 
decomposition is a term $\boldT$ over a slice alphabet $\newslicealphabet$ satisfying the property that
each two slices associated with consecutive positions of $\boldT$ can be glued along their matching frontiers. 
A tree slice language $\lang$ is a tree language over a slice alphabet 
$\newslicealphabet$ such that each term $\boldT\in \lang$ is a unit decomposition. A {\em slice tree automaton} is a 
tree automaton $\treeAutomaton$ generating a tree slice language $\lang(\treeAutomaton)$.    In Subsection \ref{subsection:DigraphsUnitDecompositions} 
we show how to associate with each unit decomposition $\boldT$, a digraph $\composedT$ which is intuitively obtained by 
gluing each two consecutive slices of $\boldT$. We can extend this association to slice languages. Namely, the graph language $\lang_{\graph}$
derived from a slice language $\lang$ is the set of all digraphs associated to unit decompositions in $\lang$. 
In Subsection \ref{subsection:Subslices} we will introduce the notion of sub-decompositions
of unit decompositions. Sub-decompositions should be regarded as a slice theoretic analog of the notion of subgraph.
A key idea of this paper is to reduce the problem of counting subgraphs of a digraph to the problem of counting 
sub-decompositions of a unit decomposition. 
In Subsection \ref{subsection:InitialSliceTreeAutomata} we show that given any slice alphabet $\newslicealphabet$, one 
can construct a slice automaton $\treeAutomaton(\newslicealphabet)$ whose slice language consists of all unit decompositions over $\newslicealphabet$. 
Finally, in Subsection \ref{subsection:SliceProjection} we introduce the notion of slice projection, which will be used in many places 
along this paper. 

Our main application for slice languages will be given in Section \ref{section:zSaturationAndCounting} where we will introduce the notion 
of {\em  $z$-saturated tree slice language}. We will use this notion to count subgraphs satisfying interesting properties on digraphs of constant tree-zig-zag number. 
Since the tree-zig-zag number of a digraph is at most a constant times its directed treewidth, we will also be able to count subgraphs 
satisfying interesting properties on digraphs of constant directed treewidth.

\subsection{\bf Slices}
\label{subsection:Slices}

A {\em slice} of arity $r\geq 0$ is a digraph $\boldS = (V,E,s,t,\vertexlabeling,\edgelabeling,[C,F_0,F_1,...,F_r])$ with vertex set 
$V = C\cup F_0 \cup ... \cup F_r$  and edge set $E$. 
The function $s:E\rightarrow V$ associates with each edge $e\in E$ a source 
vertex $e^s$, while the function $t:E\rightarrow V$ associates with each edge $e\in E$ a target vertex $e^t$. We say that $e^s$ and $e^t$
are the endpoints of $e$. The function $\vertexlabeling:C \rightarrow \vertexlabel$ labels each vertex in $C$ with an element 
from a finite set of labels $\vertexlabel$, and $\edgelabeling:  E \rightarrow \edgelabel$ labels each edge in $E$ with an element from a 
finite set of labels $\edgelabel$. We say that $C$ is the center of $\boldS$, $F_0$ is the out-frontier of $\boldS$, 
and for each $j\in \{1,...,r\}$, $F_j$ is the $j$-th in-frontier of $\boldS$. A slice is subject to the following restrictions.

\begin{enumerate}[s1)]
	\item The sets $C,F_0,...,F_r$ are pairwise disjoint. For concreteness, we assume that $C$ is either empty or 
		$C=\{1,...,n\}$ for some $n\in \N$, and that for each $j\in \{0,...,r\}$, the frontier $F_j$ is either 
		empty or $F_j=\{[j,i_{j,1}],...,[j,i_{j,c_j}]\}$ for some $c_j\in \N$, and $i_{j,1}< ... < i_{j,c_j}\in \N$. 
	\item No edge in $E$ has both endpoints in the same frontier.
	\item Each frontier vertex $v\in F_0\cup F_1 \cup ...\cup F_r$ is the endpoint of a unique edge $e$. 
\end{enumerate}

We say that $\boldS$ is a {\em unit slice} if the center $C$ has at most one vertex. In other words in a unit 
slice the center is either empty or the singleton $\{1\}$. In this work we will only be interested in 
unit slices. We say that a frontier $F_j$ is normalized if $i_{j,k}=k$ for each $k\in \{1,...,c_j\}$. A slice 
$\boldS$ is {\em normalized} if all of its frontiers are normalized.
Non-normalized slices will play an important role in Subsection \ref{subsection:Subslices} when considering 
the notion of sub-slice. 
A slice of arity $0$ 
is a slice with no in-frontier. In this case  $\boldS = (V,E,s,t,\vertexlabeling,\edgelabeling,[C,F_0])$ with $V=C \cup F_0$. 
A slice of arity $0$ should not be confused with a slice in which all in-frontiers are empty. Rather, in such a slice the in-frontiers simply do not 
exist.  
In Figure \ref{figure:Slices} we depict three examples of unit slices. 

\begin{figure}[!hf] 
\centering 
\includegraphics[scale=0.35]{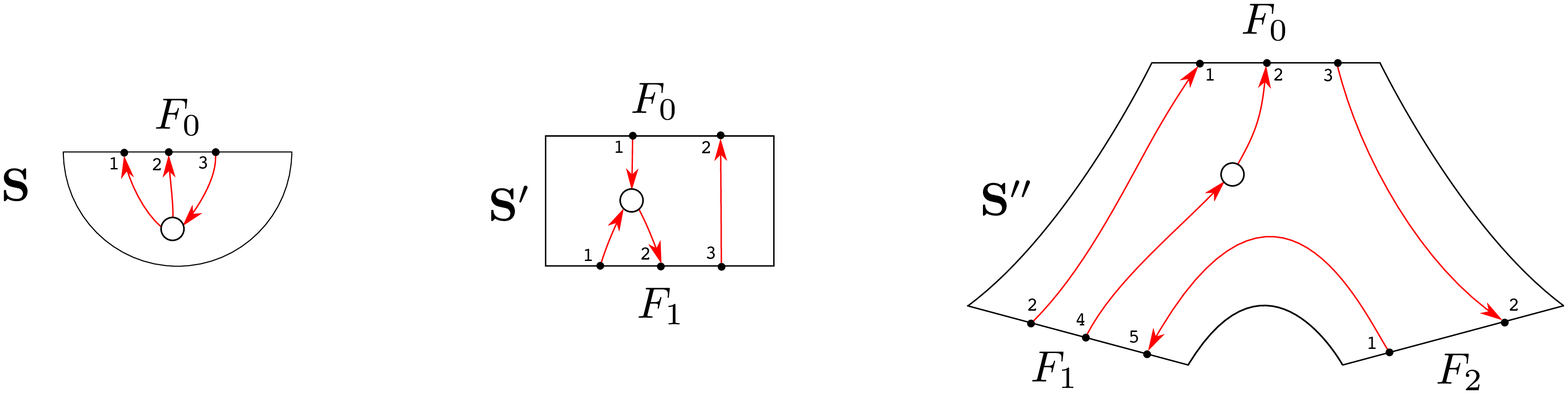}
\caption{$\boldS$ is a slice of arity $0$, $\boldS'$ is a slice of arity $1$ and $\boldS''$ is a slice of arity $2$. The out-frontier 
$F_0$ is always drawn on the top. The in-frontiers $F_1,...,F_r$ are drawn at the bottom and in increasing order from left to right.
For each frontier vertex $[j,i]$ we draw a black dot at frontier $j$ and write the number $i$ near from it. Within each frontier, the black 
dots are drawn in increasing order from left to right. The center vertex, if any, is drawn in the center of each box. The edges are drawn in red. 
The slices $\boldS$ and $\boldS'$ are normalized. The slice $\boldS''$ is not normalized because $F_1 = \{[1,2],[1,4],[1,5]\}$, instead of 
$\{[1,1],[1,2],[1,3]\}$.}
\label{figure:Slices}
\end{figure}

\subsection{\bf Slice Alphabets}
\label{subsection:SliceAlphabets}

A slice alphabet is simply a finite set $\newslicealphabet$ of slices, possibly with different arities. 
Slice alphabets will be used to define terms over slices and to provide sliced representations of digraphs. 
Let $\boldS$ be a slice with frontiers $F_j = \{[j,i_{j,1}],...,[j,i_{j,c_j}]\}$ for $j\in \{0,...,r\}$. 
The width $\width(\boldS)$ of $\boldS$ is the size of its largest frontier, i.e., $\width(\boldS) = \max_{j} \{c_j\}$. 
The {\em extra-width} $\extrawidth(\boldS)$ of $\boldS$ is the greatest number occurring in a frontier of $\boldS$. More precisely, 
$\extrawidth(\boldS) = \max_{j} \{i_{j,c_j}\}$. For instance, in Figure \ref{figure:Slices} the extra-width of the slice $\boldS''$ is $5$. 
For any $c,q,r\in \N$ with $q\geq c$, and any finite sets of labels $\vertexlabel$ and $\edgelabel$, we let $\newslicealphabet_r(c,q,\vertexlabel,\edgelabel)$ 
denote the set of all unit slices of arity $r$, width at most $c$, extra-width at most $q$, whose center vertex (if any) is labelled with an element of $\vertexlabel$, 
and whose edges are labelled with elements of $\edgelabel$. 
Now consider the set 
$$\newslicealphabet(c,q,\vertexlabel,\edgelabel) = \newslicealphabet_0(c,q,\vertexlabel,\edgelabel) \cup 
\newslicealphabet_1(c,q,\vertexlabel,\edgelabel) \cup ...\cup \newslicealphabet_r(c,q,\vertexlabel,\edgelabel).$$
We can view $\newslicealphabet(c,q,\vertexlabel,\edgelabel)$ as a ranked alphabet by regarding each 
slice in $\newslicealphabet_j(c,q,\vertexlabel,\edgelabel)$ as a function symbol of arity $j$. 
We let $\newslicealphabet_j(c,\vertexlabel,\edgelabel)$ denote the subset of $\newslicealphabet_j(c,c,\vertexlabel,\edgelabel)$ consisting only of normalized slices and 
set $\newslicealphabet(c,\vertexlabel,\edgelabel) = \newslicealphabet_0(c,\vertexlabel,\edgelabel) \cup \newslicealphabet_1(c,\vertexlabel,\edgelabel) \cup ...\cup \newslicealphabet_r(c,\vertexlabel,\edgelabel)$. In this work we are only interested in slices of arity at most $2$. Therefore, when considering 
the slice alphabets $\newslicealphabet(c,q,\vertexlabel,\edgelabel)$ and $\newslicealphabet(c,\vertexlabel,\edgelabel)$ defined above, we assume that $r=2$.

\subsection{\bf Gluability of Slices}
\label{subsection:GluabilityOfSlices}

\begin{figure}[!hf] 
\centering 
\includegraphics[scale=0.3]{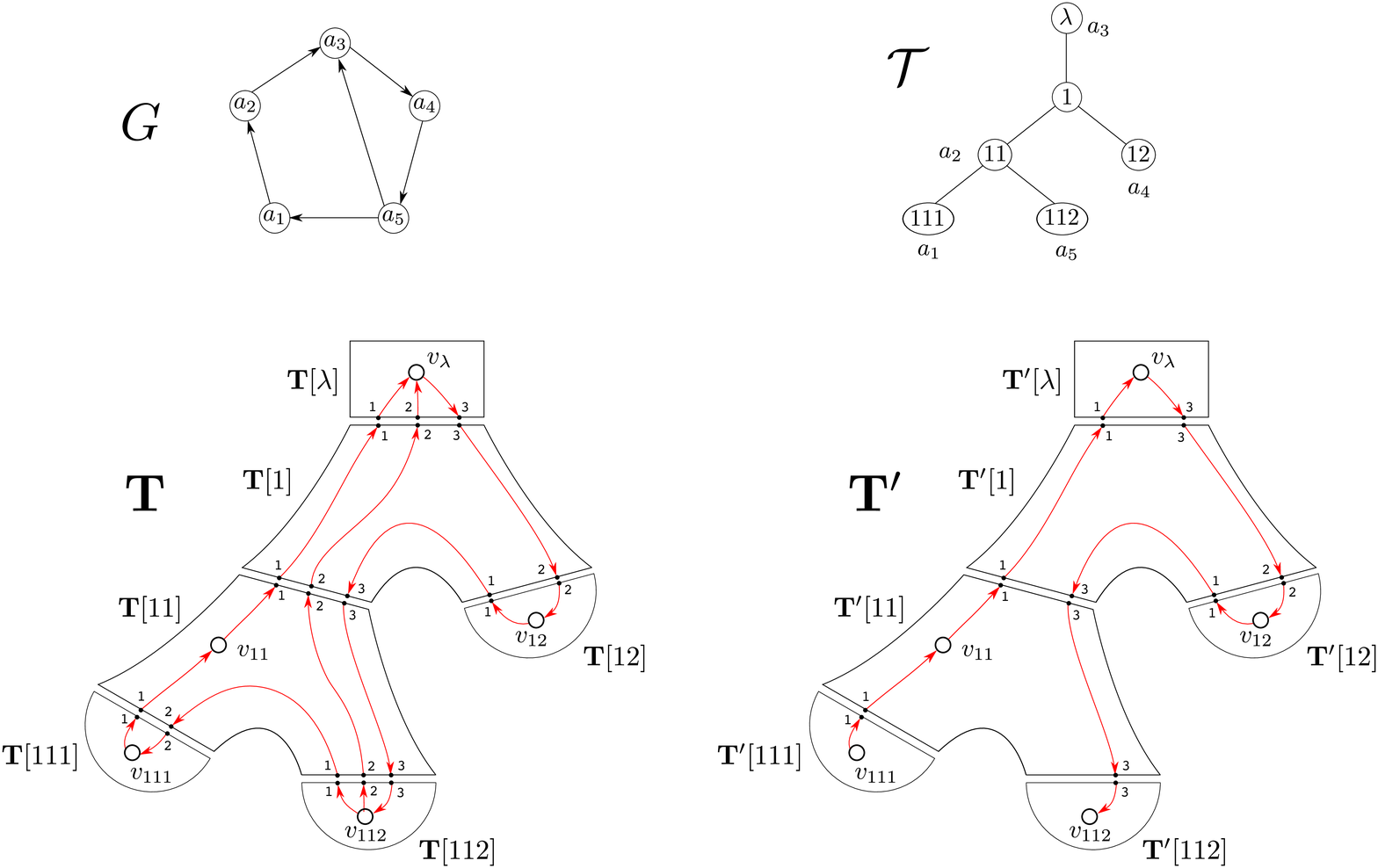}
\caption{$G$ is a digraph, $\olivetreedecomposition=(N,F,\mathfrak{m})$ is an olive-tree decomposition of $G$ and 
$\boldT$ is a unit-decomposition of $G$. Note that $\mathfrak{m}(a_1) = 111$,   $\mathfrak{m}(a_2) = 11$,  $\mathfrak{m}(a_3) = \lambda$,
  $\mathfrak{m}(a_4) = 12$  and   $\mathfrak{m}(a_1) = 112$. The unit decomposition $\boldT$ is compatible with $\olivetreedecomposition$ since
the map defined by  $a_1\rightarrow v_{111}$, $a_2\rightarrow v_{11}$, $a_3\rightarrow v_{\lambda}$, 
$a_4\rightarrow v_{12}$ and $a_5\rightarrow v_{112}$ is an isomorphism
from $G$ to $\stackrel{\circ}{\boldT}$. The unit decomposition $\boldT'$ is a sub-decomposition of $\boldT$.} 
\label{figure:UnitDecomposition}
\end{figure}

If $\boldS$ is a slice and $[j,i]$ is a vertex in the $j$-th frontier of $\boldS$, then we denote by 
$e(\boldS,j,i)$ the unique edge of $\boldS$ that has $[j,i]$ as endpoint. 
Let $\boldS = (V,E,\vertexlabeling,\edgelabeling,[C,F_0,F_1,...,F_r])$ and $\boldS' = (V',E',\vertexlabeling',\edgelabeling',[C',F_0',F_1', ...,F_r'])$ 
be two slices in $\newslicealphabet(c,q,\vertexlabel,\edgelabel)$. We say that $\boldS$ can be glued to $\boldS'$ at frontier $j$, for $1\leq j\leq r$, if the out-frontier 
of $\boldS$ can be coherently matched with the $j$-th in-frontier of $\boldS'$. Formally, $\boldS$ can be 
glued to $\boldS'$ at frontier $j$ if the following conditions are satisfied.

\begin{enumerate}
	\renewcommand{\theenumi}{g\arabic{enumi}}
	\item\label{gluingone} For each $i\in \{1,...,q\}$, $[0,i]\in F_0$ if and only if $[j,i]\in F_j'$. 
	\item\label{gluingtwo} $\edgelabeling(e(\boldS,0,i)) = \edgelabeling(e(\boldS',j,i))$. 
	\item\label{gluingthree} Either $[0,i]$ is the target of $e(\boldS,0,i)$ and $[j,i]$ is the source of $e(\boldS',j,i)$
		or $[0,i]$ is the source of $e(\boldS,0,i)$ and $[j,i]$ is the target of $e(\boldS',j,i)$.
\end{enumerate}

Intuitively, Condition \ref{gluingone} says that the vertex $[0,i]$ in the out-frontier of $\boldS$  is matched with the 
vertex $[j,i]$ in the $j$-th in-frontier of $\boldS'$. Condition \ref{gluingtwo} says that the unique edge of $\boldS$ having $[0,i]$
as endpoint has the same label as the unique edge of $\boldS'$ having $[j,i]$ as endpoint. Finally, Condition \ref{gluingthree}
says that these edges must also agree in direction. For instance, in Figure \ref{figure:UnitDecomposition}, the slice 
$\boldT[11]$ can be glued to the slice $\boldT[1]$ at frontier $1$. While $\boldT[12]$ can be 
glued to $\boldT[1]$ at frontier $2$. 

\subsection{\bf Terms over Slices, Unit Decompositions and Tree Slice Languages}
\label{subsection:TermsAndUnitDecompositions}

As observed in Subsection \ref{subsection:SliceAlphabets}, a slice alphabet $\newslicealphabet$  can be 
regarded as a ranked alphabet where each slice $\boldS\in \newslicealphabet$ of arity $r$ is 
a function symbol of arity $r$. In this paper $\newslicealphabet$ will be typically the 
slice alphabet $\newslicealphabet(c,q,\vertexlabel,\edgelabel)$ or the normalized slice alphabet $\newslicealphabet(c,\vertexlabel,\edgelabel)$, both 
defined in Subsection \ref{subsection:SliceAlphabets}.
We let $\terms(\newslicealphabet)$ denote the set of all terms 
formed with slices from $\newslicealphabet$. 
In this work however we will be only interested on terms over $\newslicealphabet$ that can give rise 
to digraphs. These terms are called {\em unit decompositions}.  

\begin{definition}[Unit Decomposition]
\label{definition:UnitDecomposition}
Let $\newslicealphabet$ be an alphabet of unit slices. A term $\boldT\in \terms(\newslicealphabet)$ is a {\em unit pre-decomposition} if for each 
two consecutive positions $p, pj\in \positions(\boldT)$, the slice $\boldT[pj]$ can be glued to the slice $\boldT[p]$ at frontier  $j$. A term 
$\boldT$ is a {\em unit decomposition} if it is a unit pre-decomposition in which the slice $\boldT[\lambda]$ at the root of $\boldT$ has empty out-frontier. 
\end{definition}

The width $w(\boldT)$ of a unit decomposition $\boldT$ is the maximum width of a slice occurring in it.
A unit decomposition is normalized if for each position $p\in \positions(\boldT)$
the slice $\boldT[p]$ is normalized. For instance, the unit decomposition $\boldT$ in Figure \ref{figure:UnitDecomposition} 
is normalized while the unit decomposition $\boldT'$ in the same figure is not. 

We let $\lang(\newslicealphabet)$ be the set of all unit decompositions in $\terms(\newslicealphabet)$. 
A tree slice language over $\newslicealphabet$ is any subset $\lang$ of $\lang(\newslicealphabet)$. 
We say that a tree slice language $\lang\subseteq \lang(\newslicealphabet)$ is normalized if all unit 
decompositions in $\lang$ are normalized.  We will see in the next subsection 
that with each unit decomposition $\boldT$ one can associate a digraph $\composedT$ which is intuitively obtained 
by gluing each two consecutive slices in $\boldT$. Thus with any slice language $\lang$ one can associate a graph 
language $\lang_{\graph}$ consisting of all digraphs that correspond to unit decompositions in $\lang$. 

Of particular importance to us are the slice languages that can be effectively represented via tree-automata 
over slice alphabets. We call these automata {\em slice tree-automata}.

\begin{definition}[Slice Tree-Automaton]
\label{definition:SliceTreeAutomaton}
Let $\newslicealphabet$ be a slice alphabet. We say that a tree-automaton $\treeAutomaton=(Q,\newslicealphabet,Q_{F},\Delta)$ over $\newslicealphabet$ 
is a {\em slice tree-automaton} if for each term $\boldT\in \lang(\treeAutomaton)$, $\boldT$ is a unit decomposition over $\newslicealphabet$. 
\end{definition}

In other words, $\treeAutomaton$ is a slice tree-automaton if $\lang(\treeAutomaton)\subseteq \lang(\newslicealphabet)$. In this case 
we say that $\lang(\treeAutomaton)$ is the slice language generated by $\treeAutomaton$. We say that a slice tree automaton $\treeAutomaton$ 
is normalized if the slice language $\lang(\treeAutomaton)$ is normalized.

\subsection{\bf Digraphs associated with Unit Decompositions}
\label{subsection:DigraphsUnitDecompositions}

Each unit decomposition $\boldT\in \lang(\newslicealphabet(c,q,\vertexlabel,\edgelabel))$  can be 
associated with a digraph $\composedT$ which is intuitively obtained by gluing together each two 
consecutive slices in $\boldT$. For instance, gluing the slices of the unit decomposition 
$\boldT$ of Figure \ref{figure:UnitDecomposition} we get the digraph $G$. To make this notion 
of gluing more precise, it will be convenient to define the notion of {\em sliced edge sequence}. 
Intuitively, each edge $e$ of the digraph $\composedT$ will be defined with basis on a sliced 
edge sequence that contains all "sliced parts" of $e$.
Below, $\support(\boldT)$ denotes the set of all positions in $\positions(\boldT)$ for 
which the slice $\boldT[p]$ has non-empty center. 

\begin{definition}[Sliced Edge Sequence]
\label{definition:SlicedEdgeSequence}
Let $\boldT$ be a unit decomposition over a slice alphabet $\newslicealphabet$. 
Let $p,p'$ be two positions in $\support(\boldT)$. 
A sliced edge sequence from $p$ to $p'$ is a sequence  
\begin{equation}
\label{equation:slicedEdgeSequence}
K\equiv(p_1,a_{1},e_{1},b_{1})(p_2,a_{2},e_{2},b_{2})...(p_n,a_{n},e_{n},b_{n})
\end{equation}
\noindent where $p_1=p$, $p_n=p'$,  and the following conditions are satisfied. 
\begin{enumerate}
	\itemsep0.35em
	\item \label{item:edgesequence1} For each $i\in \{1,...,n\}$, $e_{i}$ is an edge in $\boldT[p_i]$ with source $a_{i}$ and target $b_{i}$.
	\item \label{item:edgesequence2} $a_{1}$ is the center vertex of $\boldT[p_1]$ and $b_{n}$ is the center vertex of $\boldT[p_{n}]$. 
	\item \label{item:edgesequence3} For each $i\in \{1,...,n-1\}$, there is a $j$ such that either $p_i = p_{i+1} j$ or $p_{i+1} = p_i j$.
	\item \label{item:edgesequence4} If $p_{i} = p_{i+1} j$ then for some $k\in \{1,...,q\}$, $b_i = [0,k]$ and $a_{i+1} = [j,k]$.
	\item \label{item:edgesequence5} If $p_{i+1} = p_i j$ then for some $k\in \{1,...,q\}$, $b_i= [j,k]$ and $a_{i+1} = [0,k]$. 
\end{enumerate}
\end{definition}

We note that Conditions \ref{item:edgesequence1}-\ref{item:edgesequence5} of Definition \ref{definition:SlicedEdgeSequence} together with the fact that $\boldT$ is a unit 
decomposition ensures that the $p_i\neq p_j$ for $i\neq j$. 
To illustrate Definition \ref{definition:SlicedEdgeSequence} we note that in the unit decomposition $\boldT$ of Figure 
\ref{figure:UnitDecomposition} there is a sliced edge sequence from position $\lambda$ to position $12$, a sliced edge sequence from $12$ to $112$ and so 
on. Intuitively, each sliced edge sequence $K$ gives rise to an edge $e_K$ in the digraph $\composedT$ that is obtained by gluing 
all of its sliced parts $e_{1},...,e_{n}$. Condition \ref{item:edgesequence1} says that $e_{i}$ is the sliced part of $e_K$ 
lying at the slice $\boldT[p_i]$. Condition
\ref{item:edgesequence2} says that the source of the first sliced part of $e_K$ is the center vertex of $\boldT[p_1]$ and the target of the last 
sliced part of $e_K$ is the center vertex of $\boldT[p_n]$. Condition \ref{item:edgesequence3} says that for each $i\in \{1,...,n-1\}$, $e_{i}$ and $e_{{i+1}}$ 
lie in neighboring slices of $\boldT$. If $p_{i+1}=p_i j$ then the edge $e_i$ is intuitively directed towards the $j$-th in-frontier of 
$\boldT[p_i]$. In this case, Condition \ref{item:edgesequence4}, says that the target of $e_{i}$ lies in the $j$-th in-frontier of $\boldT[p_i]$ 
while the source of $e_{{i+1}}$ lies in the out-frontier of $\boldT[p_{i+1}]$. 
On the other hand, if $p_i=p_{i+1}j$ then the edge $e_{i}$ is intuitively directed towards the out-frontier of $\boldT[p_{i}]$. 
In this case, Condition \ref{item:edgesequence5} says that the target of $e_{i}$ lies in 
the out-frontier of $\boldT[p_i]$ and the source of $e_{{i+1}}$ lies in the $j$-th in frontier of $\boldT[p_{i+1}]$.

Let $\boldT$ be a unit decomposition and for each $p\in \positions(\boldT)$ let $\boldT[p] = (V_p,E_p,\vertexlabeling_p,\edgelabeling_p)$ be the slice of 
$\boldT$ at position $p$. The digraph $\composedT = (V,E,\vertexlabeling,\edgelabeling)$ associated with $\boldT$ is defined as follows. 
First, for each position $p\in \support(\boldT)$, we add a vertex $v_p$ to the vertex set $V$. Subsequently, for each two positions 
$p,p'\in \positions(\boldT)$ and each sliced edge sequence $K$ from $p$ to $p'$ we add an edge $e_K$ to $E$ and set its source as $e_K^s = v_p$ and its target 
as $e_K^t = v_{p'}$. Observe that multiple edges are allowed in $\composedT$ since for some pair of positions $p,p'$ there may exist 
more than one sliced edge sequence from $p$ to $p'$. For each $p\in \positions(\boldT)$, the 
vertex $v_p$ receives the same label as the center vertex of $\boldT[p]$. In other words, $\vertexlabeling(v_p) = \vertexlabeling_p(1)$ 
\footnote{We recall that if the center of a unit slice is not empty then the center is the singleton $\{1\}$.}. We note that  if 
$K$ is a sliced edge sequence as defined in Equation \ref{equation:slicedEdgeSequence} then Conditions \ref{item:edgesequence4} and \ref{item:edgesequence5} 
of Definition \ref{definition:SlicedEdgeSequence} together with Condition \ref{gluingtwo} (of Subsection \ref{subsection:GluabilityOfSlices}) 
guarantee that all edges $e_{1},e_{2},...,e_{n}$ have the same label.
Thus the label of the edge $e_K$ is set as $\edgelabeling(e_K) = \edgelabeling_{p_1}(e_{1})=...=\edgelabeling_{p_n}(e_{n})$. 

If $G=(V,E,\vertexlabeling,\edgelabeling)$ is a digraph where $\vertexlabeling:V \rightarrow \vertexlabel$ and $\edgelabeling:E\rightarrow \edgelabel$
are vertex and edge labeling functions respectively, then for each two vertices $v,v'\in V$ and each label $b\in \edgelabel$, 
we let $\overrightarrow{E}(v,v',b) = \{e\;|\;e^s = v,\;e^t = v', \edgelabeling(e) = b\}$ denote the set of all edges in $E$ which have 
$v$ as source vertex, $v'$ as target vertex and $b$ as label. 
An isomorphism from a digraph $G_1=(V_1,E_1,\vertexlabeling_1,\edgelabeling_1)$ to a digraph $G_2 = (V_2,E_2,\vertexlabeling_2,\edgelabeling_2)$
is a bijection $\phi:V_1\rightarrow V_2$ from $V_1$ to $V_2$ such that for each $v\in V_1$, 
$\rho_1(v) = \rho_2(\phi(v))$, and such that for each two vertices $v,v'\in V_1$ and each label $b\in \edgelabel$,
$|\overrightarrow{E}_1(v,v',b)| = |\overrightarrow{E}_2(\phi(v),\phi(v'),b)|$. 
A canonization function for finite digraphs is a function $[\;\cdot\;]$ satisfying two properties. First, for every digraph $G$, 
$[G]$ is a digraph isomorphic to $G$. Second, for every two digraphs $G_1$ and $G_2$, $G_1$ is isomorphic to $G_2$ if and 
only if $[G_1]=[G_2]$. We say that $[G]$ is the canonical form of $G$. In this paper we let $[\;\cdot\;]$ be an arbitrary but fixed canonization
function for finite digraphs.

We say that a term $\boldT$ is a unit decomposition of a digraph $G$ if the digraph $\composedT$ is isomorphic to $G$. 
Since with any unit decomposition $\boldT$ one can associate a digraph $\composedT$, with any tree slice language $\lang$ one 
can associate a possibly infinite family of digraphs.
If $\lang$ is a tree slice language over an alphabet $\newslicealphabet$ of unit slices, then the graph language derived from $\lang$ is the set $\lang_{\graph}$ 
of canonical forms of digraphs obtained by composing the slices of each unit decomposition in $\lang$. Formally, 
%\vspace{-2pt}
\begin{equation}
\label{equation:GraphLanguage}
\lang_{\graph} = \{[\composedT]\;|\; \boldT\in \lang\}.
\end{equation}
For convenience, in some places we may simply say that a digraph 
$H$ belongs to $\lang_{\graph}$ instead of saying that $[H]$ belongs
to $\lang_{\graph}$. If $\treeAutomaton$ is a slice tree automaton then 
we denote by $\lang_{\graph}(\treeAutomaton)$ the graph language derived 
from $\lang(\treeAutomaton)$.

\subsection{Sub-slices and Sub-Decompositions}
\label{subsection:Subslices}

In this subsection we introduce the notions of {\em sub-slice} and of {\em sub-decomposition}. Intuitively, 
the notion of sub-decomposition is a sliced version of the notion of subgraph.
Let $\boldS=(V,E,\vertexlabeling,\edgelabeling,[C,F_0,F_1,...,F_r])$ be a slice of arity $r$. We say that a slice 
$\boldS'$ is a {\em sub-slice} of $\boldS$ if $\boldS'=(V',E',\vertexlabeling',\edgelabeling',[C',F_0',F_1',...,F_r'])$ where $V'\subseteq V$, 
$E'\subseteq E$, $\vertexlabeling'=\vertexlabeling|_{V'}$, $\edgelabeling'= \edgelabeling|_{E'}$, $C'\subseteq C$ and 
$F_j'\subseteq F_j$ for each $j\in {0,1,...,r}$. In other words, a sub-slice of $\boldS$ is a subgraph of $\boldS$ that 
is also a slice. Labels of vertices and edges in a sub-slice are inherited from the original slice. 
We note that even if $\boldS$ is a normalized slice, a sub-slice $\boldS'$ of $\boldS$ may not be 
normalized. For instance, in Figure \ref{figure:UnitDecomposition}, the slice $\boldT'[1]$ is a sub-slice of $\boldT[1]$. 
Note that $\boldT'[1]$ is not normalized even though $\boldT[1]$ is. We also call attention to the fact that a sub-slice 
has always the same arity as the original slice, and that the empty slice $\emptyslice_r$ of arity $r$ is a sub-slice of any slice of 
arity $r$.

\begin{definition}[Sub-decomposition]
\label{definition:Subdecomposition}
Let $\newslicealphabet$ be a slice alphabet and let $\boldT$ and $\boldT'$ be unit decompositions in $\lang(\newslicealphabet)$. 
We say that $\boldT'$ is a sub-decomposition of $\boldT$ if the following conditions are satisfied.
\begin{enumerate}[i)]
	\item\label{subdecomposition1} $\positions(\boldT)=\positions(\boldT')$,
	\item\label{subdecomposition2} for each $p\in \positions(\boldT)$ the unit slice $\boldT'[p]$ is a sub-slice of $\boldT[p]$, 
	\item\label{subdecomposition3} for each two consecutive positions $p,p\lastposition \in \positions(\boldT)$ the slice 
		$\boldT'[p\lastposition]$ can be glued to the slice $\boldT'[p]$ at frontier $\lastposition$.
\end{enumerate}
\end{definition}

Conditions \ref{subdecomposition1}-\ref{subdecomposition3} of Definition \ref{definition:Subdecomposition} guarantee that if $\boldT'$ is a sub-decomposition of $\boldT$ then 
the digraph $\composedTprime$ is a subgraph of $\composedT$. 
We emphasize that $\composedTprime$ is an actual subgraph of $\composedT$ and not merely isomorphic to a subgraph of $\composedT$. 
Conversely, for each subgraph $H$ of $\composedT$ there is a sub-decomposition $\boldT'$ of $\boldT$ for which $\composedTprime = H$.
Again at this point we are speaking about strict equality, and not merely isomorphism.  
Thus each sub-decomposition of $\boldT$ unequivocally corresponds to a subgraph of $\composedT$. 
A crucial step towards the proof of Theorem \ref{theorem:MainTheoremDirectedTreewidth} will consist in reducing the 
problem of counting subgraphs of a digraph to the problem of counting sub-decompositions of a unit 
decomposition.

\subsection{Initial Slice Tree-Automata}
\label{subsection:InitialSliceTreeAutomata}

In this section we will show that for each slice alphabet $\newslicealphabet$ one can construct a deterministic slice tree-automaton 
$\treeAutomaton(\newslicealphabet)$ whose slice language consists of all unit decompositions that can be formed with elements from $\newslicealphabet$.
We say that $\treeAutomaton(\newslicealphabet)$ is the initial tree-automaton for $\newslicealphabet$. 
Before proceeding, we define the notion of identity slice, which will be used below in the proof of Proposition \ref{proposition:InitialAutomaton},
 and later, in the proof of Lemma \ref{lemma:SubgraphsC}.
An {\em identity slice} in $\newslicealphabet(c,q,\vertexlabel,\edgelabel)$ 
is a slice $\identitySlice=(V,E,\vertexlabeling,\edgelabeling,[C,F_0,F_1])$ of arity $1$ with empty center
($C=\emptyset$) in which all edges are "parallel". In other words for each $e\in E$, there exists a $k\in \{1,...,q\}$ such that either $e^s = [0,k]$ and 
$e^t=[1,k]$ or $e^s=[1,k]$ and $e^t=[0,k]$.  

\begin{figure}[!hf] 
\centering 
\includegraphics[scale=0.25]{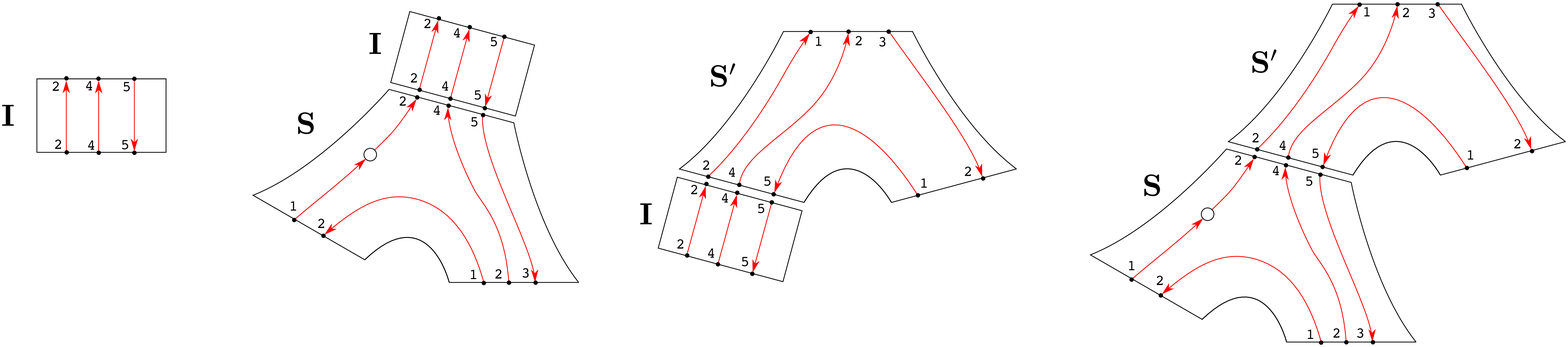}
\caption{An identity slice $\identitySlice$ and two other slices $\boldS$ and $\boldS'$. A slice $\boldS$ can be glued to $\boldS'$ at 
frontier $j$ if and only if there is a unique identity slice $\identitySlice$ such that $\boldS$ can be glued to $\identitySlice$, and
such that $\identitySlice$ can be glued to $\boldS'$ at $\mbox{frontier $j$}$. In this case $\identitySlice = \identitySlice(\boldS)$. 
}
\label{figure:IdentitySlice}
\end{figure} 

Our only interest in identity slices stems from the 
fact that for each slice $\boldS$ of arity $r$ there is a unique identity slice $\identitySlice$ for which 
$\boldS$ can be glued to $\identitySlice$. We denote this unique identity slice by $\identitySlice(\boldS)$. 
Additionally, for each $j\in \{1,...,r\}$ there exists a unique identity slice 
$\identitySlice_j$ such that $\identitySlice_j$ can be glued to $\boldS$ at frontier $j$. This implies that a
slice $\boldS$ can be glued to a slice $\boldS'$ at $\mbox{frontier $j$}$ if and only if $\identitySlice(\boldS)$ can be 
glued to $\boldS'$ at frontier $j$ (See Figure \ref{figure:IdentitySlice}). We observe that we consider $\emptyslice_1$, the empty slice 
of arity one, as an identity slice.

\begin{proposition}[Initial Slice Tree-Automaton]
\label{proposition:InitialAutomaton}
Let $\newslicealphabet$ be a slice alphabet and let $r$ be the maximum arity of a slice in $\newslicealphabet$. Then 
one can construct in time $O(|\newslicealphabet|)$ a slice tree-automaton $\treeAutomaton(\newslicealphabet)$ whose 
slice language $\lang(\treeAutomaton(\newslicealphabet))$ is  the set of all unit decompositions over $\newslicealphabet$. 
\end{proposition}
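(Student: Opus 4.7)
The plan is to build the automaton $\treeAutomaton(\newslicealphabet) = (Q, \newslicealphabet, Q_F, \Delta)$ whose states encode the shape of the out-frontier of the root slice of the subterm being processed. Concretely, let $\mathcal{I}$ be the collection of identity slices that arise either as $\identitySlice(\boldS)$ for some $\boldS \in \newslicealphabet$, or as the unique identity slice that can be glued to some in-frontier of a slice in $\newslicealphabet$. Set $Q = \{\astate_{\identitySlice} \mid \identitySlice \in \mathcal{I}\}$ together with the accepting set $Q_F = \{\astate_{\emptyslice_1}\}$, where $\emptyslice_1$ denotes the empty identity slice of arity $1$, which also serves as an identity slice with empty out-frontier.

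The transitions are defined as follows. For each $\boldS \in \newslicealphabet$ of arity $r$, let $\identitySlice_1, \ldots, \identitySlice_r$ be the unique identity slices that can be glued to $\boldS$ at frontiers $1, \ldots, r$; such identity slices exist and are unique by the observation recalled just before the statement. Then $\Delta$ contains the single transition $(\astate_{\identitySlice_1}, \ldots, \astate_{\identitySlice_r}, \boldS, \astate_{\identitySlice(\boldS)})$. In particular, each slice of arity $0$ yields a transition $(\boldS, \astate_{\identitySlice(\boldS)})$ into $\Delta_0$.

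The correctness claim I would verify is that a subterm $t$ is assigned the state $\astate_{\identitySlice}$ by $\treeAutomaton(\newslicealphabet)$ if and only if $t$ is a unit pre-decomposition over $\newslicealphabet$ and $\identitySlice(t[\emptystring]) = \identitySlice$. This is proved by induction on the depth of $t$: the base case follows immediately from the arity-$0$ transitions; in the inductive step, for $t = \boldS(t_1, \ldots, t_r)$, the hypothesis that each $t_j$ reaches $\astate_{\identitySlice_j}$ translates, via the equivalence of gluability and matching identity slices, into the statement that the root slice of each $t_j$ can be glued to $\boldS$ at frontier $j$. Combining this with the inductive hypothesis that each $t_j$ is itself a unit pre-decomposition, we obtain that $t$ is a unit pre-decomposition with leading identity slice $\identitySlice(\boldS)$. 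Finally, the choice $Q_F = \{\astate_{\emptyslice_1}\}$ enforces that an accepted term has a root slice with empty out-frontier, which is precisely the extra condition distinguishing a unit decomposition from a unit pre-decomposition in Definition \ref{definition:UnitDecomposition}.

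For the complexity and determinism bounds, observe that $|\mathcal{I}| = O(|\newslicealphabet|)$ since each identity slice in $\mathcal{I}$ is determined by some $\boldS \in \newslicealphabet$ (via $\identitySlice(\boldS)$ or via one of its in-frontiers), and $|\Delta| \leq |\newslicealphabet|$, since each slice in $\newslicealphabet$ contributes exactly one transition. Hence the total construction time is $O(|\newslicealphabet|)$, and $\treeAutomaton(\newslicealphabet)$ is deterministic because the tuple $(\identitySlice_1, \ldots, \identitySlice_r, \identitySlice(\boldS))$ associated with each $\boldS$ is uniquely determined. There is no serious obstacle here; the only conceptual step is recognizing that identity slices form exactly the right state space, as they are tailored to encode the minimal information required to test gluability at a frontier.
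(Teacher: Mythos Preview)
Your proposal is correct and follows essentially the same approach as the paper: both constructions use identity slices as states, take $Q_F=\{\astate_{\emptyslice_1}\}$, add one transition per slice in $\newslicealphabet$, and verify correctness by induction on the height of unit pre-decompositions. Your state set $\mathcal{I}$ is slightly larger than the paper's (you also include identity slices arising from in-frontiers, not just from out-frontiers via $\identitySlice(\boldS)$), which is a harmless and arguably cleaner choice since it guarantees every state referenced in a transition is actually in $Q$.
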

\begin{proof}
We construct the automaton $\treeAutomaton(\newslicealphabet) = (Q,\newslicealphabet,Q_F,\Delta)$ explicitly. 
First, we define the set $\identitySlice(\newslicealphabet) = \{\identitySlice(\boldS)\;|\; \boldS\in \newslicealphabet\}$ which 
consist of all identity slices $\identitySlice$ for which some slice in $\newslicealphabet$ can be glued to $\identitySlice$. The 
set of states $Q$ has one state $\astate_{\identitySlice}$ for each identity slice $\identitySlice$ in $\identitySlice(\newslicealphabet)$. 
The set of final states is the singleton $Q_F=\{\astate_{\emptyslice_1}\}$. The transition relation 
$\Delta$ has one transition $(\astate_{\identitySlice_1},...,\astate_{\identitySlice_r},\boldS,\astate_{\identitySlice(\boldS)})$ for 
each slice $\boldS$ of arity $r$ in $\newslicealphabet$, where for each $j\in \{1,...,r\}$, $\identitySlice_i$ is 
the unique identity slice that can be glued to $\boldS$ at frontier $j$. Observe that since the states $\astate_{\identitySlice_1},...,\astate_{\identitySlice_r}$ and 
$\astate_{\identitySlice(\boldS)}$ are completely determined by $\boldS$, the relation $\Delta$ has $|\newslicealphabet|$ transitions.

By the construction of the transition relation $\Delta$ we have that for each term $\boldT$ accepted by $\treeAutomaton(\newslicealphabet)$
and each two consecutive positions $p,pj$ in $\positions(\boldT)$, the slice $\boldT[pj]$ can be glued 
to the slice $\boldT[p]$ at frontier $j$. Since the unique accepting state is $\astate_{\emptyslice_1}$ we also have that $\boldT[\lambda]$
has empty out-frontier. This implies that each such term $\boldT$ is a unit decomposition. For the converse, let 
$\boldT$ be a unit pre-decomposition over $\newslicealphabet$. We will show by induction on the height of $\boldT$ that 
$\boldT$ reaches the state $\astate_{\identitySlice(\boldT[\lambda])}$. This implies in particular that if $\boldT$ is a unit 
decomposition, then $\boldT$ reaches the unique accepting state $\astate_{\emptyslice_1}$, since in this case $\boldT[\lambda]$ can 
be glued to $\emptyslice_1$. In the base case, let $\boldT$ be a unit pre-decomposition 
of height $0$. Then $\boldT$ consists of a single slice $\boldS$ of $\mbox{arity $0$}$. By definition of $\Delta$, we have that there is a 
transition $(\boldS,\astate_{\identitySlice(\boldS)})\in \Delta$ and therefore $\boldS$ reaches the state $\astate_{\identitySlice(\boldS)}$. 
Now suppose that the claim is valid for every unit pre-decomposition of height at most $h$ and let $\boldT$ be a unit pre-decomposition 
of height $h+1$. Let the slice $\boldT[\lambda]$ at the root of $\boldT$ have arity $r$. By the induction hypothesis, for each 
$i\in \{1,...,r\}$, the subterm $\boldT|_{i}$ rooted at position $i$, reaches the state $\astate_{\identitySlice_i}$ where 
$\identitySlice_i = \identitySlice(\boldT|_{i}[\lambda])$. Since by the construction of $\Delta$ the transition
 $(\astate_{\identitySlice_1},...,\astate_{\identitySlice_r},\boldT[\lambda],\astate_{\identitySlice(\boldT[\lambda])})$ belongs to $\Delta$, 
we have that $\boldT$ reaches the state $\astate_{\identitySlice(\boldT[\lambda])}$. This proves the inductive step. 
\end{proof}

\subsection{Normalizing Projection and Unweighting Projection}
\label{subsection:SliceProjection}

We say that a mapping $\projection:\newslicealphabet  \rightarrow \newslicealphabet'$
between slice alphabets is a slice projection if $\projection$ is arity preserving, gluing preserving, and empty-frontier preserving. 
By arity preserving we mean that $\arity(\boldS) = \arity(\projection(\boldS))$. By gluing preserving we mean that if $\boldS$ can be 
glued to $\boldS'$ at frontier $j$ then $\projection(\boldS)$ can be glued to $\projection(\boldS')$ at frontier $j$. And by empty-frontier 
preserving we mean that if a frontier $F_i$  is empty in $\boldS$ then the corresponding frontier in $\projection(\boldS)$ is also 
empty. Two classes of slice projections will be of particular importance to us. The normalizing projections, and the unweighting projections which 
are defined below.

\begin{figure}[!hf] 
\centering 
\includegraphics[scale=0.40]{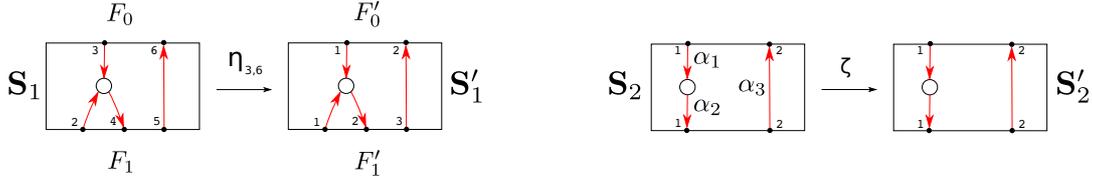}
\caption{The normalizing projection $\normalizingprojection_{c,q}$ normalizes each frontier of a slice in such a way that 
the order of the vertices in each frontier is preserved. In this example $c=3$ and $q=6$. The unnormalized slice $\boldS_1$ has frontiers
 $F_0 = \{[0,3],[0,6]\}$ and $F_1 = \{[1,2],[1,4],[1,5]\}$. After the normalization the frontiers become $F_0' = \{[0,1],[0,2]\}$
and $F_1'= \{[1,1],[1,2],[1,3]\}$. The unweighting projection $\unweightingprojection$ simply erases the weights associated to each 
edge of the slice. In this example, the weights $\alpha_1,\alpha_2,\alpha_3$ attached to the edges of the slice $\boldS_2$ are erased by $\unweightingprojection$. 
}
\label{figure:NormalizingAndUnweighting}
\end{figure} 

The normalizing projection $\normalizingprojection_{c,q}:\newslicealphabet(c,q,\vertexlabel,\edgelabel)\rightarrow \newslicealphabet(c,\vertexlabel,\edgelabel)$ 
acts on each slice $\boldS$ in $\newslicealphabet(c,q,\vertexlabel,\edgelabel)$ by renumbering the frontier vertices of $\boldS$ in 
such a way that the new resulting slice $\normalizingprojection_{c,q}(\boldS)$ is normalized and in such a way that the ordering 
of the vertices inside each frontier is preserved. More precisely, for each $j\in \{0,...,r\}$, let $F_j =\{[j,i_{j,1}],[j,i_{j,2}], ..., [j,i_{j,c_j}]\}$
where $c_j\leq c$ and $i_{j,1} < i_{j,2} < ... < i_{j,c_j} \leq q$. 
Then the slice $\normalizingprojection_{c,q}(\boldS)$ is obtained from $\boldS$ by replacing each frontier vertex $[j,i_{j,k}]$ with the vertex $[j,k]$.
After the application of the normalizing projection $\normalizingprojection_{c,q}$ the $j$-th frontier of $\boldS$ becomes 
$F_j' =\{[j,1],...,[j,c_j]\}$ (Figure \ref{figure:NormalizingAndUnweighting}). 
We note that if $\boldT$ is a unit decomposition over $\newslicealphabet(c,q,\vertexlabel,\edgelabel)$, 
then $\normalizingprojection_{c,q}(\boldT)$ is a normalized unit decomposition over $\newslicealphabet(c,\vertexlabel,\edgelabel)$ representing the same 
digraph. In other words, if $\boldT'=\normalizingprojection_{c,q}(\boldT)$ then $\composedTprime = \composedT$. 

If $\edgelabel$ is a set of edge labels and $\Omega$ is a set of edge weights, then the set $\edgelabel\times \Omega$ can be 
regarded as a new set of edge labels. 
The unweighting projection $\unweightingprojection_{\Omega}:\newslicealphabet(c,q,\vertexlabel,\edgelabel\times \Omega)\rightarrow \newslicealphabet(c,q,\vertexlabel,\edgelabel)$ 
is a function that takes a slice $\boldS\in \newslicealphabet(c,q,\vertexlabel, \edgelabel\times \Omega)$ and erases the weight coordinate 
from the label of each edge. More precisely, if $\boldS = (V,E,\vertexlabeling,\edgelabeling\times \graphweightingfunction,[C,F_0,...,F_j])$
where $\edgelabeling\times \graphweightingfunction: E\rightarrow \edgelabel \times \Omega$, then 
$\unweightingprojection_{\Omega}(\boldS) = (V,E,\vertexlabeling,\edgelabeling,[C,F_0,...,F_j])$ where $\edgelabeling:E\rightarrow \edgelabel$ is 
the projection of $\edgelabeling\times \graphweightingfunction$ to its first coordinate. Unweighting projections and normalizing projections 
will be used in Section \ref{section:ProofOfMainTheorem} to construct the slice tree-automaton $\treeAutomaton(\varphi,k,z,l,a)$ mentioned in the introduction 
(Section \ref{subsection:ProofTechniques}).

\section{$z$-Saturated Tree Slice Languages}
\label{section:zSaturationAndCounting}

In this section we will define the notion of tree-zig-zag number of a unit decomposition and the notion of  {\em  $z$-saturated} tree 
slice language. We will show that given a $z$-saturated tree slice language $\lang$ generated by a slice tree-automaton 
$\treeAutomaton$ and a unit decomposition $\boldT$ of tree-zig-zag $\mbox{number $z$,}$ we can count in polynomial time the number of subgraphs 
of $\composedT$ that are isomorphic to some digraph in $\lang_{\graph}$. 
This seemingly abstract result is a crucial step towards the proof of our main theorem (Theorem \ref{theorem:MainTheoremDirectedTreewidth}). 
The next crucial step, which will be carried in Section \ref{section:MSOandTreeSliceLanguages}, consists in 
showing that for any \msotwo logic sentence $\varphi$ and any $k,z\in \N$, one can define a $z$-saturated slice tree-automaton 
generating precisely the set of digraphs that at the same time are the union of $k$ directed paths and satisfy $\varphi$. 

\subsection{$z$-Saturation}
\label{subsection:zSaturation}

Let $\boldT$ be a unit decomposition over $\newslicealphabet(c,q,\vertexlabel,\edgelabel)$ and let $\composedT=(V,E,\vertexlabeling,\edgelabeling)$ 
be the digraph represented by $\boldT$. 
We say that $\boldT$ is compatible with an olive-tree decomposition $\olivetreedecomposition=(N,F,\mathfrak{m})$ 
of a digraph $G=(V',E',\vertexlabeling',\edgelabeling')$ if both $\boldT$ and $\olivetreedecomposition$ have the 
same tree-structure (i.e. $N=\positions(\boldT)$), and the map $\beta:V'\rightarrow V$ given by $\beta(u)=v_{\mathfrak{m}(u)}$ 
is an isomorphism from $G$ to $\composedT$. For instance, in Figure \ref{figure:UnitDecomposition}, the unit decomposition $\boldT$ is 
compatible with the olive-tree decomposition $\olivetreedecomposition$. Note that for each unit decomposition $\boldT$ 
there is a unique olive-tree decomposition $\olivetreedecomposition = (N,F,\mathfrak{m})$ of the digraph $\composedT$ such that 
$\boldT$ is compatible with $\olivetreedecomposition$. In this olive-tree decomposition, $N=\positions(\boldT)$ and $\mathfrak{m}$ is defined by setting 
$\mathfrak{m}(v_p) = p$ for each position $p\in \positions(\boldT)$.

We say that a unit decomposition $\boldT$ has {\em tree-zig-zag number} $\treezigzagnumber(\boldT) = z$ if $\boldT$ is compatible with an olive-tree decomposition 
of tree-zig-zag number $z$. Intuitively, $\boldT$ has tree-zig-zag $\mbox{number $z$}$ if each simple path of $\composedT$ crosses each frontier of each slice in $\boldT$
at most $z$ times. For instance, in Figure \ref{figure:UnitDecomposition}, the unit decomposition $\boldT$ has tree-zig-zag number $2$. Note that 
the olive-tree decomposition $\olivetreedecomposition$ in Figure \ref{figure:UnitDecomposition} that is compatible with $\boldT$ has also tree-zig-zag number 
$2$. We say that a slice language $\lang$ has tree-zig-zag number $z$ if each unit decomposition in $\lang$ 
has tree-zig-zag number $z$. 
Let $H$ be a digraph, $\newslicealphabet$ be a slice alphabet and 

$$\unitdecompositions(\newslicealphabet,H,z)=\{\boldT\in \lang(\newslicealphabet) |\composedT \simeq H, \treezigzagnumber(\boldT)\leq z\}.$$

We say that a tree slice language $\lang$ over $\newslicealphabet$ is {\em $z$-saturated} with respect to $\newslicealphabet$,
if for every digraph $H$, the fact that $[H]\in \lang_{\graph}$ implies that $\unitdecompositions(\newslicealphabet,H,z) \subseteq \lang$. In other words $\lang$ is $z$-saturated if whenever a canonical form $[H]$ belongs to the graph language $\lang_{\graph}$, all unit decompositions 
of tree-zig-zag number $z$ of $H$ belong to the slice language $\lang$. If the alphabet $\newslicealphabet$ 
is clear from the context we may say simply that $\lang$ is $z$-saturated, instead of saying that 
$\lang$ is $z$-saturated with respect to $\newslicealphabet$.
A slice tree-automaton $\treeAutomaton$ is $z$-saturated if $\lang(\treeAutomaton)$ is $z$-saturated. 
Proposition \ref{proposition:Intersection} below justifies our interest in the concept of $z$-saturation.

\begin{proposition}
\label{proposition:Intersection}
Let $\lang$ and $\lang'$ be tree slice languages over  $\newslicealphabet$ such that 
$\lang$ has tree-zig-zag $\mbox{number $z$}$ and such that $\lang'$ is $z$-saturated with respect to $\newslicealphabet$. 
Then $(\lang\cap \lang')_{\graph} = \lang_{\graph}\cap \lang'_{\graph}$. 
\end{proposition}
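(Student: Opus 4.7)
The plan is to prove the two inclusions separately, noting that one direction is trivial from definitions while the other relies essentially on the $z$-saturation hypothesis.

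For the forward inclusion $(\lang\cap\lang')_{\graph}\subseteq \lang_{\graph}\cap\lang'_{\graph}$, I would argue directly from the definition of the graph language derived from a slice language (Equation \ref{equation:GraphLanguage}). If $[H]\in (\lang\cap\lang')_{\graph}$, then there exists a unit decomposition $\boldT\in \lang\cap\lang'$ with $[\composedT]=[H]$, so in particular $\boldT$ witnesses that $[H]$ lies in both $\lang_{\graph}$ and $\lang'_{\graph}$. No use of tree-zig-zag number or saturation is needed here.

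For the reverse inclusion $\lang_{\graph}\cap\lang'_{\graph}\subseteq (\lang\cap\lang')_{\graph}$, the key idea is to exploit $z$-saturation of $\lang'$ to promote membership at the graph level to membership at the decomposition level. Take $[H]\in \lang_{\graph}\cap\lang'_{\graph}$. Since $[H]\in \lang_{\graph}$, there is a unit decomposition $\boldT\in\lang$ with $[\composedT]=[H]$, that is, $\composedT \simeq H$. Because $\lang$ has tree-zig-zag number $z$ by hypothesis, we have $\treezigzagnumber(\boldT)\leq z$, and consequently $\boldT \in \unitdecompositions(\newslicealphabet,H,z)$. Now the $z$-saturation of $\lang'$ kicks in: since $[H]\in \lang'_{\graph}$, the defining property of $z$-saturation yields $\unitdecompositions(\newslicealphabet,H,z)\subseteq \lang'$, and in particular $\boldT \in \lang'$. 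Hence $\boldT\in \lang\cap\lang'$, which means $[H]=[\composedT]\in (\lang\cap\lang')_{\graph}$.

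There is no real obstacle here: the proposition is essentially an unpacking of definitions, and the only subtle point to be careful about is matching up isomorphism with canonical-form equality (the convention stated after Equation \ref{equation:GraphLanguage}), and making sure we invoke $z$-saturation with respect to the same alphabet $\newslicealphabet$ on which both languages are defined. The proof is short and follows the two-paragraph structure above.
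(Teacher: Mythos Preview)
Your proposal is correct and follows essentially the same approach as the paper's own proof: the forward inclusion is handled by a direct appeal to the definition of graph language, and the reverse inclusion uses the tree-zig-zag-number hypothesis on $\lang$ to obtain a witness $\boldT\in\lang$ with $\treezigzagnumber(\boldT)\leq z$, then invokes $z$-saturation of $\lang'$ to conclude $\boldT\in\lang'$. The only cosmetic difference is that you explicitly name the set $\unitdecompositions(\newslicealphabet,H,z)$ whereas the paper phrases the saturation step in words.
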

\begin{proof} 
The inclusion $(\lang\cap \lang')_{\graph} \subseteq \lang_{\graph}\cap \lang_{\graph}'$ holds for any two slice 
languages $\lang$ and $\lang'$ irrespectively of whether they are saturated or not. To see this, let $H$ be a digraph and 
let  $[H] \in (\lang \cap \lang')_{\graph}$. Then $H$ has a unit decomposition $\boldT$ 
in $\lang\cap \lang'$. Since $\boldT\in \lang$, $[H]\in \lang_{\graph}$ and, since $\boldT\in \lang'$, $[H]\in \lang_{\graph}'$.
Thus $(\lang \cap \lang')_{\graph} \subseteq \lang_{\graph}\cap \lang'_{\graph}$. Now we prove that if $\lang$ has 
tree-zig-zag number $z$ and $\lang'$
is $z$-saturated, the converse inclusion also holds. Let $H$ be a digraph and let $[H]\in \lang_{\graph}\cap \lang'_{\graph}$. Since 
$\lang$ has tree-zig-zag number $z$, $H$ has a unit decomposition $\boldT$ of tree-zig-zag number $z$ in $\lang$. Since $\lang'$ is $z$-saturated 
with respect to $\newslicealphabet$, each unit-decomposition of $H$ 
over $\newslicealphabet$ of tree-zig-zag number $z$ is in $\lang'$, and in particular $\boldT\in \lang'$. Therefore 
$\boldT\in \lang\cap \lang'$ and $[H]\in (\lang\cap \lang')_{\graph}$. 
\end{proof}

In other words, whenever $\lang$ has tree-zig-zag number $z$ and $\lang'$ is $z$-saturated, the intersection $\lang_{\graph}\cap \lang_{\graph}'$ 
of their graph languages is precisely the graph language of the intersection $\lang\cap \lang'$.
It is worth noting that Proposition \ref{proposition:Intersection} would not be true if none of the slice languages $\lang$ and $\lang'$ were saturated. 
For instance if $\lang=\{\boldT\}$ and $\lang'=\{\boldT'\}$ for two distinct unit decomposition $\boldT$ and $\boldT'$ of a digraph $H$ 
then $\lang_{\graph}=\lang_{\graph}' = \{ [H] \}$ but $\lang\cap \lang' = \emptyset$!

Proposition \ref{proposition:OliveTreeDecompositionUnitDecomposition} below says that any olive-tree decomposition 
$\olivetreedecomposition$ of a digraph $G$ can be efficiently converted into a unit decomposition $\boldT$ of $G$
that is compatible with $\olivetreedecomposition$. Note that there may be several unit decompositions of $G$ compatible 
with $\olivetreedecomposition$. In the proof of Proposition \ref{proposition:OliveTreeDecompositionUnitDecomposition} 
we provide an algorithm for computing one of these unit decompositions. 

\begin{proposition}
\label{proposition:OliveTreeDecompositionUnitDecomposition}
Let $\olivetreedecomposition$ be an olive-tree decomposition of a digraph $G=(V,E,\vertexlabeling,\edgelabeling)$ of width $q=\width(\olivetreedecomposition)$. 
Then one can construct in time $O(|\olivetreedecomposition|\cdot |E|)$ a normalized unit decomposition $\boldT$ over 
$\newslicealphabet(q,\vertexlabel,\edgelabel)$ compatible with $\olivetreedecomposition$.
\end{proposition}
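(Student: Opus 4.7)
The plan is to build $\boldT$ by placing, at every node $p\in N$, a slice $\boldT[p]$ whose arity equals the number of children of $p$ in $(N,F)$, so that $\positions(\boldT)=N$ by design. The center of $\boldT[p]$ is chosen to be $\{1\}$ carrying the label $\vertexlabeling(u)$ whenever $\mathfrak{m}(u)=p$ for some (necessarily unique) $u\in V$, and empty otherwise; this automatically realizes the map $\beta(u)=v_{\mathfrak{m}(u)}$ that will witness the isomorphism $G\simeq \composedT$ needed for compatibility with $\olivetreedecomposition$.

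The frontier structure is dictated by the crossing edges of the decomposition. For every non-root $p\in N$ I fix once and for all an arbitrary linear order $e^p_1,\dots,e^p_{c_p}$ on the set $E(V(p,\olivetreedecomposition),V\backslash V(p,\olivetreedecomposition))$; by the width hypothesis $c_p\leq q$. I declare the out-frontier of $\boldT[p]$ to be $\{[0,k]:1\leq k\leq c_p\}$, with $[0,k]$ representing $e^p_k$, and the $j$-th in-frontier of $\boldT[p]$ to be $\{[j,k]:1\leq k\leq c_{pj}\}$, with $[j,k]$ representing $e^{pj}_k$; thus the same ordering fixed at $pj$ governs both sides of the eventual gluing. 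To populate the edges, I iterate over every $e=(u,u')\in E$ and through the unique tree path $\pi_e$ from $\mathfrak{m}(u)$ to $\mathfrak{m}(u')$ in $(N,F)$; each node $p\in \pi_e$ receives exactly one sliced piece $e_p$, whose endpoints in $\boldT[p]$ are either the center (at the two endpoints of $\pi_e$) or the unique frontier vertex representing $e$ on the side through which $\pi_e$ enters or leaves $p$. The label of $e_p$ is $\edgelabeling(e)$ and its source/target roles are forced so that the sequence of pieces along $\pi_e$ flows from $\mathfrak{m}(u)$ to $\mathfrak{m}(u')$.

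To finish, I verify that each $\boldT[p]$ lies in $\newslicealphabet(q,\vertexlabel,\edgelabel)$ (the bound $c_p\leq q$ and normalization are immediate from the chosen orderings), that every frontier vertex is incident with exactly one edge of $\boldT[p]$ (since each crossing edge contributes exactly one piece per slice it traverses), that the out-frontier at the root $\emptystring$ is empty (because $V(\emptystring,\olivetreedecomposition)=V$ makes the corresponding crossing set empty, so $\boldT$ is a genuine unit decomposition and not merely a pre-decomposition), and that the gluing conditions \ref{gluingone}--\ref{gluingthree} between $\boldT[pj]$ and $\boldT[p]$ at frontier $j$ hold, because both sides are indexed by the same ordering $e^{pj}_1,\dots,e^{pj}_{c_{pj}}$, share the labels $\edgelabeling(\cdot)$, and agree on orientation by construction. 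A direct check along $\pi_e$ then shows that each edge $e$ of $G$ gives rise to exactly one sliced edge sequence in $\boldT$, so $\beta$ is an isomorphism and compatibility with $\olivetreedecomposition$ holds. The main subtlety is the bookkeeping at interior nodes of $\pi_e$, where the two frontier vertices representing $e$ in $\boldT[p]$ must be joined by a single piece whose direction is forced consistently along $\pi_e$; using one canonical ordering per node is exactly what makes this consistency automatic and rules out any choice of which edge on each side corresponds to which. For the runtime, for each of the $|E|$ edges tracing $\pi_e$ and updating the relevant slices costs $O(|\olivetreedecomposition|)$, yielding the claimed $O(|\olivetreedecomposition|\cdot |E|)$ bound.
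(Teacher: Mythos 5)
Your construction is correct and follows essentially the same route as the paper: place a center vertex at a node $p$ exactly when $\mathfrak{m}$ maps some vertex of $G$ to $p$, let the frontier vertices at each tree arc correspond to the edges crossing $V(p,\olivetreedecomposition)$, and slice each edge of $G$ along its unique tree path. The only difference is cosmetic: the paper first indexes frontier vertices by a global edge tag in $\{1,\dots,|E|\}$, obtaining an unnormalized decomposition of extra-width $|E|$, and then applies the normalizing projection $\normalizingprojection_{q,|E|}$, whereas you normalize directly by fixing one ordering of the crossing edges per tree node and using it on both sides of each gluing.
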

\begin{proof}
Let $\olivetreedecomposition = (N,F,\mathfrak{m})$ be an olive-tree decomposition of $G=(V,E,\vertexlabeling,\edgelabeling)$. First we tag 
each edge $e\in G$ with a number $\tau(e)\in \{1,...,|E|\}$ in such a way that no two edges 
are tagged with the same number. We will construct a non-normalized unit decomposition $\boldT'$ over $\newslicealphabet(q,|E|,\vertexlabel,\edgelabel)$
such that $\composedTprime\simeq G$. A normalized unit decomposition $\boldT$ over $\newslicealphabet(q,\vertexlabel,\edgelabel)$ such that 
$\composedT\simeq G$ can be obtained from $\boldT'$ by an application of the normalizing projection 
$\normalizingprojection_{q,|E|}:\newslicealphabet(q,|E|,\vertexlabel,\edgelabel)\rightarrow \newslicealphabet(q,\vertexlabel,\edgelabel)$. 
To construct $\boldT'$ it is enough to specify the slice $\boldT'[p]$ for each position $p\in \positions(\boldT') = N$.
Instead of specifying each such slice $\boldT'[p]$ separately we will proceed in a more intuitive way. Namely, we will first 
define which unit slices of $\boldT'$ have a center vertex, and subsequently, for each edge $e$ in $G$ and each 
$p\in \positions(\boldT')$ we will specify which sliced part of $e$ (if any) belongs to $\boldT'[p]$. The first part 
is easy. A slice $\boldT'[p]$ has a center vertex if and only if some vertex of $G$ is mapped by $\mathfrak{m}$ to 
the position $p\in N$ in the olive-tree decomposition $\olivetreedecomposition$. For simplicity, let $v_p$ be the 
vertex of $G$ for which $\mathfrak{m}(v_p) = p$. We label the center vertex of $\boldT'[p]$ with the same label 
as the vertex $v_p$ in $G$. For each edge $e\in E$  with source $e^s = v_p$ and target $e^t=v_{p'}$
we create a sliced edge sequence $K\equiv(p_1,a_{1},e_1,b_{1})(p_2,a_{2},e_2,b_{2})...(p_n,a_{n},e_{n},b_{n})$ where $p_1=p$, $p_2=p'$,
$p_1p_2...p_n$ is the unique minimum path from $p$ to $p'$ in the tree $(N,F)$. For each $i\in \{1,...,n\}$,
the vertices $a_{i}$ and $b_{i}$ and the edge $e_{i}$ belong to the slice $\boldT'[p_i]$. The vertex $a_{1}$ is the center vertex 
of $\boldT'[p_1]$ and $b_{n}$ is the center vertex of $\boldT'[p_n]$. For each $i\in \{1,...,n-1\}$, if $p_{i+1} = p_ij$ then 
$b_{i}$ is the vertex $[j,\tau(e)]$ at the $j$-th in-frontier of $\boldT'[p_i]$ and $a_{{i+1}}$ is the vertex $[0,\tau(e)]$
at the out-frontier of $\boldT'[p_i]$. On the other hand, if $p_{i}=p_{i+1}j$ then $b_{i} = [0,\tau(e)]$ and $a_{{i+1}}=[j,\tau(e)]$.
Finally we label each edge $e_i$ of the sliced edge sequence $K$ with the same label as the edge $e$ in $G$. 
One can readily check that the sequence $K$ defined in this way is indeed a sliced edge sequence, and therefore that 
$\composedTprime = G$. As a final step, we obtain the unit decomposition $\boldT$ by an application of 
the normalizing projection $\normalizingprojection_{q,|E|}$ to $\boldT'$. In other words, $\boldT = \normalizingprojection_{q,|E|}(\boldT')$. 
\end{proof}

\subsection{Counting Subgraphs via $z$-Saturated Slice Languages}
\label{subsection:CountingSubgraphsViaZSaturatedSliceLanguages}

In this Subsection we will introduce the main technical tool of this paper. We will
show that given a $z$-saturated tree-automaton $\treeAutomaton$ representing digraphs that are the union of 
$k$ paths, and a unit decomposition $\boldT$ of tree-zig-zag number $z$, one can count in polynomial time the number of subgraphs of $\composedT$ 
that are isomorphic to some digraph in $\lang_{\graph}(\treeAutomaton)$. The proof will proceed in two steps. 
First, we will show that from a normalized unit decomposition $\boldT$ one can construct a (non-normalized) deterministic slice tree-automaton $\treeAutomaton(\boldT,k\cdot z)$  
whose slice language $\lang(\treeAutomaton(\boldT,k\cdot z))$ consists of all sub-decompositions of $\boldT$ of width at most $k\cdot z$. 
Each such sub-decomposition of $\boldT$ unequivocally identifies a subgraph of $\composedT$.
As a partial converse, each subgraph of $\composedT$ that is the union of $k$ directed paths has a representative unit decomposition in 
$\lang(\treeAutomaton(\boldT,k\cdot z))$. Note that $\lang(\treeAutomaton(\boldT,k\cdot z))$ still 
may contain unit decompositions of digraphs that are not the union of $k$ directed paths. However these undesired unit 
decompositions are irrelevant, since they will be eliminated in the next step. In our second step, we will 
show that the intersection $\treeAutomaton\cap \treeAutomaton(\boldT,k\cdot z)$ 
is a deterministic slice tree-automaton whose graph language consists precisely of the subgraphs of 
$\composedT$ that are isomorphic to some digraph in $\lang_{\graph}(\treeAutomaton)$. Note that 
$\treeAutomaton\cap \treeAutomaton(\boldT,k\cdot z)$ accepts a finite number of terms, and that the depth 
of each such accepted term is equal to the depth of $\boldT$.
At this point, the problem of counting subgraphs of $\composedT$  that are isomorphic to digraphs in $\lang_{\graph}(\treeAutomaton)$ boils down 
to counting the number of terms accepted by $\treeAutomaton \cap \treeAutomaton(\boldT,k\cdot z)$ in depth $\mathit{depth}(\boldT)$. We can count 
these terms in polynomial time using Lemma \ref{lemma:CountingTrees}. 

Lemma \ref{lemma:SubgraphsC} below says that given any unit decomposition $\boldT$ of width $q$, and any $c\leq q$, one can 
construct a slice tree-automaton whose slice language consists of all sub-decompositions of $\boldT$ of width at most 
$c$.
 
\begin{lemma}
\label{lemma:SubgraphsC}
Let $\boldT$ be a normalized unit decomposition in $\lang(\newslicealphabet(q,\vertexlabel,\edgelabel))$ and let 
$c\leq q$. Then one may construct in time $|\boldT|\cdot q^{O(c)}$ a slice tree-automaton $\treeAutomaton(\boldT,c)$ over 
$\newslicealphabet(c,q,\vertexlabel,\edgelabel)$ with $|\boldT|\cdot q^{O(c)}$ states satisfying 
the following properties.
\begin{enumerate}
	\item \label{SubgraphsC-item1} $\treeAutomaton(\boldT,c)$ is deterministic. 
	\item \label{SubgraphsC-item2} $\lang(\treeAutomaton(\boldT,c)) = \{\boldT'\in 
		\lang(\newslicealphabet(c,q,\vertexlabel,\edgelabel)) \;|\;\mbox{$\boldT'$ is a sub-decomposition of $\boldT$}\}$
\end{enumerate}
\end{lemma}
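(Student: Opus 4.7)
The plan is to build the automaton $\treeAutomaton(\boldT,c) = (Q,\newslicealphabet(c,q,\vertexlabel,\edgelabel),Q_F,\Delta)$ with state set $Q = \{\astate_{p,\identitySlice}\}$ indexed by pairs $(p,\identitySlice)$, where $p\in\positions(\boldT)$ tracks the position of $\boldT$ against which we are matching and $\identitySlice$ records the identity slice attached to the out-frontier of the sub-slice chosen at $p$. Intuitively, a run that reaches $\astate_{p,\identitySlice}$ after reading a subterm $\boldT'|_p$ certifies that $\boldT'|_p$ is a sub-decomposition of $\boldT|_p$ whose root slice has out-frontier identity slice $\identitySlice$. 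The unique accepting state is $\astate_{\lambda,\emptyslice_1}$, reflecting that the root slice of a valid sub-decomposition of $\boldT$ must have empty out-frontier.

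The transition relation $\Delta$ is defined locally: for each $p \in \positions(\boldT)$ of arity $r$ with children $p1,\ldots,pr$ in $\boldT$, and each sub-slice $\boldS$ of $\boldT[p]$ of width at most $c$, include
\[
(\astate_{p1,\identitySlice_1},\ldots,\astate_{pr,\identitySlice_r},\,\boldS,\,\astate_{p,\identitySlice(\boldS)}) \in \Delta,
\]
where $\identitySlice_j$ is the unique identity slice that can be glued to $\boldS$ at frontier $j$. This single rule simultaneously enforces that $\boldT'[p]$ is a sub-slice of $\boldT[p]$ and that the children subterms glue correctly at $p$, as required by Definition \ref{definition:Subdecomposition}. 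Determinism follows because the children states already encode the positions $p1,\ldots,pr$, whose unique common parent in $\boldT$ is $p$, while $\identitySlice(\boldS)$ is determined by $\boldS$; hence each (children-states, input) pair admits at most one output state. The potential ambiguity at arity-zero leaf transitions, where two leaves of $\boldT$ might share identical sub-slices, can be broken by a canonical bookkeeping that associates each leaf input with its intended position in $\boldT$, preserving the state bound.

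For the size and correctness arguments, note that an identity slice over $\newslicealphabet(c,q,\vertexlabel,\edgelabel)$ is specified by a subset of $\{1,\ldots,q\}$ of size at most $c$ together with an edge label and direction for each selected index, so the number of such identity slices is $q^{O(c)}$ and $|Q| = |\boldT|\cdot q^{O(c)}$. Each slice $\boldT[p]$ similarly admits $q^{O(c)}$ sub-slices of width at most $c$, yielding $|\Delta| = |\boldT|\cdot q^{O(c)}$ and a total construction time of the same order. Correctness of $\lang(\treeAutomaton(\boldT,c))$ is then established by a bottom-up induction on the height of a processed subterm: in one direction, any run reaching $\astate_{p,\identitySlice}$ certifies a sub-decomposition of $\boldT|_p$ with outer identity slice $\identitySlice$; in the other, every sub-decomposition $\boldT'$ of $\boldT$ yields an accepting run by assigning to each position $p$ the state $\astate_{p,\identitySlice(\boldT'[p])}$, which is consistent with $\Delta$ at every position and reaches $\astate_{\lambda,\emptyslice_1}$ at the root. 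The main subtlety I expect is the leaf-level determinism, which I intend to resolve by the bookkeeping alluded to above so that the $|\boldT|\cdot q^{O(c)}$ state bound is retained.
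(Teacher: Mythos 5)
Your construction is the paper's construction: the same state set $\{\astate_{p,\identitySlice}\}$ indexed by positions of $\boldT$ and identity slices, the same unique accepting state $\astate_{\lambda,\emptyslice_1}$, the same transition for each width-$\leq c$ sub-slice of $\boldT[p]$ with input states forced by gluability, the same $|\boldT|\cdot q^{O(c)}$ count, and the same pair of bottom-up inductions for the two inclusions of item 2. On all of that you are in full agreement with the paper's own proof.

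The one place you go beyond the paper is the arity-zero determinism worry, and you are right to raise it: the determinism argument "the input states determine $p$" only covers $r\geq 1$, and for $r=0$ the empty slice $\emptyslice_0$ is a sub-slice of every leaf slice of $\boldT$, so $\Delta_0$ contains $(\emptyslice_0,\astate_{p,\emptyslice_1})$ for every leaf position $p$; the automaton is therefore not literally deterministic once $\boldT$ has two leaves (the paper's proof silently skips this case). However, your proposed repair does not work: the move of a bottom-up automaton on an arity-$0$ symbol is a function of that symbol alone, so no "bookkeeping" can tell apart two occurrences of the same leaf symbol, and tagging the leaf symbols with positions would take you outside the alphabet $\newslicealphabet(c,q,\vertexlabel,\edgelabel)$, falsifying item 2 and breaking the later intersections with the automata of Theorem \ref{theorem:MonadicSliceTreeAutomataZSaturated}. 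The correct rescue is to observe that the automaton is unambiguous per state: for every term $t$ and every state $\astate_{p,\identitySlice}$ there is at most one run of $t$ ending in $\astate_{p,\identitySlice}$, since the target state together with the root symbol determines the transition used (the input states must be $\astate_{pj,\identitySlice_j}$ with $\identitySlice_j$ the unique identity slice gluable to that symbol at frontier $j$), and this forcing propagates downwards. This property survives intersection with a deterministic automaton and is exactly what the dynamic program of Lemma \ref{lemma:CountingTrees} needs for its run counts to coincide with term counts, so the downstream counting results are unaffected; but you should state and prove this unambiguity rather than claim determinism you cannot deliver at the leaves.
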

\begin{proof}
Let $\boldT$ be a unit decomposition in $\lang(\newslicealphabet(c,\vertexlabel,\edgelabel))$. We will construct a slice 
tree-automaton $\treeAutomaton = \treeAutomaton(\boldT,c) = (Q,\newslicealphabet,Q_F,\Delta)$ over $\newslicealphabet = \newslicealphabet(c,q,\vertexlabel,\edgelabel)$ whose slice language consists of all sub-decompositions of $\boldT$ of width at most $c$. The set of states $Q$ has one state $\astate_{p,\identitySlice}$ for each position 
$p\in \positions(\boldT)$ and  each identity slice $\identitySlice$ in $\newslicealphabet_1(c,q,\vertexlabel,\edgelabel)$. We note that 
since the empty slice of arity one, $\emptyslice_1$, is also an identity slice,  the state $\astate_{p,\emptyslice_1}$
belongs to $Q$ for each $p\in \positions(\boldT)$. 
The set of final states is the singleton $Q_F = \{\astate_{\lambda,\emptyslice_1}\}$. 
Now we will construct the transition relation 
$\Delta = \Delta_0 \cup \Delta_1 \cup \Delta_2$. First we recall that for each position $p\in \positions(\boldT)$, if 
$\boldT[p]$ is a slice of arity $r$ then any sub-slice $\boldS$ of $\boldT[p]$ has also arity $r$. 
Recall that if $\boldS$ is a slice, then $\identitySlice(\boldS)$ denotes the unique identity slice such that $\boldS$ can be 
glued to $\identitySlice(\boldS)$. 
For each $r\in \{0,1,2\}$, and each position $p\in \positions(\boldT)$ such that $\boldT[p]$ has arity $r$, the relation $\Delta_r$ has 
one transition $(\astate_{p1,\identitySlice_1},..., \astate_{pr,\identitySlice_r}, \boldS, \astate_{p,\identitySlice(\boldS)})$ for 
each sub-slice $\boldS$ of $\boldT[p]$ satisfying the following conditions:

\begin{enumerate}[(i)]
	\item \label{condition:one} $\boldS$ has width at most $c$, i.e., $\boldS\in \newslicealphabet(c,q,\vertexlabel,\edgelabel)$, 
	\item \label{condition:two} for each $j\in \{1,...,r\}$, $\identitySlice_j$ is the unique identity slice that can be glued to $\boldS$ at frontier $j$. 
\end{enumerate}

To see that $\treeAutomaton$ is deterministic, note that 
for each slice $\boldS$ there is a unique identity slice $\identitySlice$ such that $\boldS$ can be glued to $\identitySlice$. 
Therefore, for each tuple $(\astate_{p1,\identitySlice_1},\astate_{p2,\identitySlice_2}, ..., \astate_{pj,\identitySlice_r}, \boldS)$ there is a unique state 
$\astate_{p,\identitySlice}$ such that $(\astate_{p1,\identitySlice_1},\astate_{p2,\identitySlice_2}, ..., \astate_{pr,\identitySlice_r}, \boldS,\astate_{p,\identitySlice})$ belongs 
to $\Delta_r$. This also implies that each term $\boldT'$ accepted by $\treeAutomaton$ 
is a unit pre-decomposition, i.e., each two consecutive positions of $\boldT'$ can be glued. Additionally, the fact that $\astate_{\lambda,\emptyslice_1}$ is the unique accepting state of $\treeAutomaton$ 
implies that the slice at the root of $\boldT'$ has empty-frontier, since this slice must be glueable to $\emptyslice_1$. 
Therefore each such term $\boldT'$ is a unit decomposition. 
It remains to show that a unit decomposition $\boldT'$ is accepted by $\treeAutomaton$ if and only if $\boldT'$ is a sub-decomposition of $\boldT$ of 
width at most $c$. 

\begin{enumerate}
	\item (if direction) Let $\boldT'$ be a sub-decomposition of $\boldT$ of width at most $c$. We claim that for each position 
	$p\in \positions(\boldT') = \positions(\boldT)$ 
	the subterm $\boldT'|_p$ of $\boldT'$ rooted at $p$ reaches the state $\astate_{p,\identitySlice(\boldT'[p])}$. This claim implies in particular 
	that the whole term $\boldT' = \boldT'|_{\lambda}$ reaches the unique accepting state $\astate_{\lambda,\emptyslice_1}$. 
	The proof is by induction on the height of the position $p$. 
	In the base case, $p$ is a leaf of the set $\positions(\boldT)$.
	In this case, the slice $\boldT[p]$ has arity zero, and thus the sub-slice $\boldT'[p]$ has also arity zero. By the construction 
	of the transition relation $\Delta_0$ given above, the transition $(\boldT'[p], \astate_{p,\identitySlice(\boldT'[p])})$ belongs to 
	$\Delta_0$ and thus $\boldT'|_p$ reaches the state $\astate_{p,\identitySlice(\boldT'[p])}$. 
	Now assume by induction that the claim is valid for every position $p'$ of height $h$. 
	Let  $p$ be a position in $\positions(\boldT)$  of height $h+1$ with children are  $p1,...,pr$ 
	for some $r\in \{1,2\}$. By the induction hypothesis, for each $i\in \{1,...,r\}$ the term $\boldT'|_{pi}$ reaches the 
	state $\astate_{pi,\identitySlice(\boldT'[pi])}$. By the definition of the transition relation $\Delta_r$, we have that the transition 
	$(\astate_{p1,\identitySlice(\boldT'[p1])}, ...,\astate_{pr,\identitySlice(\boldT'[pr])},\boldT'[p], \astate_{p,\identitySlice(\boldT'[p])})$ belongs to $\Delta_r$,
	 and thus $\boldT'|_p$ reaches the state $\astate_{p,\identitySlice(\boldT'[p])}$. This proves our claim.

	\item (only if direction) For the converse, let $\boldT'$ be a unit decomposition accepted by $\treeAutomaton$.
	We will prove that $\boldT'$ is a sub-decomposition of $\boldT$ by showing that 
	$\positions(\boldT) = \positions(\boldT')$ and that for each position $p\in \positions(\boldT')$, $\boldT'[p]$ is a sub-slice of 
	$\boldT[p]$. 
	We claim that for each $p\in \positions(\boldT')$, the subterm $\boldT'|_p$ of $\boldT'$ rooted at $p$ reaches the state 
	$\astate_{p,\identitySlice(\boldT'[p])}$. By the construction of the transition relation $\Delta$ this claim implies both 
	that $\boldT'[p]$ is a sub-slice of $\boldT[p]$ for each $p\in \positions(\boldT')$ and that $\positions(\boldT')=\positions(\boldT)$, as 
	desired.
	The proof of this claim is by induction on the depth of $p$. In the base case, $p=\lambda$. In this 
	case, $\boldT'|_{\lambda} = \boldT'$ reaches the unique accepting state $\astate_{\lambda,\emptyslice_1} = \astate_{\lambda,\identitySlice(\boldT'[\lambda])}$. 
	Now assume that for every position $p$ of depth at most $d$, the term $\boldT'|_{p}$ reaches the state $\astate_{p,\identitySlice(\boldT'[p])}$.
	We will show that the claim holds for every position $p$ of depth  $d+1$.  
	Let $p\in \positions(\boldT')$ be a position of depth $d$. By the induction hypothesis, $\boldT'|_{p}$ reaches the state $\astate_{p,\identitySlice(\boldT'[p])}$. 
	Let $\boldT'[p]$ have arity $r$ for some $r\in \{1,2\}$. Since $\boldT'|_{p}$ reaches $\astate_{p,\identitySlice(\boldT'[p])}$, there exist states $\mathfrak{q}_{1},...,\mathfrak{q}_r$ such that 
	the transition $(\mathfrak{q}_1,...,\mathfrak{q}_r,\boldT'[p],\astate_{p,\identitySlice(\boldT'[p])})$ belongs to $\Delta$ and $\boldT'|_{pj}$ reaches 
	$\mathfrak{q}_j$ for each $j\in \{1,...,r\}$.
	By the definition of $\Delta$, for each $j\in \{1,...,r\}$, $\mathfrak{q}_{j} = \astate_{pj,\identitySlice_j}$ where $\identitySlice_{j}$ is the 
	unique identity slice that can be glued to $\boldT'[p]$ at frontier $j$. 
	Since $\boldT'$ is a unit decomposition, $\boldT'[pj]$ can be glued to $\boldT'[p]$ at frontier $j$. Therefore $\identitySlice_j = \identitySlice(\boldT'[pj])$.  
	Thus $\mathfrak{q}_j = \astate_{pj,\identitySlice(\boldT'[pj])}$  and  $\boldT'|_{pj}$ reaches $\astate_{pj,\identitySlice(\boldT'[pj])}$.
	This proves our inductive step. 
\end{enumerate} 
\end{proof}

Proposition \ref{proposition:UnionZigZag} below establishes a relation between the minimum number of paths necessary to 
cover all edges and vertices of a digraph $H$, and the width of a unit decomposition of $H$ of tree-zig-zag number 
$z$. Intuitively, if $\boldT$ is a unit decomposition of tree-zig-zag number $z$ of a digraph $H$, then each directed simple path $\path$ in $H$
crosses each frontier of a slice in $\boldT$ at most $z$ times. Therefore, if $H$ is the union of $k$ directed simple paths $\path_1,...,\path_k$, 
then all such paths together cross each frontier of each slice of $\boldT$ at most $k\cdot z$ times.

\begin{proposition}
\label{proposition:UnionZigZag}
Let $H$ be a digraph that is the union of $k$-paths and $\newslicealphabet$ be a slice alphabet. 
Then any unit decomposition $\boldT\in \lang(\newslicealphabet)$ of $H$ of tree-zig-zag number $z$ has width at most $k\cdot z$. 
\end{proposition}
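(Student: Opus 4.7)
The plan is to translate the statement into the language of olive-tree decompositions and then apply the definition of tree-zig-zag number directly, via a counting argument over the $k$ paths covering $H$.

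First I would fix a unit decomposition $\boldT\in \lang(\newslicealphabet)$ of $H$ of tree-zig-zag number $z$, and invoke the definition from Subsection \ref{subsection:zSaturation} to obtain an olive-tree decomposition $\olivetreedecomposition=(N,F,\mathfrak{m})$ of $H$ with the same tree structure (i.e.\ $N=\positions(\boldT)$) and with $\treezigzagnumber(\olivetreedecomposition)\le z$. The crucial translation step, which I view as the main technical obstacle, is to identify for each position $p\in\positions(\boldT)$ and each frontier $F_j$ of $\boldT[p]$ a node $u\in N$ such that $|F_j|$ equals the size of the edge cut $E(V(u,\olivetreedecomposition),V\setminus V(u,\olivetreedecomposition))$. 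Concretely, take $u=p$ when $j=0$, and $u=pj$ when $j\ge 1$ (this second case reduces to the first by the gluing conditions, since the $j$-th in-frontier of $\boldT[p]$ is forced to coincide in size with the out-frontier of $\boldT[pj]$).

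To justify the equality $|F_0|=|E(V(p,\olivetreedecomposition),V\setminus V(p,\olivetreedecomposition))|$ I would use the machinery of sliced edge sequences from Definition \ref{definition:SlicedEdgeSequence}. Every edge $e$ of $\composedT$ corresponds to a sliced edge sequence whose positions trace the unique path in the tree $(N,F)$ between $\mathfrak{m}(e^s)$ and $\mathfrak{m}(e^t)$. That path crosses the tree-arc joining $p$ to its parent exactly once when one endpoint of $e$ is in $V(p,\olivetreedecomposition)$ and the other is not, and does not cross it otherwise; moreover each such crossing occupies exactly one slot of $F_0$ (and distinct edges occupy distinct slots because of the indexing by $k\in\{1,\dots,q\}$ in conditions \ref{item:edgesequence4}-\ref{item:edgesequence5}). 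Hence the correspondence between frontier slots and boundary edges is a bijection, which gives the claimed equality.

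With that translation in hand, the proposition becomes a one-line counting argument. Writing $H=\path_1\cup\cdots\cup\path_k$ with $\path_i=(V_i,E_i)$, every edge of $H$ belongs to some $E_i$, so for each node $u\in N$ we have
\[
  E(V(u,\olivetreedecomposition),V\setminus V(u,\olivetreedecomposition))
  \;=\;\bigcup_{i=1}^{k} \bigl(E_{\path_i}\cap E(V(u,\olivetreedecomposition),V\setminus V(u,\olivetreedecomposition))\bigr).
\]
By the definition of tree-zig-zag number each summand has cardinality at most $w(\olivetreedecomposition,\path_i)\le z$, so the left-hand side has size at most $k\cdot z$. Combined with the identification $|F_j|=|E(V(u,\olivetreedecomposition),V\setminus V(u,\olivetreedecomposition))|$ from the previous paragraph, this shows that every frontier of every slice of $\boldT$ has at most $k\cdot z$ vertices, hence $w(\boldT)\le k\cdot z$.
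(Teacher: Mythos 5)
Your proof is correct and follows essentially the same route as the paper: pass to the compatible olive-tree decomposition, bound each edge cut by summing the at-most-$z$ crossings of each of the $k$ covering paths. The only difference is that you explicitly justify, via sliced edge sequences, the identity between frontier sizes of $\boldT$ and cut sizes of $\olivetreedecomposition$, a step the paper simply asserts as ``the width of $\boldT$ is equal to the width of $\olivetreedecomposition$.''
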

\begin{proof}
Let $H = (V,E)$ be a digraph that is the union of $k$ directed paths  $\path_1, ..., \path_k$ where for each $i\in \{1,...,k\}$,  $\path_i = (V_{\path_i},E_{\path_i})$. 
Let $\boldT\in \lang(\newslicealphabet)$ be a unit decomposition of tree-zig-zag number $z$ of a digraph $H$. Then $\boldT$ is compatible 
with an olive-tree decomposition $\olivetreedecomposition = (N,F,\mathfrak{m})$ of tree-zig-zag number $z$. Additionally, the width 
of $\boldT$ is equal to the width of $\olivetreedecomposition$. Since $\olivetreedecomposition$ has 
tree-zig-zag number $z$, for each position $p\in N$ and each $i\in \{1,...,k\}$
we have that $$|E(V(p,\olivetreedecomposition), V\backslash V(p,\olivetreedecomposition))\;\cap \; E_{\path_i}| \leq z.$$ 
This implies that $$|E(V(p,\olivetreedecomposition), V\backslash V(p,\olivetreedecomposition))\; \cap\; \bigcup_{i=1}^k E_{\path_i}\;| \leq k\cdot z.$$ 
But since $E = \cup_{i=1}^k E_{\path_i}$, we have that  $|E(V(p,\olivetreedecomposition),V\backslash V(p,\olivetreedecomposition))| \leq k\cdot z$. 
Thus $\olivetreedecomposition$ has width at most $k\cdot z$, implying in this way that $\boldT$ has also width 
at most $k\cdot z$. 
\end{proof}

Next we state the main lemma of this section. Intuitively Lemma \ref{lemma:IntersectionSubdecompositions} below says that if $\boldT$ is a unit decomposition 
of tree-zig-zag number $z$ of a digraph $G$, and if $\treeAutomaton$ is a $z$-saturated tree-automaton representing only digraphs that are 
the union of $k$ directed paths, then the slice language $\lang(\treeAutomaton(\boldT,k\cdot z)\cap \treeAutomaton)$ has precisely one unit decomposition 
for each subgraph of $\composedT$ that is isomorphic to a digraph in $\lang_{\graph}(\treeAutomaton)$. In this sense, the problem of counting 
the number of subgraphs of $\composedT$ that are isomorphic to a digraph in $\lang_{\graph}(\treeAutomaton)$ boils down to counting the number 
of unit-decompositions in $\lang(\treeAutomaton(\boldT,k\cdot z) \cap \treeAutomaton)$. This counting step will be detailed in Theorem 
\ref{theorem:CountingSubgraphs}. 

\begin{lemma}
\label{lemma:IntersectionSubdecompositions}
Let $\boldT$ be a unit decomposition of tree-zig-zag number $z$ over $\newslicealphabet(q,\vertexlabel,\edgelabel)$. Let $\treeAutomaton$ be a deterministic 
$z$-saturated slice automaton over $\newslicealphabet(k\cdot z,q,\vertexlabel,\edgelabel)$ such that each digraph in $\lang_{\graph}(\treeAutomaton)$
is the union of $k$ directed paths. 
\begin{enumerate}
	\item \label{lemma:IntersectionSubdecompositions-One} The tree-automaton $\treeAutomaton(\boldT,k\cdot z) \cap \treeAutomaton$ is deterministic and all its 
		accepted unit decompositions have depth at most $\mathit{depth}(\boldT)$. 
	\item \label{lemma:IntersectionSubdecompositions-Two} $H$ is a subgraph of $\composedT$  such that $[H] \in \lang_{\graph}(\treeAutomaton)$ if and only if 
		there exists a unit decomposition $\boldT' \in \lang(\treeAutomaton(\boldT,k\cdot z) \cap \treeAutomaton)$ such that $\composedTprime=H$. 
\end{enumerate}
\end{lemma}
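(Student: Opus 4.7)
The plan is to prove the two claims in turn, leveraging the properties of $\treeAutomaton(\boldT,k\cdot z)$ established in Lemma \ref{lemma:SubgraphsC} together with the $z$-saturation of $\treeAutomaton$ and Proposition \ref{proposition:UnionZigZag}.

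For Claim \ref{lemma:IntersectionSubdecompositions-One}, determinism follows directly from Lemma \ref{lemma:SubgraphsC}(\ref{SubgraphsC-item1}) (which states that $\treeAutomaton(\boldT,k\cdot z)$ is deterministic), the hypothesis that $\treeAutomaton$ is deterministic, and Lemma \ref{lemma:PropertiesOfTreeAutomata}(\ref{lemma:PropertiesOfTreeAutomata:UnionIntersection}), whose standard product construction preserves determinism. The depth bound is even easier: by Lemma \ref{lemma:SubgraphsC}(\ref{SubgraphsC-item2}), every term accepted by $\treeAutomaton(\boldT,k\cdot z)$ is a sub-decomposition of $\boldT$, hence (by Definition \ref{definition:Subdecomposition}(\ref{subdecomposition1})) shares the position set of $\boldT$, so it has depth exactly $\mathit{depth}(\boldT)$. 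Since the language of the intersection is contained in $\lang(\treeAutomaton(\boldT,k\cdot z))$, the depth bound transfers.

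For the ``if'' direction of Claim \ref{lemma:IntersectionSubdecompositions-Two}, take $\boldT'\in\lang(\treeAutomaton(\boldT,k\cdot z)\cap\treeAutomaton)$. By Lemma \ref{lemma:SubgraphsC}(\ref{SubgraphsC-item2}), $\boldT'$ is a sub-decomposition of $\boldT$, and by the discussion of Subsection \ref{subsection:Subslices}, $\composedTprime$ is a subgraph of $\composedT$; set $H=\composedTprime$. Since $\boldT'\in\lang(\treeAutomaton)$, Equation \ref{equation:GraphLanguage} gives $[H]\in\lang_{\graph}(\treeAutomaton)$.

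The ``only if'' direction is the main content and the place where all the ingredients must come together. Given a subgraph $H$ of $\composedT$ with $[H]\in\lang_{\graph}(\treeAutomaton)$, Subsection \ref{subsection:Subslices} gives a unique sub-decomposition $\boldT'$ of $\boldT$ with $\composedTprime=H$. The key step is to verify that $\boldT'$ satisfies the three requirements needed to land in the intersection: (a) it has tree-zig-zag number at most $z$; (b) it has width at most $k\cdot z$; (c) hence it belongs to $\lang(\treeAutomaton)$ by saturation. For (a), the olive-tree decomposition $\olivetreedecomposition$ of $\composedT$ compatible with $\boldT$ has tree-zig-zag number at most $z$, and its restriction $\olivetreedecomposition'$ to the vertices of $H$ is compatible with $\boldT'$; since the edges of any simple path in $H$ are edges of $\composedT$, and since an edge of $H$ crossing $V(u,\olivetreedecomposition')$ must also cross $V(u,\olivetreedecomposition)$ (its endpoints in $V(H)$ belong to $V(u,\olivetreedecomposition')$ exactly when they belong to $V(u,\olivetreedecomposition)$), the tree-zig-zag number of $\olivetreedecomposition'$ is at most $z$. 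For (b), the hypothesis says every digraph in $\lang_{\graph}(\treeAutomaton)$ is the union of $k$ directed paths, so $H$ is; by Proposition \ref{proposition:UnionZigZag}, any unit decomposition of $H$ of tree-zig-zag number at most $z$ has width at most $k\cdot z$. Consequently $\boldT'$ lives in $\newslicealphabet(k\cdot z,q,\vertexlabel,\edgelabel)$ and, being a sub-decomposition of $\boldT$ of width at most $k\cdot z$, belongs to $\lang(\treeAutomaton(\boldT,k\cdot z))$ by Lemma \ref{lemma:SubgraphsC}(\ref{SubgraphsC-item2}). For (c), the $z$-saturation of $\treeAutomaton$ with respect to $\newslicealphabet(k\cdot z,q,\vertexlabel,\edgelabel)$ together with $[H]\in\lang_{\graph}(\treeAutomaton)$ and $\boldT'\in\unitdecompositions(\newslicealphabet(k\cdot z,q,\vertexlabel,\edgelabel),H,z)$ forces $\boldT'\in\lang(\treeAutomaton)$. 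Lemma \ref{lemma:PropertiesOfTreeAutomata}(\ref{lemma:PropertiesOfTreeAutomata:UnionIntersection}) then places $\boldT'$ in $\lang(\treeAutomaton(\boldT,k\cdot z)\cap\treeAutomaton)$, completing the proof. The one delicate point to justify carefully is the inheritance of tree-zig-zag number to the induced olive-tree decomposition, since all other steps are essentially invocations of prior results.
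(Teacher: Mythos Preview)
Your proof is correct and follows essentially the same route as the paper's own proof: determinism and depth bound from the properties of $\treeAutomaton(\boldT,k\cdot z)$, the ``if'' direction straight from Lemma \ref{lemma:SubgraphsC}, and the ``only if'' direction via the chain tree-zig-zag inheritance $\Rightarrow$ width bound (Proposition \ref{proposition:UnionZigZag}) $\Rightarrow$ membership in both automata by saturation. In fact you are more careful than the paper on the one point you flag as delicate: the paper simply asserts that a sub-decomposition of $\boldT$ inherits tree-zig-zag number at most $z$, whereas you actually spell out the argument via the induced olive-tree decomposition.
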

\begin{proof}
Item \ref{lemma:IntersectionSubdecompositions-One} is straightforward. The automaton $\treeAutomaton(\boldT,k\cdot z) \cap \treeAutomaton$ is 
deterministic because both $\treeAutomaton(\boldT,k\cdot z)$ and $\treeAutomaton$ are deterministic 
(Lemma \ref{lemma:PropertiesOfTreeAutomata}.\ref{lemma:PropertiesOfTreeAutomata:UnionIntersection}). Since by construction the construction 
of $\treeAutomaton(\boldT,k\cdot z)$, all unit decompositions
accepted by $\treeAutomaton(\boldT,k\cdot z)$ have depth $\mathit{depth}(\boldT)$, we have that all unit decompositions accepted by 
$\treeAutomaton(\boldT,k\cdot z)\cap \treeAutomaton$ also have depth $\mathit{depth}(\boldT)$. 
Now we proceed to prove item \ref{lemma:IntersectionSubdecompositions-Two}. 
\begin{enumerate}[(a)]
	\item (if direction) Let $\boldT'$ be a unit decomposition in $\lang(\treeAutomaton(\boldT,k\cdot z)\cap \treeAutomaton)$ such 
		that $\composedTprime = H$. Since $\boldT'\in \lang(\treeAutomaton(\boldT,k\cdot z))$, by Lemma \ref{lemma:SubgraphsC},
		$\boldT'$ is a  sub-decomposition of $\boldT$. Therefore $H$ is a subgraph of $\composedT$. Additionally,
		since $\boldT'\in \lang(\treeAutomaton)$, $[H]\in \lang_{\graph}(\treeAutomaton)$. 
	\item (only if direction) Let $H$ be a subgraph of $\composedT$ such that $[H]\in \lang_{\graph}(\treeAutomaton)$. 
		Since $H$ is a subgraph of $\composedT$, there is a sub-decomposition $\boldT'$ of $\boldT$ such that $\composedTprime = H$.
		We will show that $\boldT'$ belongs to $\lang(\treeAutomaton(\boldT,k\cdot z)\cap \treeAutomaton)$.  
		Since $\boldT$ has tree-zig-zag number $z$, $\boldT'$ has tree-zig-zag number at most $z$. Now, since 
		$H\in \lang_{\graph}(\treeAutomaton)$, we have that  $H$ is  the union of $k$ directed paths. By Proposition 
		\ref{proposition:UnionZigZag}, each unit decomposition of $H$ of tree-zig-zag number at most $z$ has width at most 
		$k\cdot z$. Thus $\boldT'$ has width at most $k\cdot z$. Finally, since $\treeAutomaton$ is $z$-saturated with 
		respect to $\newslicealphabet(k\cdot z,q,\vertexlabel,\edgelabel)$ we have that $\boldT'$ belongs 
		to $\lang(\treeAutomaton)$. Thus $\boldT'\in \lang(\treeAutomaton(\boldT,k\cdot z) \cap \treeAutomaton)$. 
\end{enumerate}
\vspace{-3pt}
\end{proof}

The next Theorem is the main application for Lemma \ref{lemma:IntersectionSubdecompositions}. Intuitively, given a unit decomposition $\boldT$
of tree-zig-zag number $z$, and a $z$-saturated tree-automaton $\treeAutomaton$ representing only digraphs that are the union of $k$ directed paths, where 
$z$ and $k$ are constants, one can count in polynomial time the number 
of subgraphs of $\composedT$ that are isomorphic to some digraph in $\lang(\treeAutomaton)$. The idea is that Lemma \ref{lemma:IntersectionSubdecompositions} 
allow us to reduce this counting problem to the problem of counting the number of accepted unit decompositions in the 
tree-automaton $\treeAutomaton(\boldT,k\cdot z) \cap \treeAutomaton$.

\begin{theorem}[Slice Theoretic Metatheorem]
\label{theorem:CountingSubgraphs}
Let $\boldT$ be a unit decomposition over $\newslicealphabet(q,\vertexlabel,\edgelabel)$ and let $\treeAutomaton$ be a 
deterministic $z$-saturated slice tree-automaton over $\newslicealphabet(k\cdot z,q,\vertexlabel,\edgelabel)$ satisfying the 
property that each digraph in $\lang_{\graph}(\treeAutomaton)$ is the union of $k$ directed paths. Then one can count in 
time $|\boldT|^{O(1)}\cdot q^{O(k\cdot z)}\cdot |\treeAutomaton|^{O(1)}$ the number of subgraphs of $\composedT$ that are isomorphic 
to a digraph in $\lang_{\graph}(\treeAutomaton)$. 
\end{theorem}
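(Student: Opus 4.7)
The plan is to assemble the pieces already developed in the preceding subsections into a three-step pipeline: build an automaton capturing subdecompositions of $\boldT$ of width at most $k\cdot z$, intersect it with $\treeAutomaton$, and then count accepted terms by dynamic programming. More concretely, I would first invoke Lemma~\ref{lemma:SubgraphsC} with the unit decomposition $\boldT\in \lang(\newslicealphabet(q,\vertexlabel,\edgelabel))$ and $c=k\cdot z$ to construct the deterministic slice tree-automaton $\treeAutomaton(\boldT,k\cdot z)$ in time $|\boldT|\cdot q^{O(k\cdot z)}$ and with the same number of states. Its slice language consists of exactly those sub-decompositions of $\boldT$ that have width at most $k\cdot z$, which are in canonical bijection with the corresponding subgraphs of $\composedT$ (as noted in the discussion following Definition~\ref{definition:Subdecomposition}).

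Next I would use Lemma~\ref{lemma:PropertiesOfTreeAutomata}(\ref{lemma:PropertiesOfTreeAutomata:UnionIntersection}) to form the intersection automaton $\treeAutomaton'=\treeAutomaton(\boldT,k\cdot z)\cap \treeAutomaton$ in time $O\bigl(|\boldT|\cdot q^{O(k\cdot z)}\cdot |\treeAutomaton|\bigr)$. By Lemma~\ref{lemma:IntersectionSubdecompositions}, this automaton is deterministic, every accepted term has depth at most $\mathit{depth}(\boldT)\leq |\boldT|$, and a subgraph $H$ of $\composedT$ satisfies $[H]\in \lang_{\graph}(\treeAutomaton)$ if and only if there is some $\boldT'\in \lang(\treeAutomaton')$ with $\composedTprime=H$. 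The key point is that this correspondence is actually a bijection: each subgraph $H\subseteq \composedT$ determines a unique sub-decomposition $\boldT'$ of $\boldT$ with $\composedTprime=H$ (since a sub-decomposition is determined by which sub-slice of $\boldT[p]$ it places at each position $p$), and $\boldT'$ lies in $\lang(\treeAutomaton')$ exactly when $H$ is isomorphic to some digraph in $\lang_{\graph}(\treeAutomaton)$. Thus the counting problem on the graph side reduces to counting accepted terms of $\treeAutomaton'$ at depth at most $\mathit{depth}(\boldT)$.

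Finally, since $\treeAutomaton'$ is deterministic, I would invoke Lemma~\ref{lemma:CountingTrees} with $d=\mathit{depth}(\boldT)$ to count the accepted terms in time $d^{O(1)}\cdot |\treeAutomaton'|^{O(1)}=|\boldT|^{O(1)}\cdot q^{O(k\cdot z)}\cdot |\treeAutomaton|^{O(1)}$, which matches the claimed running time. Adding the construction costs of $\treeAutomaton(\boldT,k\cdot z)$ and of the intersection yields the same asymptotic bound. The main subtlety I expect to have to spell out carefully is the bijection step: one must verify that two distinct subgraphs of $\composedT$ cannot give rise to the same accepted term of $\treeAutomaton'$, but this follows from the strict equality $\composedTprime=H$ (rather than isomorphism) guaranteed by Definition~\ref{definition:Subdecomposition}, together with the fact that sub-decompositions are uniquely determined by their slices at each position of $\boldT$. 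With this in hand, the counting step and the running-time bookkeeping are routine.
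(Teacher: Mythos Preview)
Your proposal is correct and follows essentially the same approach as the paper: build $\treeAutomaton(\boldT,k\cdot z)$ via Lemma~\ref{lemma:SubgraphsC}, intersect with $\treeAutomaton$ via Lemma~\ref{lemma:PropertiesOfTreeAutomata}(\ref{lemma:PropertiesOfTreeAutomata:UnionIntersection}), invoke Lemma~\ref{lemma:IntersectionSubdecompositions} for the correspondence, and count with Lemma~\ref{lemma:CountingTrees}. Your explicit discussion of why the subgraph-to-sub-decomposition correspondence is a bijection is a nice clarification that the paper leaves implicit.
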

\begin{proof}
First, we construct in time $|\boldT|\cdot q^{O(k\cdot z)}$ the tree-automaton $\treeAutomaton(\boldT,k\cdot z)$ of 
Lemma \ref{lemma:SubgraphsC} whose slice language consists of the set of all sub-decompositions of $\boldT$ of 
width at most $k\cdot z$. Subsequently, using Lemma \ref{lemma:PropertiesOfTreeAutomata}.\ref{lemma:PropertiesOfTreeAutomata:UnionIntersection}, 
we construct the tree-automaton $\treeAutomaton(\boldT,k\cdot z) \cap \treeAutomaton$ in time $|\boldT| \cdot q^{O(k\cdot z)} \cdot |\treeAutomaton|$. 
By Lemma \ref{lemma:IntersectionSubdecompositions} a subgraph of $\composedT$ is isomorphic to some digraph in 
$\lang_{\graph}(\treeAutomaton)$ if and only if there exists a sub-decomposition $\boldT'$ of $\boldT$ in $\lang(\treeAutomaton(\boldT,k\cdot z)\cap \treeAutomaton)$
for which $\composedTprime = H$. Therefore, counting the subgraphs of $\composedT$ that are isomorphic to some 
digraph in $\lang_{\graph}(\treeAutomaton)$ amounts to counting the number of unit decompositions of depth $\mathit{depth}(\boldT)$ in 
$\lang(\treeAutomaton(\boldT,k\cdot z)\cap \treeAutomaton)$. 
In other words, this problem is equivalent to the problem of counting the number of terms accepted by $\treeAutomaton\cap \treeAutomaton(\boldT,k\cdot z)$
in depth $\mathit{depth}(\boldT)$. Since $\mathit{depth}(\boldT)\leq |\boldT|$, by Lemma \ref{lemma:CountingTrees}, this counting process can be done in time 
$|\boldT|^{O(1)} \cdot q^{O(k\cdot z)} \cdot |\treeAutomaton|^{O(1)}$. 
\end{proof}

Next, in Section \ref{section:MSOandTreeSliceLanguages} we will show that for each \msotwo sentence  
$\varphi$ and each $k,z\in \N$, one can construct a $z$-saturated tree-automaton $\treeAutomaton(\varphi,k,z)$ 
representing the set of all digraphs that at the same time are the union of $k$ directed paths and satisfy $\varphi$. Subsequently, 
in Section \ref{section:ProofOfMainTheorem} we will show how to restrict $\treeAutomaton(\varphi,k,z)$ into a 
$z$-saturated slice tree-automaton $\treeAutomaton(\varphi,k,z,l,\alpha)$ representing only the digraphs in $\lang_{\graph}(\treeAutomaton(\varphi,k,z))$
which have a prescribed number $l$ of vertices and a prescribed weight $a\in \Omega$. The proof of Theorem \ref{theorem:MainTheoremDirectedTreewidth}
will follow by plugging the tree-automaton $\treeAutomaton(\varphi,k,z,l,\alpha)$ into Theorem \ref{theorem:CountingSubgraphs}.

\section{\msotwo Logic and Tree Slice Languages}
\label{section:MSOandTreeSliceLanguages}

Defining interesting families of digraphs via $z$-saturated tree-automata is a difficult task. 
The difficulty relies on the fact that to construct a $z$-saturated tree-automaton we have to make sure 
that for each digraph $H$ in the graph language $\lang_{\graph}(\treeAutomaton)$, 
all unit decompositions of $H$ with tree-zig-zag number at most $z$ are in the 
slice language $\lang(\treeAutomaton)$. In this section we will introduce a suitable way of circumventing
this difficulty by using the monadic second order logic of graphs with edge set quantifications, or \msotwo logic 
for short. This logic, which extends first order logic by incorporating quantification over sets of vertices and over sets of edges, 
is able to express a large variety of natural graph properties \cite{CourcelleEngelfriet2012}. 
We will show that for any \msotwo sentence $\varphi$ and any $k,z\in \N$ we can automatically construct a $z$-saturated slice tree-automaton $\treeAutomaton(\varphi,k,z)$ whose graph language consists of 
all digraphs that at the same time satisfy $\varphi$ and are the union of $k$ directed paths 
(Theorem \ref{theorem:MonadicSliceTreeAutomataZSaturated}). 

Let $\vertexlabel$ be a set of vertex labels and $\edgelabel$ be a set of edge labels.
A $(\vertexlabel,\edgelabel)$-labeled digraph is a relational structure $G=(V,E,s,t,\vertexlabeling,\edgelabeling)$  
comprising a set of vertices $V$, a set of edges $E$, source and target relations $s,t \subseteq E\times V$,
a vertex-labeling relation $\vertexlabeling \subseteq V\times \vertexlabel$ and an edge-labeling relation $\edgelabeling \subseteq E\times \edgelabel$.
The language of \msotwo logic for $(\vertexlabel,\edgelabel)$-labeled digraphs includes the connectives $\vee,\wedge,\neg$, variables for vertices, 
edges, sets of vertices and sets of edges, the quantifier $\exists$ that can be applied to these variables, and the following predicates:

\begin{enumerate}
	\item $x\in X$ where $x$ is a vertex variable and $X$ a vertex set variable, 
	\item $y\in Y$ where $y$ is an edge variable and $Y$ an edge set variable,
	\item Equality, $=$, of variables representing vertices, edges, sets of vertices and sets of edges.
	\item $s(y,x)$ where $y$ is an edge variable, $x$ a vertex variable, and the interpretation is that $x$ is the source of $y$.
	\item $t(y,x)$ where $y$ is an edge variable, $x$ a vertex variable, and the interpretation is that $x$ is the target  $y$.
	\item For each $a\in \vertexlabel$, a predicate $\vertexlabeling(x,a)$ where $x$ is a vertex variable, and the interpretation is 
		that $x$ is a vertex labeled with $a$.
	\item For each $b\in \edgelabel$, a predicate $\edgelabeling(x,b)$ where $x$ is an edge variable, and the interpretation is 
		that $x$ is an edge labeled with $b$.
\end{enumerate}

Let $\mathcal{X}$ be a set of free first order variables and second order variables. An interpretation of $\mathcal{X}$ 
in $G$ is a function $M:\mathcal{X}\rightarrow (V\cup E) \cup (2^{V}\cup 2^{E})$ that associates with 
each vertex variable $x\in \mathcal{X}$, a vertex in $V$, with each edge variable $y\in \mathcal{X}$, an edge in $E$, 
with each vertex set variable $X\in \mathcal{X}$ a set of vertices and with each edge set variable $Y\in \mathcal{X}$,
a set of edges.  The semantics of a formula $\varphi$ with free variables 
$\mathcal{X}$ being true under interpretation $M$ is the standard one. An \msotwo {\em sentence} is an \msotwo formula without free 
variables. 
If $G=(V,E,s,t,\vertexlabeling,\edgelabeling)$ is a $\mbox{$(\vertexlabel,\edgelabel)$-labeled}$ digraph and $\varphi$
is an \msotwo sentence then we write $G\models \varphi$ to indicate that $G$ satisfies $\varphi$.  
Let $\newslicealphabet(c,\vertexlabel,\edgelabel)= \newslicealphabet_0(c,\vertexlabel,\edgelabel) \cup \newslicealphabet_1(c,\vertexlabel,\edgelabel)
\cup \newslicealphabet_2(c,\vertexlabel,\edgelabel)$ be the ranked slice alphabet defined in Section \ref{subsection:SliceAlphabets} (with $r=2$).
Lemma \ref{lemma:MonadicSliceTreeAutomata} below, which will be proved in $\mbox{Section \ref{subsection:ProofLemmaMonadicSliceTreeAutomata},}$
establishes a connection between \msotwo logic and slice tree-automata.

\begin{lemma} 
\label{lemma:MonadicSliceTreeAutomata} For every \msotwo sentence $\varphi$ over $(\vertexlabel,\edgelabel)$-labeled digraphs 
and every $c\in \N$, one can construct a normalized deterministic slice tree-automaton $\treeAutomaton(\varphi,c)$ over
$\newslicealphabet(c,\vertexlabel,\edgelabel)$ generating the following tree slice language:
\begin{equation}
\lang(\treeAutomaton(\varphi,c)) = \{\boldT \in \lang(\newslicealphabet(c,\vertexlabel,\edgelabel))\;|\; \composedT \models \varphi\}.
\end{equation}
\end{lemma}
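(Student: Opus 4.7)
The plan is to prove Lemma \ref{lemma:MonadicSliceTreeAutomata} by induction on the structure of $\varphi$, following the classical B\"uchi-Elgot-Trakhtenbrot correspondence between \mso logic and tree automata, adapted to the slice setting. The key idea will be to prove a stronger statement by induction: for every \msotwo formula $\varphi(\mathcal{X})$ with free first-order and second-order variables $\mathcal{X}$, one can construct a deterministic slice tree-automaton $\treeAutomaton(\varphi,c,\mathcal{X})$ over an extended alphabet $\interpretedAlphabet$ that recognizes precisely those annotated unit decompositions whose annotation encodes an assignment satisfying $\varphi$ in the represented digraph $\composedT$. The original statement then follows by specializing to $\mathcal{X}=\emptyset$ and intersecting with the initial slice tree-automaton $\treeAutomaton(\newslicealphabet(c,\vertexlabel,\edgelabel))$ from Proposition \ref{proposition:InitialAutomaton}.

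The first step will be to define the extended slice alphabet $\interpretedAlphabet$. Each slice in $\newslicealphabet(c,\vertexlabel,\edgelabel)$ will be augmented with a tag indicating, for each first-order or set variable in $\mathcal{X}$, which center vertex (if any) and which edges of the slice are marked as belonging to the interpretation of that variable. Since slices of width at most $c$ contain a bounded number of edges, the number of tags per slice is finite and the extended alphabet remains a ranked alphabet. A unit decomposition over $\interpretedAlphabet$ simultaneously encodes a digraph $\composedT$ and an interpretation $M$ of $\mathcal{X}$ in that digraph.

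The inductive construction of $\treeAutomaton(\varphi,c,\mathcal{X})$ will proceed as follows. For atomic formulas concerning a single slice position, such as vertex labels $\vertexlabeling(x,a)$, edge labels $\edgelabeling(y,b)$, vertex set membership $x\in X$, or source/target predicates where the edge lies entirely within one slice, the automaton only needs to perform a local check at the annotated position. For Boolean combinations we will invoke the closure under union, intersection, and complementation guaranteed by Lemma \ref{lemma:PropertiesOfTreeAutomata}, and for existential quantification $\exists X\, \psi$ we will use the projection $\interpretedAlphabet \rightarrow \interpretedAlphabetMinusX$ that erases the annotation for $X$, again invoking Lemma \ref{lemma:PropertiesOfTreeAutomata}. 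Finally, we will determinize the resulting tree-automaton, using Lemma \ref{lemma:PropertiesOfTreeAutomata} once more.

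The main obstacle will be the treatment of atomic formulas involving edges, because an edge of $\composedT$ does not correspond to a single position of $\boldT$ but rather to an entire sliced edge sequence spanning a path of positions in the term. To encode edge set membership $y\in Y$ faithfully, the annotation must mark the \emph{sliced parts} of each selected edge at every position the sequence visits, and the automaton must verify that the marked sliced parts actually glue together consistently along their shared frontiers: matching frontier vertices, matching edge labels, and matching directions. These are local conditions at each transition, hence finite-state checkable, but one must carefully design the annotation so that for each marked sliced part, its frontier endpoint is linked to a unique marked continuation in the neighboring slice. Once this bookkeeping is set up, the source and target predicates $s(y,x)$ and $t(y,x)$ reduce to checking that the initial and terminal sliced parts of the selected sequence have the designated center vertex as source and target respectively, which is again a local condition. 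This will complete the inductive construction and hence the proof of the lemma.
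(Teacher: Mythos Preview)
Your proposal is correct, but the paper takes a somewhat different route that is worth noting. Rather than performing the induction directly on the digraph-level formula $\varphi$ and building the sliced-edge-sequence bookkeeping into the automaton, the paper first introduces an auxiliary \msotwo vocabulary for unit decompositions themselves, with predicates such as $C(x)$, $F_{j,i}(x)$, $\mathit{Neighbors}(x_1,x_2)$, $\hat{s}(y,x)$, $\hat{t}(y,x)$, in which variables range over \emph{slice-level} vertices (including frontier vertices) and slice-level edges. It then proves an \msotwo interpretation result (Proposition~\ref{proposition:TranslationFormula}): every digraph-level sentence $\varphi$ can be translated into a unit-decomposition sentence $\psi$ such that $\boldT\models\psi$ iff $\composedT\models\varphi$; the translation uses a single fixed formula $\theta(u,y,v)$ that expresses ``there is a sliced edge sequence with first vertex $u$, first edge $y$, and last vertex $v$'', exploiting the fact that a sliced edge sequence is determined by its first edge. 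Only after this translation does the paper run the standard inductive automata construction, now over the unit-decomposition vocabulary, where every atomic predicate is genuinely local. Your approach instead encodes an edge of $\composedT$ by marking \emph{all} of its sliced parts and pushes the sequence-consistency check into the automaton transitions. Both are valid; the paper's two-step route cleanly factors out the edge-tracing into a one-time \msotwo definition and leaves the automata side entirely standard, while your direct route avoids the auxiliary vocabulary at the cost of a more handcrafted base case for edge atoms.
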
 

In other words, Lemma \ref{lemma:MonadicSliceTreeAutomata} says that given an \msotwo sentence  $\varphi$ and a number $c\in N$ we 
can construct a slice tree-automaton whose tree slice language consists of all unit decompositions of width at most 
$c$ representing a digraph that satisfies $\varphi$. Another way of interpreting $\mbox{Lemma \ref{lemma:MonadicSliceTreeAutomata}}$
is as a "sliced" version of Courcelle's celebrated model checking theorem \cite{Courcelle1990MSO}. Recall that Courcelle's theorem 
states that graphs of constant {\em undirected} treewidth can be model checked in linear time against \msotwo sentences. 
In analogy with Courcelle's theorem, Lemma \ref{lemma:MonadicSliceTreeAutomata} says that digraphs admitting unit decompositions of constant width can be model checked in linear time against \msotwo sentences. 
In order to verify whether a digraph $G$ admitting a unit decomposition of width at most $c$ satisfies an \msotwo sentence  $\varphi$,
 all one needs to do is to find a normalized unit decomposition $\boldT$ of $G$ of width at most $c$,
and then check in linear time if $\boldT$ is accepted by $\treeAutomaton(\varphi,c)$. 

In this work however we will not be interested in model checking properties on digraphs admitting unit decompositions of 
constant width. Instead we will use Lemma \ref{lemma:MonadicSliceTreeAutomata} to construct $z$-saturated slice tree-automata
representing families of digraphs that are the union of $k$ directed paths and satisfy a given prescribed \msotwo property. These automata,
 which will be constructed in the proof of Theorem \ref{theorem:MonadicSliceTreeAutomataZSaturated} below, 
can be coupled to Theorem \ref{theorem:CountingSubgraphs} to provide a way of counting subgraphs satisfying interesting 
properties on digraphs of constant tree-zig-zag number, and hence, on digraphs of constant {\em directed} treewidth. 
At this point our approach differs substantially from Courcelle's theorem \cite{Courcelle1990MSO} as well as from 
the approaches in \cite{ArnborgLagergrenSeese1991,CourcelleMakowskyRotics2000} in the sense 
that, as mentioned in the introduction, digraphs of constant {\em directed} treewidth may have simultaneously unbounded {\em undirected} 
treewidth and unbounded clique-width. 

\begin{theorem} 
\label{theorem:MonadicSliceTreeAutomataZSaturated} For every \msotwo sentence $\varphi$ and every $k,z\in \N$, 
one can effectively construct a normalized deterministic $z$-saturated slice tree-automaton $\treeAutomaton(\varphi,k,z)$ over 
the slice alphabet $\newslicealphabet(k\cdot z,\vertexlabel,\edgelabel)$ representing the following graph language. 
\begin{equation}
\lang_{\graph}(\treeAutomaton(\varphi,k,z)) = \{[H]\;|\; H \models \varphi,\; H\mbox{ is the union of $k$ directed paths}\}. 
\end{equation}
\end{theorem}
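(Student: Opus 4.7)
The plan is to reduce this theorem to Lemma \ref{lemma:MonadicSliceTreeAutomata} by folding the ``union of $k$ directed paths'' requirement into the \msotwo sentence. First I would write an \msotwo sentence $\psi_k$ expressing that the ambient digraph is the union of $k$ directed simple paths: $\psi_k$ asserts the existence of vertex sets $X_1,\ldots,X_k$ and edge sets $Y_1,\ldots,Y_k$ with $V=\bigcup_i X_i$ and $E=\bigcup_i Y_i$, such that for each $i$ the pair $(X_i,Y_i)$ forms a directed simple path in the ambient digraph. The latter is a standard \msotwo property: every edge of $Y_i$ has both endpoints in $X_i$; every vertex of $X_i$ has $Y_i$-in-degree and $Y_i$-out-degree at most one; either $X_i$ is a singleton or there is exactly one $Y_i$-source and exactly one $Y_i$-sink in $X_i$; and $X_i$ is weakly connected by $Y_i$ (which, together with the degree conditions, rules out disjoint cycles).

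Next I would set $\varphi' := \varphi \wedge \psi_k$ and apply Lemma \ref{lemma:MonadicSliceTreeAutomata} with $c = k\cdot z$ to obtain a normalized deterministic slice tree-automaton over $\newslicealphabet(k\cdot z,\vertexlabel,\edgelabel)$, which I would designate as $\treeAutomaton(\varphi,k,z)$. By Lemma \ref{lemma:MonadicSliceTreeAutomata} its slice language equals
\[
\{\boldT \in \lang(\newslicealphabet(k\cdot z,\vertexlabel,\edgelabel)) \;|\; \composedT \models \varphi \wedge \psi_k\},
\]
so its graph language is precisely $\{[H] : H \models \varphi \text{ and } H \text{ is the union of } k \text{ directed paths}\}$, as required.

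The main verification is $z$-saturation. Suppose $[H] \in \lang_{\graph}(\treeAutomaton(\varphi,k,z))$, so $H$ satisfies $\varphi \wedge \psi_k$, and let $\boldT$ be any unit decomposition of $H$ over $\newslicealphabet(k\cdot z,\vertexlabel,\edgelabel)$ with $\treezigzagnumber(\boldT) \leq z$. Because $H$ is the union of $k$ directed paths and $\boldT$ has tree-zig-zag number at most $z$, Proposition \ref{proposition:UnionZigZag} yields $\width(\boldT) \leq k\cdot z$, so $\boldT$ genuinely resides in $\lang(\newslicealphabet(k\cdot z,\vertexlabel,\edgelabel))$. Since $\composedT = H \models \varphi'$, Lemma \ref{lemma:MonadicSliceTreeAutomata} places $\boldT$ in $\lang(\treeAutomaton(\varphi,k,z))$, establishing saturation.

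The real difficulty is the upstream Lemma \ref{lemma:MonadicSliceTreeAutomata}, a slice-theoretic analogue of Courcelle's theorem that must be established separately by compiling \msotwo formulas into slice tree-automata over $\newslicealphabet(c,\vertexlabel,\edgelabel)$; once that lemma is in hand, the present theorem follows cleanly from the \msotwo expressibility of ``union of $k$ paths'' together with the width bound of Proposition \ref{proposition:UnionZigZag}. A minor technical point in formulating $\psi_k$ is to accommodate trivial one-vertex paths so that the intended notion of ``union'' (allowing redundant or singleton summands, and paths that are not required to be internally disjoint) is correctly captured.
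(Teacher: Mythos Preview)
Your proposal is correct and follows essentially the same approach as the paper: conjoin $\varphi$ with an \msotwo sentence expressing ``union of $k$ directed paths'' (the paper calls it $\gamma(k)$), invoke Lemma \ref{lemma:MonadicSliceTreeAutomata} with $c=k\cdot z$, and use Proposition \ref{proposition:UnionZigZag} to verify $z$-saturation. The only cosmetic point is that ``$\composedT = H$'' should read ``$\composedT \simeq H$'', but since \msotwo properties are isomorphism-invariant the argument is unaffected.
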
 
\begin{proof}
Let $\gamma(k)$ be the \msotwo sentence  that is true in a digraph $H$ if and only if $H$ is the union of $k$ directed paths.
Using Lemma \ref{lemma:MonadicSliceTreeAutomata} we construct a normalized deterministic tree-automaton 
$\treeAutomaton(k\cdot z, \varphi\wedge \gamma(k))$ generating the set of all unit decompositions $\boldT$ over 
$\newslicealphabet(k\cdot z,\vertexlabel,\edgelabel)$ for which the digraph $\composedTprime$  satisfies $\varphi\wedge \gamma(k)$. 
In other words if $[H]\in \lang_{\graph}(\treeAutomaton(k\cdot z,\varphi\wedge \gamma(k)))$
then $H$ satisfies $\varphi$ and is the union of $k$ directed paths. 
For the converse, suppose that $H$ is a digraph that is the union of $k$ directed paths and satisfies $\varphi$.
Then by Proposition \ref{proposition:UnionZigZag}, each unit decomposition $\boldT$ of $H$ of tree-zig-zag number at most $z$ has width 
at most $k\cdot z$. Therefore, $\boldT \in \lang(\treeAutomaton(\varphi,k,z))$. This implies not only that $[H]\in \lang_{\graph}(\treeAutomaton(\varphi,k,z))$
but also that $\lang(\treeAutomaton(\varphi,k,z))$ is $z$-saturated. 
\end{proof}

\subsection{Proof of Lemma \ref{lemma:MonadicSliceTreeAutomata}}
\label{subsection:ProofLemmaMonadicSliceTreeAutomata}

To prove Lemma \ref{lemma:MonadicSliceTreeAutomata} we need to translate each \msotwo sentence $\varphi$
expressing a property of $(\vertexlabel,\edgelabel)$-labeled digraphs into an \msotwo sentence $\psi$ expressing a property 
of unit decompositions over $\newslicealphabet(c,\vertexlabel,\edgelabel)$ in such a way that
for each unit decomposition $\boldT$ over $\newslicealphabet(c,\vertexlabel,\edgelabel)$, $\boldT$ satisfies $\psi$ if and 
only if the digraph $\composedT$ represented by $\boldT$ satisfy $\varphi$. 
With this goal in mind we need to define a new \msotwo vocabulary which is suitable for 
expressing properties of unit decompositions. 
The language of \msotwo logic for unit decompositions 
over $\newslicealphabet(c,\vertexlabel,\edgelabel)$ has the connectives $\vee,\wedge,\neg$, vertex variables, edge variables, 
vertex set variables and edge set variables, the quantifier $\exists$ that can be applied to these variables, 
and the following predicates:

\begin{enumerate}
	\item $x\in X$ where $x$ is a vertex variable and $X$ a vertex set variable, 
	\item $y\in Y$ where $y$ is an edge variable and $Y$ an edge set variable,
	\item Equality, $=$, of variables representing vertices, edges, sets of vertices and sets of edges.
	\item $\hat{s}(y,x)$ where $y$ is an edge variable, $x$ a vertex variable, and the interpretation is that for some position $p\in \positions(\boldT)$, 
		$x$ is a vertex of $\boldT[p]$, $y$ is an edge of $\boldT[p]$ and $x$ is the source of $y$.
	\item $\hat{t}(y,x)$ where $y$ is an edge variable, $x$ a vertex variable, and the interpretation is that for some position $p\in \positions(\boldT)$, 
		$x$ is a vertex of $\boldT[p]$, $y$ is an edge of $\boldT[p]$ and $x$ is the target of $y$.
	\item For each $a\in \vertexlabel$, a predicate $\hat{\vertexlabeling}(x,a)$ where $x$ is a vertex variable, and the interpretation is 
		that for some $p\in \positions(\boldT)$, $x$ is a vertex of $\boldT[p]$ labeled with $a$.
	\item For each $b\in \edgelabel$, a predicate $\hat{\edgelabeling}(y,b)$ where $y$ is an edge variable, and the interpretation is 
		that for some $p\in \positions(\boldT)$, $y$ is an edge of $\boldT[p]$ labeled with $b$.
	\item For each $j\in \{0,1,2\}$ and each $i\in \{1,...,c\}$, the predicate $F_{j,i}(x)$ where $x$ is a vertex variable and 
		the interpretation is that for some position $p\in \positions(\boldT)$, $x$ is the frontier vertex $[j,i]$ of the slice $\boldT[p]$. 
	\item The predicate  $C(x)$ where $x$ is a vertex variable and the interpretation is that for some position $p\in \positions(\boldT)$, $x$ is the unique
		center vertex of the slice $\boldT[p]$. 
	\item The predicate $\mathit{Neighbors}(x_1,x_2)$ where $x_1,x_2$ are vertex variables and the interpretation is that for some position $p\in \positions(\boldT)$
		and some $j\in \{1,2\}$, $x_1$ is a vertex of $\boldT[p]$ and $x_2$ a vertex of $\boldT[pj]$.
\end{enumerate}

Recall from Section \ref{subsection:DigraphsUnitDecompositions} that if $\boldT$ is a unit decomposition then the digraph 
$\composedT$ has an edge $e_K$ with source $e^{s}_K=v_{p}$ and target $e^t_K=v_{p'}$ if and only if there exists a sliced edge 
sequence 

$$K \equiv  (p_1,a_{1},e_{1},b_{1})(p_2,a_{2},e_{2},b_{2})...(p_n,a_{n},e_{n},b_{n})$$

from $p_1=p$ to $p_n=p'$. 
We note that each edge of each slice occurring in $\boldT$ belongs 
to a unique sliced edge sequence. In particular, each sliced edge sequence $K$ is unequivocally determined by its first edge $e_{1}$. 
 Using conditions \ref{item:edgesequence1}-\ref{item:edgesequence5} of Definition \ref{definition:SlicedEdgeSequence}, 
it is straightforward to write an \msotwo formula $\theta(u,y,v)$ in the vocabulary of unit decompositions with free vertex variables $u,v$ and 
free edge variable $y$, which is true in a unit decomposition $\boldT$ if and only if there exist positions $p$ and $p'$ in $\boldT$ 
such that $u$ is the center vertex of $\boldT[p]$, $y$ is an edge in $\boldT[p]$ with source $u$, $v$ is the center vertex of $\boldT[p']$, and there exists 
a sliced edge sequence from $p$ to $p'$ whose first edge is $y$. Using the formula $\theta(u,y,v)$ we can map formulas in the 
vocabulary of $(\vertexlabel,\edgelabel)$-labeled digraphs to formulas in the vocabulary of unit decompositions over $\newslicealphabet(c,\vertexlabel,\edgelabel)$, 
as done below in Proposition \ref{proposition:TranslationFormula}.

\begin{proposition}
\label{proposition:TranslationFormula}
Let $\varphi$ be an \msotwo formula in the vocabulary of $(\vertexlabel,\edgelabel)$-labelled graphs. There 
is a formula $\psi$ in the vocabulary of unit decompositions over $\newslicealphabet(c,\vertexlabel,\edgelabel)$ such 
that for each unit decomposition $\boldT$ over $\newslicealphabet(c,\vertexlabel,\edgelabel)$, 
$\boldT \models \psi$ if and only if $\composedT \models \varphi$. 
\end{proposition}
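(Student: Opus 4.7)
The plan is to define $\psi$ by structural induction on $\varphi$, exploiting the correspondence between the objects of $\composedT$ and objects of $\boldT$ established in Subsection \ref{subsection:DigraphsUnitDecompositions}. Specifically, each vertex $v_p$ of $\composedT$ corresponds to the unique center vertex of the slice $\boldT[p]$ (so it can be singled out by $C$), and each edge $e_K$ of $\composedT$ corresponds bijectively to a sliced edge sequence $K$, which by Definition \ref{definition:SlicedEdgeSequence} is unequivocally determined by its first sliced edge $e_{1}$; this first sliced edge is always an edge of $\boldT$ whose source is the center vertex of its own slice. Accordingly, I introduce two macro formulas $\mathit{IsV}(x) \equiv C(x)$ and $\mathit{IsE}(y) \equiv \exists u.\, (C(u) \wedge \hat{s}(y,u))$ which distinguish the \emph{graph-vertices} and \emph{graph-edges} of $\boldT$.

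Next I would assume that $\theta(u,y,v)$ is available as an \msotwo formula in the vocabulary of unit decompositions, expressing that there exists a sliced edge sequence from the position of $u$ to the position of $v$ whose first edge is $y$ (so that $u,v$ are both center vertices and $y$ is the first edge of the sequence). Given $\theta$, the translation $\varphi \mapsto \psi$ is then carried out inductively: first-order quantifiers $\exists x$ over vertices and $\exists y$ over edges are relativized to $\mathit{IsV}$ and $\mathit{IsE}$ respectively, and second-order quantifiers $\exists X$, $\exists Y$ are relativized to subsets of these; the atomic predicates $x \in X$, $y \in Y$ and equality are left unchanged; the label predicates $\vertexlabeling(x,a)$ and $\edgelabeling(y,b)$ translate to $\hat{\vertexlabeling}(x,a)$ and $\hat{\edgelabeling}(y,b)$, which is sound since the label of $v_p$ coincides with that of the center of $\boldT[p]$, and by Conditions \ref{item:edgesequence4}, \ref{item:edgesequence5} of Definition \ref{definition:SlicedEdgeSequence} together with the gluing condition \ref{gluingtwo}, all sliced parts of $e_K$ (including its first edge $y$) share the same label; finally, $s(y,x)$ translates to $\hat{s}(y,x) \wedge C(x)$, and $t(y,x)$ translates to $\exists u.\, \theta(u,y,x)$, which expresses that $x$ is the center vertex at the terminal position of the sliced edge sequence whose first edge is $y$.

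The main obstacle is constructing $\theta(u,y,v)$ itself, since it must witness the existence of a sliced edge sequence of arbitrary length through the tree structure of $\boldT$. This is accomplished in a standard MSO fashion: existentially quantify an edge set $Y$ and a vertex set $U$ that together encode the sequence, and then force them to satisfy the local conditions \ref{item:edgesequence1}--\ref{item:edgesequence5} of Definition \ref{definition:SlicedEdgeSequence}. Condition \ref{item:edgesequence1} is built in through $\hat{s}$ and $\hat{t}$; Condition \ref{item:edgesequence2} is enforced by requiring the source of $y$ to be $u$ (with $C(u)$) and the target of the last edge in $Y$ to be $v$ (with $C(v)$); Condition \ref{item:edgesequence3} is enforced using $\mathit{Neighbors}$; and Conditions \ref{item:edgesequence4}, \ref{item:edgesequence5} are enforced by requiring that each pair of consecutive edges in $Y$ meets at matched frontier vertices, which is expressible via the predicates $F_{j,i}$ (there are finitely many pairs $(j,i)$ with $j \in \{0,1,2\}$ and $i \in \{1,\dots,c\}$, so the matching can be stated as a finite disjunction). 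To rule out branching and ensure linearity of the sequence, one additionally requires that each position $p$ visited has at most two edges of $Y$ incident to its center or to the matched frontier pairs along the sequence, a further property expressible in \msotwo.

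Once $\theta$ is constructed, correctness of the translation follows by a routine induction: given an interpretation $M$ witnessing $\composedT \models \varphi$, the interpretation $M'$ in $\boldT$ that maps each $v_p$ to the center of $\boldT[p]$ and each edge $e_K$ to the first edge of $K$ satisfies the relativized translation of every subformula, and conversely. The relativization to $\mathit{IsV}$ and $\mathit{IsE}$ guarantees that only canonical representatives are quantified over, yielding the desired equivalence $\boldT \models \psi$ \emph{if and only if} $\composedT \models \varphi$.
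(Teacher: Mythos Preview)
Your proposal is correct and follows essentially the same approach as the paper: both hinge on the auxiliary formula $\theta(u,y,v)$ capturing sliced edge sequences, then perform a syntactic translation of the atomic predicates and argue correctness by induction on $\varphi$. Two minor differences are worth noting. First, you translate $s(y,x)$ to $\hat{s}(y,x)\wedge C(x)$ whereas the paper uses $(\exists v)\,\theta(x,y,v)$; these are equivalent once $y$ is restricted to graph-edges, and your version is a little more direct. Second, and more importantly, you explicitly relativize all quantifiers to $\mathit{IsV}$ and $\mathit{IsE}$, whereas the paper's proof simply replaces the atomic predicates and leaves the quantifiers untouched; your relativization is what actually makes the induction go through cleanly, since without it a formula like $\exists x\,\forall x'.\,(x=x')$ would be interpreted over all slice vertices (including frontier vertices) rather than over the centers that represent vertices of $\composedT$. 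In that sense your write-up is slightly more careful than the paper's sketch, but the underlying idea is the same.
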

\begin{proof}
As mentioned above, using the predicates $C(x)$, $F_{j,i}(x)$, $\mathit{Neighbors}(x_1,x_2)$, 
$\hat{s}(y,x)$ and $\hat{t}(y,x)$ we can define a formula $\theta(u,y,v)$ that is true in a unit decomposition 
$\boldT$ if and only if there is a sliced edge sequence with first vertex $u$, first edge $y$ and 
last vertex $v$. 
The translation from $\varphi$ to $\psi$ proceeds as follows. We replace each occurrence of the predicate $\vertexlabeling(x,a)$ in $\varphi$ 
with the predicate $\hat{\vertexlabeling}(x,a)$, each occurrence of $\edgelabeling(x,a)$
with $\hat{\edgelabeling}(x,a)$, each occurrence of $s(y,x)$ with $(\exists v)\theta(x,y,v)$ where $v$ is a new variable 
not occurring in $\varphi$, and each occurrence of $t(y,x')$ with $(\exists u) \theta(u,y,x')$, where $u$ is a new variable 
not occurring in $\varphi$. Now it is straightforward to prove by induction of the structure of $\varphi$ that 
for each given unit decomposition $\boldT\in \newslicealphabet(c,\vertexlabel,\edgelabel)$, $\boldT\models \psi$ if 
and only if $\composedT\models \varphi$.  
\end{proof}

In the last step of the proof of Lemma \ref{lemma:MonadicSliceTreeAutomata} we will show that for each \msotwo sentence
$\psi$ in the vocabulary of unit decompositions over $\newslicealphabet(c,\vertexlabel,\edgelabel)$, it is possible to 
construct a slice tree-automaton $\treeAutomaton = \treeAutomaton(\psi,\newslicealphabet(c,\vertexlabel,\edgelabel))$ such that 
$\boldT \in \lang(\treeAutomaton)$ if and only if $\boldT$ satisfies $\psi$.

Let $\mathcal{X}$ be a set of variables, and $\boldS \in \newslicealphabet(c,\vertexlabel,\edgelabel)$ be a unit slice with $r$ vertices and $r'$ edges 
(including the frontier vertices).  We represent an interpretation of 
$\mathcal{X}$ in $\boldS$ as a $|\mathcal{X}|\times (r+r')$ boolean matrix $I$ whose rows are 
indexed by the variables in $\mathcal{X}$ and the columns are indexed by the 
vertices and edges of $\boldS$. Intuitively, if $x$ is a vertex (edge) variable 
and $u$ is a vertex (edge) in $\boldS$ then we set $I_{x,u}=1$ if and only if $u$ is 
assigned to $x$. On the other hand, if $X$ is a vertex (edge) set variable 
then $I_{X,u}=1$ if and only if $u$ belongs to the set of vertices (edges) assigned to $X$.
If $\boldT$ is a unit decomposition in $\lang(\newslicealphabet(c,\vertexlabel,\edgelabel))$
then an interpretation of $\mathcal{X}$ in $\boldT$ is a function $\mathcal{I}$ that 
associates with each position $p\in \positions(\boldT)$, an interpretation $\mathcal{I}(p)$ of $\mathcal{X}$ in the slice $\boldT[p]$. 
We define the $\mathcal{X}$-interpreted extension of $\newslicealphabet(c,\vertexlabel,\edgelabel)$ as the following set.

\begin{equation*} 
\interpretedAlphabet=\bigcup_{\boldS\in \newslicealphabet(c,\vertexlabel,\edgelabel)} \boldS^{\mathcal{X}}
\end{equation*}

where for each slice $\boldS\in \newslicealphabet(c,\vertexlabel,\edgelabel)$,

\begin{equation}
 \boldS^{\mathcal{X}} = \{ (\boldS,I)\;|\; I \mbox{ is an interpretation of } \mathcal{X} \mbox{ in } \boldS\}. 
\end{equation}

If $\boldT$ is a unit decomposition in $\lang(\newslicealphabet(c,\vertexlabel,\edgelabel))$ and $\mathcal{I}$ is an 
interpretation of $\mathcal{X}$ in $\boldT$ then we write $\boldT^{\mathcal{I}}$ to denote the term in 
$\lang(\interpretedAlphabet)$ in which $\boldT^{\mathcal{I}}[p] = (\boldT[p],\mathcal{I}(p))$ for each position $p\in \positions(\boldT)$.
We say that $\boldT^{\mathcal{I}}$ is an interpreted term. 

Now we are in a position to prove Lemma \ref{lemma:MonadicSliceTreeAutomata}. For each \msotwo formula $\psi$
in the vocabulary of unit decompositions over $\newslicealphabet(c,\vertexlabel,\edgelabel)$ 
with free variables $\mathcal{X}$ we will construct a tree-automaton $\treeAutomaton(\psi,\interpretedAlphabet)$ 
over $\interpretedAlphabet$ whose slice language  
$\lang(\treeAutomaton(\psi,\interpretedAlphabet))$ consists of all interpreted terms $\boldT^{\mathcal{I}}\in \lang(\interpretedAlphabet)$ 
for which $\boldT \models \psi(\mathcal{X})$ with interpretation $\mathcal{I}$. The tree-automaton 
$\treeAutomaton(\psi,\newslicealphabet(c,\vertexlabel,\edgelabel,\mathcal{X}))$ is constructed inductively with 
respect to the structure of the formula $\psi$.

\paragraph{Base Case} In the base case the formula $\psi$ is one of the predicates 
$x\in X$, $x_1=x_2$, 
$C(x)$, $F_{j,i}(x)$ for $j\in \{0,1,2\}$, $\mathit{Neighbors}(x_1,x_2)$, $\hat{s}(y,x)$, $\hat{t}(y,x)$, $\hat{\vertexlabeling}(x,a)$ or 
$\hat{\edgelabeling}(y,b)$. Below, we describe the behavior of the tree-automaton $\treeAutomaton = \treeAutomaton(\psi,\interpretedAlphabet)$
when $\psi$ is each of these predicates. If $x$ is a vertex (edge) variable in $\psi$, then we say that an interpreted term 
$\boldT^{\mathcal{I}}$ passes the singleton test with respect to $x$ if there exists a unique position $p\in \positions(\boldT)$ 
and a unique vertex (edge) $u$ in $\boldT[p]$ such that $\mathcal{I}(p)_{x,u}=1$. We note that this condition can be easily 
checked by a tree-automaton over $\newslicealphabet(c,\vertexlabel,\edgelabel,\mathcal{X})$. Intuitively, $\boldT^{\mathcal{I}}$ passes the singleton 
test with respect to $x$ if precisely one vertex (edge) of some slice of $\boldT$ is assigned to $x$.

\begin{enumerate}
	\item If $\psi \equiv (x_1 = x_2)$ where $x_1$ and $x_2$ are vertex (edge) variables, 
		then $\treeAutomaton$ accepts $\boldT^{\mathcal{I}}$ if and only if $\boldT^{\mathcal{I}}$ passes the singleton test with 
		respect to both $x_1$ and $x_2$,  and for each position $p\in \positions(\boldT)$, and each vertex (edge) 
		$u\in \boldT[p]$, $\mathcal{I}(p)_{x_1,u} = \mathcal{I}(p)_{x_2,u}$.
 	\item If $\psi \equiv (X_1 = X_2)$ where $X_1$ and $X_2$ are vertex (edge) set variables, 
		then $\treeAutomaton$ accepts $\boldT^{\mathcal{I}}$ if and only if $\mathcal{I}(p)_{x_1,u} = \mathcal{I}(p)_{x_2,u}$ 
		for each position $p\in \positions(\boldT)$, and each vertex (edge) $u\in \boldT[p]$. 
	\item If $\psi \equiv x\in X$ where $x$ is a vertex (edge) variable and $X$ is a vertex (edge) set 
		variable then $\treeAutomaton$ accepts $\boldT^{\mathcal{I}}$ if and only if  $\boldT^{\mathcal{I}}$ passes the singleton test with 
		respect to $x$ and for each position $p\in \positions(\boldT)$, and each vertex (edge) $u\in \boldT[p]$, 
		$\mathcal{I}(p)_{x,u}=1$ implies that $\mathcal{I}(p)_{X,u}=1$.
	\item If $\psi \equiv \hat{s}(y,x)$ (resp. $\psi = \hat{t}(y,x)$) then $\treeAutomaton$ accepts $\boldT^{\mathcal{I}}$ if and only if $\boldT^{\mathcal{I}}$ passes the 
		singleton test with respect to both $y$ and $x$, and there exists a position $p\in \positions(\boldT)$, a vertex $v\in \boldT[p]$ and 
		an edge $e$ in $\boldT[p]$ such that $v$ is the source (target) of $e$ and $\mathcal{I}(p)_{x,v}=1$ and $\mathcal{I}(p)_{y,e}=1$. 
	\item  If $\psi \equiv \hat{\vertexlabeling}(x,a)$  then $\treeAutomaton$ accepts $\boldT^{\mathcal{I}}$ if and only if $\boldT^{\mathcal{I}}$ passes 
		the singleton test with respect to $x$ and there is a position $p\in \positions(\boldT)$ and a vertex $v\in \boldT[p]$ such that 
		$\mathcal{I}(p)_{x,v}=1$ and $v$ is labeled with $a$. 
	\item  If $\psi \equiv \hat{\edgelabeling}(y,b)$  then $\treeAutomaton$ accepts $\boldT^{\mathcal{I}}$ if and only if $\boldT^{\mathcal{I}}$ passes 
		the singleton test with respect to $y$ and there is a position $p\in \positions(\boldT)$ and an edge $e\in \boldT[p]$ such that 
		$\mathcal{I}(p)_{y,e}=1$ and $e$ is labeled with $b$. 
	\item If $\psi \equiv C(x)$ (resp. $\psi\equiv F_{i,j}(x)$) then $\treeAutomaton$ accepts $\boldT^{\mathcal{I}}$ if and only if $\boldT^{\mathcal{I}}$ 
		passes the singleton test with 	respect to $x$ and there exists a position $p\in \positions(\boldT)$ and a vertex $v\in \boldT[p]$ such that 
		$\mathcal{I}(p)_{x,v}=1$ and $v$ is the center vertex of $\boldT[p]$ (resp. $v$ is the vertex $[j,i]$ at the $j$-th frontier of $\boldT[p]$). 
	\item If $\psi \equiv \mathit{Neighbors}(x_1,x_2)$ then $\treeAutomaton$ accepts $\boldT^{\mathcal{I}}$ if and only if $\boldT^{\mathcal{I}}$ passes 
		the singleton test with respect to both $x_1$ and $x_2$, and there exists a position $p\in \positions(\boldT)$, a number $j\in \{1,2\}$,
		a vertex $v\in \boldT[p]$ and a vertex $v'\in \boldT[pj]$ such that $\mathcal{I}(p)_{x_1,v}=1$ and $\mathcal{I}(pj)_{x_2,v'} = 1$.
\end{enumerate}

\paragraph{Disjunction, conjunction and negation}
The three boolean operations $\vee,\wedge,\neg$ are handled using the fact that tree-automata are effectively 
closed under union, intersection and complement (Lemma \ref{lemma:PropertiesOfTreeAutomata}).
Below we let $\treeAutomaton(\interpretedAlphabet)$ be the slice tree-automaton generating the tree slice language $\lang(\interpretedAlphabet)$, 
i.e., the set of all unit decompositions over $\interpretedAlphabet$. 

\begin{equation}
\begin{array}{c}
\treeAutomaton(\psi \vee \psi',\interpretedAlphabet) = \treeAutomaton(\psi ,\interpretedAlphabet) \cup \treeAutomaton(\psi',\interpretedAlphabet) \\
\\
\treeAutomaton(\psi \wedge \psi',\interpretedAlphabet) = \treeAutomaton(\psi ,\interpretedAlphabet) \cap \treeAutomaton(\psi',\interpretedAlphabet) \\
\\
\treeAutomaton(\neg \psi,\interpretedAlphabet) = \overline{\treeAutomaton(\psi ,\interpretedAlphabet)} \cap 
\treeAutomaton(\interpretedAlphabet)\\
\end{array}
\end{equation}

Observe that in the definition of $\treeAutomaton(\neg \psi,\interpretedAlphabet)$, the 
intersection with the tree-automaton $\treeAutomaton(\interpretedAlphabet)$ guarantees 
that the language generated by $\treeAutomaton(\neg \psi, \interpretedAlphabet)$ has no
term that is not an unit decomposition.

\paragraph{Existential Quantification}
To eliminate existential quantifiers we proceed as follows: For each variable $X$, 
define the slice projection $\mathit{Proj}_{X}:\interpretedAlphabet\rightarrow \interpretedAlphabetMinusX$  that 
sends each interpreted slice $(\boldS,I)\in \interpretedAlphabet$ to the interpreted slice $(\boldS,I\backslash X)$ in
the slice alphabet $\interpretedAlphabetMinusX$
where $I\backslash X$ denotes the matrix $I$ with the row corresponding to the variable $X$ deleted. Subsequently, 
we extend $\mathit{Proj}_{X}$ homomorphically to terms by setting $\mathit{Proj}_X(\boldT)[p] = \mathit{Proj}_X(\boldT[p])$ 
to each position $p$ in $\positions(\boldT)$. Finally, we extend $\mathit{Proj}_X$ to tree slice languages by applying 
the projection to each term of the language.
Then we set $$\treeAutomaton(\exists X \psi(\mathcal{X}), \interpretedAlphabetMinusX) = \mathit{Proj}_{X}(\treeAutomaton(\psi(\mathcal{X}),\interpretedAlphabet)).$$

We note that if $\psi$ is a sentence, i.e., a formula without free variables, then by the end of this 
inductive process all variables occurring in $\psi$ will have been projected. In this way, the slice language 
$\lang(\treeAutomaton(\psi,\newslicealphabet(c,\vertexlabel,\edgelabel)))$
will consist precisely of the unit decompositions $\boldT$ over $\newslicealphabet(c,\vertexlabel,\edgelabel)$ 
for which $\boldT \models \psi$. 

To finalize the proof of Lemma \ref{lemma:MonadicSliceTreeAutomata}, let $\varphi$ be a sentence in the vocabulary of $(\vertexlabel,\edgelabel)$-labeled digraphs.
We apply Proposition \ref{proposition:TranslationFormula} to translate $\varphi$ into a sentence $\psi$ in the vocabulary of unit decompositions 
over $\newslicealphabet(c,\vertexlabel,\edgelabel)$ such that $\boldT\models \psi$ 
if and only if $\composedT \models \varphi$. By setting $\treeAutomaton(\varphi,c) = \treeAutomaton(\psi,\newslicealphabet(c,\vertexlabel,\edgelabel))$
we have that $\treeAutomaton(\varphi,c)$ accepts $\boldT$ if and only if $\boldT\models \psi$ if and only if $\composedT\models \varphi$. This concludes 
the proof of Lemma \ref{lemma:MonadicSliceTreeAutomata}. 
$\square$

\section{Proof of Theorem \ref{theorem:MainTheoremDirectedTreewidth}}
\label{section:ProofOfMainTheorem}

In this section we will prove Theorem \ref{theorem:MainTheoremDirectedTreewidth}, which states that given an 
\msotwo sentence $\varphi$, a digraph $G$ of directed treewidth $w$ and a number $k\in \N$ one can count in polynomial 
time the number of subgraphs of $G$ that are the union of $k$ directed paths, satisfy $\varphi$, and have prescribed
size $l$ and weight $\alpha$. First, in Subsection \ref{subsection:PrescribedSize} we will show how to construct slice tree-automata 
representing digraphs of a prescribed size. Subsequently, in Subsection \ref{subsection:PrescribedWeight} we will 
show how to construct slice tree-automata representing digraphs of a prescribed weight. In Subsection \ref{subsection:InverseHomomorphicImage}
we will define a suitable  notion of inverse homomorphic image for slice languages. Using the results in these three subsections 
in conjunction with the tree-automaton $\treeAutomaton(\varphi,k,z)$ of $\mbox{Theorem \ref{theorem:MonadicSliceTreeAutomataZSaturated},}$ 
we will show, in Subsection \ref{subsection:Restriction}, how to construct a $z$-saturated slice tree-automaton $\treeAutomaton(\varphi,k,z,l,\alpha)$ representing all digraphs 
that are the union of $k$ directed paths, satisfy $\varphi$, and have prescribed size $l$ and $\mbox{weight $\alpha$}$.
The proof of Theorem \ref{theorem:MainTheoremDirectedTreewidth}, which will be detailed in Subsection \ref{subsection:TheProof},
will follow by plugging $\treeAutomaton(\varphi,k,z,l,\alpha)$ into Theorem \ref{theorem:CountingSubgraphs}.

\subsection{Generating Digraphs of a Prescribed Size}
\label{subsection:PrescribedSize}

In this subsection we will show that given an arbitrary slice alphabet $\newslicealphabet$ of unit slices, and a number $l$, one can construct 
a deterministic tree-automaton generating precisely the unit decompositions in $\lang(\newslicealphabet)$
that give rise to digraphs with $l$ vertices. Let $\Z_m$ denote the integers with addition modulo $m$. Consider
the following weighting function $\automataweightingfunction_{\Z_m}:\newslicealphabet \rightarrow \Z_{m}$:

\begin{equation}
\label{equation:WeightingCounting}
\automataweightingfunction_{\Z_m}(\boldS)= 
\left\{\begin{array}{l}
0 \mbox{ if $\boldS$ has empty center,}\\
1 \mbox{ if $\boldS$ has a center vertex.}\\
\end{array}
\right.
\end{equation}

Recall from Section \ref{subsection:WeightedTerms} that given a term $\boldT\in \terms(\newslicealphabet)$, the 
weight of $\boldT$ is defined as 

\begin{equation}
\label{equation:WeightTerm}
\automataweightingfunction_{\Z_m}(\boldT) = \sum_{p\in \positions(\boldT)} \automataweightingfunction_{\Z_m}(\boldT[p]).
\end{equation}

In other words, the weight of $\boldT$ is simply the sum of the weights of all slices occurring in $\boldT$. 
One can readily check that $\boldT$ has weight 
$\automataweightingfunction_{\Z_m}(\boldT) = l$ if and only there are $l\;(\mathrm{mod}\; m)$ slices in $\boldT$ with non-empty center. 
In particular, if $\boldT$ is a unit decomposition in $\lang(\newslicealphabet)$, 
then $\automataweightingfunction_{\Z_m}(\boldT)$ is the number of vertices in the digraph $\composedT$ represented by $\boldT$, modulo $m$.  

\begin{observation}
\label{observation:CountingVertices}
Let $\boldT$ be a unit decomposition in $\lang(\newslicealphabet)$. 
Then $$\automataweightingfunction_{\Z_m}(\boldT)= |\!\composedT\!|\; (\mathrm{mod}\;m).$$ 
\end{observation}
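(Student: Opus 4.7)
The plan is to prove this by directly unwinding the definitions of $\automataweightingfunction_{\Z_m}$, of the weight of a term, and of the digraph $\composedT$ associated with a unit decomposition $\boldT$. Since the weighting function assigns $1$ to slices with non-empty center and $0$ to slices with empty center, the total weight of $\boldT$ is (modulo $m$) simply the number of positions in $\boldT$ whose slice has a non-empty center, i.e., $|\support(\boldT)|$. So the whole content of the observation is that the number of vertices of $\composedT$ equals $|\support(\boldT)|$, which is a direct consequence of how $\composedT$ is constructed.

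More concretely, I would proceed as follows. First, recall from Equation \ref{equation:WeightTerm} that
\[
\automataweightingfunction_{\Z_m}(\boldT) \;=\; \sum_{p\in \positions(\boldT)} \automataweightingfunction_{\Z_m}(\boldT[p]),
\]
and from Equation \ref{equation:WeightingCounting} that each summand is $1$ if $\boldT[p]$ has a non-empty center and $0$ otherwise. Thus
\[
\automataweightingfunction_{\Z_m}(\boldT) \;=\; |\support(\boldT)| \pmod{m},
\]
by the very definition of $\support(\boldT)$ as the set of positions $p$ such that $\boldT[p]$ has a non-empty center.

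Second, I would invoke the construction of $\composedT$ given in Subsection \ref{subsection:DigraphsUnitDecompositions}: the vertex set of $\composedT$ consists of one vertex $v_p$ for each $p\in \support(\boldT)$, and distinct positions yield distinct vertices. Hence $|\composedT| = |\support(\boldT)|$, and combining with the previous identity yields $\automataweightingfunction_{\Z_m}(\boldT) \equiv |\composedT| \pmod{m}$, as claimed.

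There is no real obstacle here; the observation is essentially a bookkeeping statement that the $\Z_m$-valued weight chosen in Equation \ref{equation:WeightingCounting} was engineered precisely to count, modulo $m$, the positions contributing a vertex to $\composedT$. The only subtle point to keep in mind is that a unit slice has at most one center vertex (so the $0/1$ weighting is correct, rather than needing to count a variable number of center vertices), and that the sliced-edge-sequence machinery only creates \emph{edges} in $\composedT$, never additional vertices, so no vertex of $\composedT$ arises outside $\support(\boldT)$.
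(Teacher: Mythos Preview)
Your proposal is correct and matches the paper's treatment: the paper does not give a formal proof of this observation but simply remarks beforehand that one can readily check that $\automataweightingfunction_{\Z_m}(\boldT)$ counts, modulo $m$, the slices with non-empty center, which for a unit decomposition is exactly the number of vertices of $\composedT$. Your write-up makes this explicit by unwinding the definitions of $\automataweightingfunction_{\Z_m}$, $\support(\boldT)$, and the vertex set of $\composedT$, which is precisely the intended argument.
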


Recall from Lemma \ref{lemma:WeightedTerms} that if $\newslicealphabet$ is a slice alphabet,
 $\automataweightingfunction:\newslicealphabet \rightarrow \Xi$ is a weighting function on $\newslicealphabet$ 
and $a\in \Xi$, then the automaton $\treeAutomaton(\newslicealphabet,\automataweightingfunction,a)$ 
generates precisely the set of terms $\boldT\in \terms(\newslicealphabet)$ whose weight is $\automataweightingfunction(\boldT)=a$.
By setting $\Xi = \Z_m$, $\automataweightingfunction = \automataweightingfunction_{\Z_{m}}$ and $a=l$, the tree-automaton 
$\treeAutomaton(\newslicealphabet,\automataweightingfunction_{\Z_m},l)$ generates the set of all terms over 
$\newslicealphabet$ which have $l\;(\mathrm{mod}\;m)$ slices with non-empty center. Let $\treeAutomaton(\newslicealphabet)$ be the 
slice tree-automaton generating the set of all unit decomposition over  $\newslicealphabet$. 
Then for each $l\in \{0,...,m-1\}$, 

\begin{equation*}
\label{equation:CountingVerticesB}
\lang(\treeAutomaton(\newslicealphabet,\automataweightingfunction_{\Z_m},l) \cap \treeAutomaton(\newslicealphabet))
=
\{\boldT\in \lang(\newslicealphabet)\;|\; |\!\composedT\!| = l\; (\mathrm{mod}\; m)\}.
\end{equation*} 

In other words $\treeAutomaton(\newslicealphabet,\automataweightingfunction_{\Z_m},l) \cap \treeAutomaton(\newslicealphabet)$ 
generates all unit decompositions over $\newslicealphabet$ whose corresponding digraph has $l\;(\mathrm{mod}\; m)$ vertices. 

\subsection{Generating Digraphs of a Prescribed Weight}
\label{subsection:PrescribedWeight}

Let $G = (V,E,\vertexlabeling,\edgelabeling)$ be a $(\vertexlabel,\edgelabel)$-labeled digraph and $\graphweightingfunction:E\rightarrow \Omega$
be a function that weights the edges in $E$ with elements from a finite semigroup $\Omega$.  We say that the pair 
$(G,\mu)$ is a weighted digraph. Alternatively, we can view $(G,\mu)$ as the digraph $(V,E,\vertexlabeling,\edgelabeling\times \graphweightingfunction)$
where $\edgelabeling\times \graphweightingfunction:E\rightarrow \edgelabel\times \Omega$ is a function that labels each edge 
$e\in E$, with the element  $[\edgelabeling\times \graphweightingfunction](e) = (\edgelabeling(e),\graphweightingfunction(e))$.
In this way we consider that unit decompositions of weighted digraphs are formed with elements of the slice alphabet $\newslicealphabet(c,q,\vertexlabel,\edgelabel\times \Omega)$. 
We insist in having two label sets $\edgelabel$ and $\Omega$ because while we consider that the set $\edgelabel$ is fixed, 
the set $\Omega$ may vary with the input digraph.

For a unit decomposition $\boldT$ over $\newslicealphabet(c,q,\vertexlabel,\edgelabel\times \Omega)$ we 
let $\graphweightingfunction(\composedT)$ be the sum of the weights of all edges in the digraph $\composedT$.  
Let $\boldS$ be a slice in $\newslicealphabet(c,q,\vertexlabel,\edgelabel\times \Omega)$ and let $E$ be the edge set 
of $\boldS$. We denote by $E_{\mathit{out}}$ the set of all edges that have an endpoint 
in the out-frontier of $\boldS$ and the other endpoint in the center of $\boldS$. We denote by $E_{\mathit{in}}$
the set of edges whose endpoints lie in distinct in-frontiers of $\boldS$. 
Let  $\automataweightingfunction_{\Omega}:\newslicealphabet(c,q,\vertexlabel,\edgelabel\times \Omega)\rightarrow \Omega$ be 
a weighting function on $\newslicealphabet(c,q,\vertexlabel,\edgelabel\times \Omega)$ that associates 
with each slice $\boldS\in \newslicealphabet(c,q,\vertexlabel, \edgelabel\times \Omega)$ the value 
$$\automataweightingfunction_{\Omega}(\boldS) = \sum_{e\in E_{\mathit{out}}}\graphweightingfunction(e) - \sum_{e\in E_{\mathit{in}}}\graphweightingfunction(e).$$ 
Note that edges that have only one endpoint at an in-frontier of $\boldS$ do not have their weights counted neither positively, 
nor negatively. The weight of a unit decomposition $\boldT$ over the slice alphabet $\newslicealphabet(c,q,\vertexlabel,\edgelabel\times \Omega)$ is defined as

\begin{equation}
\label{equation:Weight}
\automataweightingfunction_{\Omega}(\boldT) = \sum_{p\in \positions(\boldT)} \automataweightingfunction_{\Omega}(\boldT[p]).
\end{equation}

The next proposition says that the weight $\automataweightingfunction_{\Omega}(\boldT)$ of 
$\boldT$ is equal to the weight $\graphweightingfunction(\composedT)$ of the digraph $\composedT$ represented by $\boldT$.

\begin{proposition}
\label{proposition:CountingWeight}
Let $\boldT$ be a unit decomposition in $\lang(\newslicealphabet(c,q,\vertexlabel,\edgelabel \times \Omega))$. 
Then 
\begin{equation}
\label{equation:EqualityWeight}
\automataweightingfunction_{\Omega}(\boldT) = \graphweightingfunction(\composedT).
\end{equation}
\end{proposition}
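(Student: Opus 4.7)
The plan is to re-index the sum defining $\automataweightingfunction_{\Omega}(\boldT) = \sum_{p\in\positions(\boldT)} \automataweightingfunction_{\Omega}(\boldT[p])$ as a sum over the edges of $\composedT$, and then show by a case analysis on the shape of each sliced edge sequence that every edge $e$ of $\composedT$ contributes exactly $\graphweightingfunction(e)$. Recall that every edge of $\composedT$ arises from a unique sliced edge sequence $K_e = (p_1, a_1, e_1, b_1)(p_2, a_2, e_2, b_2) \cdots (p_n, a_n, e_n, b_n)$, that gluing condition (g2) forces the common label-and-weight $\graphweightingfunction(e_i) = \graphweightingfunction(e)$ across all pieces, and that the distinctness of the positions $p_1,\ldots,p_n$ together with Condition (\ref{item:edgesequence3}) of Definition \ref{definition:SlicedEdgeSequence} forces the sequence $p_1,\ldots,p_n$ to trace the unique simple path in the tree $\positions(\boldT)$ from $p_1$ to $p_n$. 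In particular, this path first ascends from $p_1$ to the lowest common ancestor $p_m$ of $p_1$ and $p_n$, and then descends to $p_n$, so each step $p_i \to p_{i+1}$ is either a parent step (ascending, for $i < m$, in which case $p_i = p_{i+1}j$ and Condition (\ref{item:edgesequence4}) applies) or a child step (descending, for $i \geq m$, in which case $p_{i+1} = p_ij$ and Condition (\ref{item:edgesequence5}) applies).

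The key step is to locate each piece $e_i$ within $\boldT[p_i]$ and classify it as lying in $E_{\mathit{out}}(\boldT[p_i])$, $E_{\mathit{in}}(\boldT[p_i])$, or neither. Using Conditions (\ref{item:edgesequence2}), (\ref{item:edgesequence4}) and (\ref{item:edgesequence5}), $a_1$ and $b_n$ are centers, while an intermediate endpoint $b_i$ (resp.\ $a_{i+1}$) lies in the out-frontier of its slice when the corresponding step is a parent step and in an in-frontier when it is a child step. A case analysis on the position $m$ of the LCA inside the sequence then yields the contribution of $K_e$. When $m=1$ (only descending), $e_1$ runs from the center of $\boldT[p_1]$ into an in-frontier and so contributes $0$, all intermediate pieces run between an out-frontier and an in-frontier and also contribute $0$, and $e_n$ runs from an out-frontier into the center of $\boldT[p_n]$ and thus lies in $E_{\mathit{out}}(\boldT[p_n])$, contributing $+\graphweightingfunction(e)$. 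The case $m=n$ (only ascending) is symmetric, with $e_1 \in E_{\mathit{out}}(\boldT[p_1])$ carrying the entire contribution. In the remaining case $1<m<n$, the boundary pieces satisfy $e_1 \in E_{\mathit{out}}(\boldT[p_1])$ and $e_n \in E_{\mathit{out}}(\boldT[p_n])$, each contributing $+\graphweightingfunction(e)$; at the LCA, the endpoints $a_m$ and $b_m$ lie in two distinct in-frontiers of $\boldT[p_m]$ corresponding to the child indices leading to $p_{m-1}$ and $p_{m+1}$ (distinct because $p_{m-1} \neq p_{m+1}$), so $e_m \in E_{\mathit{in}}(\boldT[p_m])$ contributes $-\graphweightingfunction(e)$; every other intermediate piece runs between an in-frontier and an out-frontier and contributes $0$. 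The telescoping sum gives a net contribution of $\graphweightingfunction(e)$ in every case.

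Summing the per-edge contribution over all edges of $\composedT$ yields Equation \ref{equation:EqualityWeight}. The only genuinely delicate point is the bookkeeping of whether each endpoint of an edge-piece lies in an in-frontier or in the out-frontier of its slice, which requires carefully reading off Conditions (\ref{item:edgesequence4}) and (\ref{item:edgesequence5}) in the context of a path that changes direction exactly once, at the LCA. The definition $\automataweightingfunction_{\Omega}(\boldS) = \sum_{E_{\mathit{out}}} \graphweightingfunction - \sum_{E_{\mathit{in}}} \graphweightingfunction$ is engineered precisely so that the over-count produced by the two positive $E_{\mathit{out}}$ contributions at the boundary slices $\boldT[p_1]$ and $\boldT[p_n]$ is cancelled by the single negative $E_{\mathit{in}}$ contribution at the LCA slice $\boldT[p_m]$, leaving exactly $\graphweightingfunction(e)$ per edge.
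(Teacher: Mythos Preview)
Your proof is correct and follows essentially the same approach as the paper: both decompose $\automataweightingfunction_{\Omega}(\boldT)$ edge-by-edge via sliced edge sequences and perform the same three-case analysis (your cases $m=1$, $m=n$, $1<m<n$ are exactly the paper's ``$p_n$ is a descendant of $p_1$'', ``$p_1$ is a descendant of $p_n$'', and ``neither'', respectively), showing in each case that the net contribution is $\graphweightingfunction(e)$. Your write-up is slightly more explicit than the paper's in tracking which frontier each endpoint of $e_i$ lies in via Conditions (\ref{item:edgesequence4}) and (\ref{item:edgesequence5}), but the argument is the same.
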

\begin{proof}
Recall that each edge $e_K$ in the digraph $\composedT$ represented by $\boldT$ is specified by a sliced edge sequence 
$K\equiv (p_1,a_{1},e_{1},b_{1})(p_2,a_{2},e_{2},b_{2})...(p_n,a_{n},e_{n},b_{n})$, where $e_{i}$ is
the sliced part of $e_K$ lying at slice $\boldT[p_i]$. Recall that by definition, for each $i\in \{1,...,n\}$, 
the weight of $e_{i}$ in $\boldT[p_i]$ is equal to the weight of $e_K$ in $\composedT$. We claim that the overall contribution of the weights of 
the edges in $K$ to the sum in Equation \ref{equation:Weight} is equal to the weight of $e_K$. This claim implies 
Equation \ref{equation:EqualityWeight}. There are three cases to be considered. If $p_1$ is a descendant of $p_n$, then $e_{1}$ is the only 
sliced part of $e_K$ whose weight contributes positively to the sum in Equation \ref{equation:Weight}. The 
weights of all other sliced parts $e_{2},...,e_{n}$ are not counted at all. This happens because,
in this case, $e_1$ is the only edge of $K$ that has a center vertex and an out-frontier vertex as endpoints. 
All other edges of $K$ have one endpoint in some in-frontier and another endpoint in the center or out-frontier, and 
for this reason their weights are not counted.  
Analogously, if $p_n$ is a descendant of $p_1$, then $e_{n}$ is the only 
sliced part of $e_K$ that has its weight contributed positively to the sum in $\mbox{Equation \ref{equation:Weight}.}$ 
The weights of all other sliced parts $e_{1},...e_{{n-1}}$ are not counted at all. Finally, if neither $p_1$ is a
descendant of $p_n$, nor $p_n$ is a descendant of $p_1$, then both sliced parts $e_{1}$ and $e_{n}$ have 
their weights contributed positively to the sum in Equation \ref{equation:Weight}.  Nevertheless, in this case there exists some $k$, with $1<k<n$ such that 
$e_{k}$ has both of its endpoints in distinct in-frontiers of $\boldT[p_k]$.
Indeed, $p_k$ is the position farthest away from the root with the property that both $p_1$ and $p_n$ are descendants of $p_k$. 
Therefore we have that the weight of $e_k$ is counted negatively. 
The weights of all other edges $e_2...e_{k-1}$ and $e_{k+1}...e_{n-1}$ are not counted at all, since each of these edges 
have one endpoint in some in-frontier and another endpoint at an out-frontier. This proves our claim.
\end{proof}

In view of Proposition \ref{proposition:CountingWeight}, if $\treeAutomaton(\newslicealphabet,\automataweightingfunction_{\Omega},\alpha)$ is the tree-automaton of Lemma 
\ref{lemma:WeightedTerms} in which 
$\automataweightingfunction = \automataweightingfunction_{\Omega}$ and $a=\alpha$, then the slice language 
of $\treeAutomaton(\newslicealphabet,\automataweightingfunction_{\Omega},\alpha) \cap \treeAutomaton(\newslicealphabet)$ 
is the set of all unit decompositions over $\newslicealphabet$ which represent a digraph of weight $\alpha$.
More precisely, 

\begin{equation}
\label{equation:CountingSizeWeightB}
\lang(\treeAutomaton(\newslicealphabet,\automataweightingfunction_{\Omega},\alpha) \cap \treeAutomaton(\newslicealphabet)) 
=  \{\boldT\in \lang(\newslicealphabet)\;|\; \graphweightingfunction(\composedT) = \alpha \}.
\end{equation}

\subsection{Inverse Homomorphic Image of Slice Languages}
\label{subsection:InverseHomomorphicImage}

Let $\projection:\newslicealphabet \rightarrow \newslicealphabet'$ be a slice projection such as defined in 
Section \ref{subsection:SliceProjection}. If $\lang$ is a slice language over $\newslicealphabet'$ then the inverse homomorphic 
image $\projection^{-1}(\lang)$, as defined in Section \ref{subsection:PropertiesOfTreeAutomata}, is not necessarily 
a slice language since for a unit decomposition $\boldT\in \lang$, the inverse set $\projection^{-1}(\boldT)$ consisting of all terms whose image is $\boldT$ 
may have some terms over $\newslicealphabet$ that are not unit decompositions. To fix this we intersect $\projection^{-1}(\boldT)$
with the slice language $\lang(\newslicealphabet)$ of all unit decompositions over $\newslicealphabet$. More precisely, if 
$\boldT$ is a unit decomposition over $\newslicealphabet'$ then we define $\inverse(\projection,\boldT) = \projection^{-1}(\boldT)\cap \lang(\newslicealphabet)$. 
Going further, if $\lang$ is a slice language over $\newslicealphabet'$ then  

\begin{equation}
\label{equation:InverseImageUnweighting}
\inverse(\projection,\lang) = \bigcup_{\boldT\in \lang} \inverse(\projection,\boldT)  = \projection^{-1}(\lang) \cap \lang(\newslicealphabet). 
\end{equation}

For instance, if $\normalizingprojection_{c,q}:\newslicealphabet(c,q,\vertexlabel,\edgelabel)\rightarrow \newslicealphabet(c,\vertexlabel,\edgelabel)$
is a normalizing projection and $\boldT$ is a unit decomposition over $\newslicealphabet(c,\vertexlabel,\edgelabel)$, then 
$\inverse(\normalizingprojection_{c,q}, \boldT)$ consists of all unit decompositions that are obtained from $\boldT$ by renumbering the 
vertices on each frontier of each slice of $\boldT$ with numbers from $\{1,...,q\}$ in such a way that the order 
in each frontier is preserved. Therefore $\inverse(\normalizingprojection_{c,q},\lang)$ is the maximal unnormalized slice language
whose image under $\normalizingprojection_{c,q}$ is $\lang$. Note that if $\lang$ is a $z$-saturated slice language over 
$\newslicealphabet(c,\vertexlabel,\edgelabel)$ then $\inverse(\normalizingprojection,\lang)$ is a $z$-saturated slice language 
over $\newslicealphabet(c,q,\vertexlabel,\edgelabel)$. 

Analogously, if $\unweightingprojection_{\Omega}:\newslicealphabet(c,q,\vertexlabel,\edgelabel\times \Omega)\rightarrow \newslicealphabet(c,q,\vertexlabel,\edgelabel)$
is an unweighting projection, and $\boldT$ is a unit decomposition over $\newslicealphabet(c,q,\vertexlabel,\edgelabel)$, then 
$\inverse(\unweightingprojection_{\Omega},\boldT)$ consists of all unit decompositions over $\newslicealphabet(c,q,\vertexlabel,\edgelabel\times \Omega)$ that 
are obtained from $\boldT$ by weighting the edges of each slice in $\boldT$ with elements from $\Omega$ in such a way that gluability 
of slices is preserved. Thus $\inverse(\unweightingprojection_{\Omega},\lang)$ is a slice language consisting of all weighted versions of 
unit decompositions in $\lang$. We note that if $\lang$ is a $z$-saturated slice language over $\newslicealphabet(c,q,\vertexlabel,\edgelabel)$ then 
$\inverse(\unweightingprojection_{\Omega},\lang)$ is a $z$-saturated slice language over $\newslicealphabet(c,q,\vertexlabel,\edgelabel\times \Omega)$. 

\subsection{Restricting $\treeAutomaton(\varphi,k,z)$}
\label{subsection:Restriction}

In Theorem \ref{theorem:MonadicSliceTreeAutomataZSaturated} we showed that given any \msotwo sentence $\varphi$ in the 
vocabulary of $\mbox{$(\vertexlabel,\edgelabel)$-labeled}$ digraphs, and any $z,k\in \N$ one can construct a normalized 
$z$-saturated slice tree-automaton $\treeAutomaton(\varphi,k,z)$ over the slice alphabet
$\newslicealphabet(k\cdot z,\vertexlabel,\edgelabel)$ whose graph language $\lang_{\graph}(\treeAutomaton(\varphi,k,z))$
consists precisely of the digraphs that are the union of $k$ directed paths and satisfy $\varphi$. 
In this section we show how to construct a $z$-saturated tree-automaton $\treeAutomaton(\varphi,k,z,l,\alpha)$
over the slice alphabet $\newslicealphabet(c,q,\vertexlabel,\edgelabel\times \Omega)$ whose graph $\lang_{\graph}(\treeAutomaton(\varphi,k,z,l,\alpha))$
contains only the digraphs in $\lang_{\graph}(\treeAutomaton(\varphi,k,z))$ that have a prescribed size $l$  and prescribed weight 
$\alpha\in \Omega$. If $\varphi$ is an \msotwo sentence in the vocabulary of $(\vertexlabel,\edgelabel)$-labeled digraphs, $G=(V,E)$ is a 
$(\vertexlabel,\edgelabel)$-labeled digraph and $\mu:E\rightarrow \Omega$ is a weighting function, then we say that 
the weighted digraph $(G,\mu)$ satisfies $\varphi$ if $G$ satisfies $\varphi$. In other words, a weighted digraph satisfies an \msotwo
sentence if its unweighted version does. 

\begin{lemma}
\label{lemma:AutomatonSizeWeight}
Let $\varphi$ be an \msotwo sentence over $(\vertexlabel,\edgelabel)$-labeled digraphs, $q,k,z,l,m\in \N$ be positive integers with $l<m$, $q\geq k\cdot z$,
and let $\alpha\in \Omega$. For some computable function $g$, one can construct in time 
$g(\varphi,k,z,|\vertexlabel|,|\edgelabel|)\cdot q^{O(k\cdot z)} \cdot |\Omega|^{O(k\cdot z)}\cdot m^{O(1)}$
a $z$-saturated  slice tree-automaton $\treeAutomaton(\varphi,k,z,l,\alpha)$ over $\newslicealphabet(k\cdot z,q,\vertexlabel,\edgelabel\times \Omega)$ 
such that 
\begin{equation*}
\label{equation:AutomatonPrescribedLenghtWeight}
\begin{array}{lcr}
\lang(\treeAutomaton(\varphi,k,z,l,\alpha))  =  \{\boldT\in \newslicealphabet(k\cdot z,q,\vertexlabel,\edgelabel\times \Omega) \; |\;
 \mbox{ $\composedT\models \varphi$, $\composedT$ is the union of $k$ directed paths,} \\ 
\hspace{8.3cm} \mbox{ $\composedT$ has $l\;(\mathrm{mod}\;m)$ vertices, $\composedT$ has weight $\alpha$}\}. 
\end{array}
\end{equation*}
\end{lemma}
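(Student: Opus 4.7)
The plan is to build $\treeAutomaton(\varphi,k,z,l,\alpha)$ by iteratively restricting the automaton $\treeAutomaton(\varphi,k,z)$ of Theorem \ref{theorem:MonadicSliceTreeAutomataZSaturated} through a chain of inverse homomorphic images and intersections with weight-counting automata, each of which must be shown to preserve $z$-saturation. More precisely, I would proceed in four steps. First, starting from the normalized $z$-saturated tree-automaton $\treeAutomaton(\varphi,k,z)$ over $\newslicealphabet(k\cdot z,\vertexlabel,\edgelabel)$, I apply the construction from Subsection \ref{subsection:InverseHomomorphicImage} to obtain a tree-automaton $\treeAutomaton_1$ recognizing $\inverse(\normalizingprojection_{k\cdot z,q}, \lang(\treeAutomaton(\varphi,k,z)))$, a slice language over the unnormalized alphabet $\newslicealphabet(k\cdot z,q,\vertexlabel,\edgelabel)$. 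By Lemma \ref{lemma:PropertiesOfTreeAutomata}.\ref{lemma:PropertiesOfTreeAutomata:InverseHomomorphism} this construction takes time $|\newslicealphabet(k\cdot z,q,\vertexlabel,\edgelabel)| \cdot |\treeAutomaton(\varphi,k,z)| = q^{O(k\cdot z)}\cdot|\treeAutomaton(\varphi,k,z)|$, and intersecting with $\treeAutomaton(\newslicealphabet(k\cdot z,q,\vertexlabel,\edgelabel))$ (Proposition \ref{proposition:InitialAutomaton}) ensures the image is a genuine slice language.

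Second, I apply the inverse of the unweighting projection $\unweightingprojection_\Omega$, producing a tree-automaton $\treeAutomaton_2$ over $\newslicealphabet(k\cdot z,q,\vertexlabel,\edgelabel\times\Omega)$ whose slice language is the set of all weighted unit decompositions whose unweighted version lies in $\lang(\treeAutomaton_1)$; this costs an extra factor of $|\Omega|^{O(k\cdot z)}$. Third, using Lemma \ref{lemma:WeightedTerms} with the weighting $\automataweightingfunction_{\Z_m}$ of Subsection \ref{subsection:PrescribedSize} and target $l\in\Z_m$, I build a tree-automaton of size $m^{O(1)}\cdot q^{O(k\cdot z)}\cdot|\Omega|^{O(k\cdot z)}$ recognizing exactly those unit decompositions $\boldT$ with $\automataweightingfunction_{\Z_m}(\boldT)=l$, which by Observation \ref{observation:CountingVertices} is equivalent to $|\composedT|\equiv l\pmod{m}$. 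Fourth, using Lemma \ref{lemma:WeightedTerms} with the weighting $\automataweightingfunction_\Omega$ of Subsection \ref{subsection:PrescribedWeight} and target $\alpha$, I build a tree-automaton recognizing decompositions with $\automataweightingfunction_\Omega(\boldT)=\alpha$, which by Proposition \ref{proposition:CountingWeight} is equivalent to $\graphweightingfunction(\composedT)=\alpha$. Intersecting all four automata, I obtain $\treeAutomaton(\varphi,k,z,l,\alpha)$ with the desired slice language.

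The main obstacle is verifying that $z$-saturation is preserved at every step. For the inverse homomorphic images, the key is that both $\normalizingprojection_{k\cdot z,q}$ and $\unweightingprojection_\Omega$ are bijective on the level of represented digraphs up to relabelling of frontier indices and erasing of weight data: given a digraph $H$ and a unit decomposition $\boldT$ of $H$ over the larger alphabet with $\treezigzagnumber(\boldT)\leq z$, its projection has the same tree-zig-zag number and represents the same underlying (unweighted, normalized) digraph, so it belongs to the saturated source language, forcing $\boldT$ into the inverse image. For the intersections with the size and weight automata, the crucial observation is that the quantities $|\composedT|\bmod m$ and $\graphweightingfunction(\composedT)$ are intrinsic properties of the digraph $\composedT$ and do not depend on the choice of unit decomposition representing it; hence the intersections either retain every $\treezigzagnumber\leq z$ decomposition of a given digraph or discard them all, which is precisely what $z$-saturation demands. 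Combining the cost estimates above with the construction cost $g(\varphi,k,z,|\vertexlabel|,|\edgelabel|)$ of $\treeAutomaton(\varphi,k,z)$ from Theorem \ref{theorem:MonadicSliceTreeAutomataZSaturated}, the overall running time is bounded by $g(\varphi,k,z,|\vertexlabel|,|\edgelabel|)\cdot q^{O(k\cdot z)}\cdot|\Omega|^{O(k\cdot z)}\cdot m^{O(1)}$, as required.
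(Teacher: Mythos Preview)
Your proposal is correct and follows essentially the same approach as the paper: both start from $\treeAutomaton(\varphi,k,z)$, apply the inverse of the normalizing projection followed by the inverse of the unweighting projection, then intersect with the weight-counting automata $\treeAutomaton(\newslicealphabet,\automataweightingfunction_{\Z_m},l)$ and $\treeAutomaton(\newslicealphabet,\automataweightingfunction_{\Omega},\alpha)$ built via Lemma~\ref{lemma:WeightedTerms}, and finally estimate the running time in the same way. Your treatment is in fact slightly more explicit than the paper's on the preservation of $z$-saturation under the intersections with the size and weight automata (the paper simply asserts the final automaton is $z$-saturated), but the argument you give---that vertex count and total weight are invariants of $\composedT$ rather than of the particular decomposition $\boldT$---is exactly the right justification.
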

\begin{proof}
Let $\normalizingprojection_{k\cdot z,q}:\newslicealphabet(k\cdot z,q,\vertexlabel,\edgelabel)\rightarrow \newslicealphabet(k\cdot z,\vertexlabel,\edgelabel)$
be a normalizing projection and let 
$\unweightingprojection_{\Omega}:\newslicealphabet(k\cdot z,q,\vertexlabel,\edgelabel\times \Omega) \rightarrow \newslicealphabet(k\cdot z,q,\vertexlabel,\edgelabel)$ 
be an unweighting projection. By the discussion in Section \ref{subsection:InverseHomomorphicImage}, the slice tree-automaton 
\begin{equation}
\label{equation:Inverses}
\inverse(\unweightingprojection_{\Omega},\inverse(\normalizingprojection_{k\cdot z,q}, \treeAutomaton(\varphi,k,z)))
\end{equation}
is a deterministic $z$-saturated tree-automaton over $\newslicealphabet(k\cdot z, q,\vertexlabel,\edgelabel\times \Omega)$ whose graph 
language consists of all weighted versions of digraphs in $\treeAutomaton(\varphi,k,z)$. 
We will restrict the tree-automaton in Equation \ref{equation:Inverses} so that it represents only digraphs with weight $\alpha$ and 
$l\;\mathrm{mod}\;m$ vertices. For simplicity of notation, let  
$\newslicealphabet = \newslicealphabet(k\cdot z ,q,\vertexlabel,\edgelabel\times \Omega)$. Recall that the deterministic tree-automaton 
$\treeAutomaton(\newslicealphabet,\automataweightingfunction_{\Z_m},l) \cap \treeAutomaton(\newslicealphabet)$ constructed in 
Section \ref{subsection:PrescribedSize}  generates precisely the set of unit decompositions over $\newslicealphabet$ that give 
rise to a digraph with $l\;\mathrm{mod}\;m$ vertices. Recall also that the deterministic tree-automaton 
$\treeAutomaton(\newslicealphabet,\automataweightingfunction_{\Omega},\alpha) \cap \treeAutomaton(\newslicealphabet)$  
constructed in Section \ref{subsection:PrescribedWeight} generates precisely the unit decompositions over $\newslicealphabet$
which give rise to digraphs of weight $\alpha$. Therefore, the slice tree-automaton
%\vspace{-15pt}
\begin{equation*}
\label{equation:treeAutomaton}
\treeAutomaton(\varphi,k,z,l,\alpha) = \inverse(\unweightingprojection_{\Omega},\inverse(\normalizingprojection,\treeAutomaton(\varphi,k,z))) \cap 
\treeAutomaton(\newslicealphabet,\automataweightingfunction_{\Z_m},l) \cap \treeAutomaton(\newslicealphabet,\automataweightingfunction_{\Omega},\alpha) \cap 
\treeAutomaton(\newslicealphabet)
\end{equation*}
is a $z$-saturated, deterministic slice tree-automaton over the alphabet $\newslicealphabet$ whose graph language 
$\lang_{\graph}(\treeAutomaton(\varphi,k,z,l,\alpha))$ consists of all digraphs that are the union of $k$ directed paths, 
satisfy $\varphi$, have $l\;(\mathrm{mod}\;m)$ vertices and weight $\alpha$. 

To finalize the proof, we need to estimate the size of $\treeAutomaton(\varphi,k,z,l,\alpha)$. 
By Lemma \ref{lemma:PropertiesOfTreeAutomata}.\ref{lemma:PropertiesOfTreeAutomata:InverseHomomorphism} 
the automaton in Equation \ref{equation:Inverses} can be constructed in time $|\treeAutomaton(\varphi,k,z)|\cdot \newslicealphabet^{O(1)}$.
By Proposition \ref{proposition:InitialAutomaton}, the automaton $\treeAutomaton(\newslicealphabet)$ can be constructed in time 
$O(|\newslicealphabet|)$. By Lemma \ref{lemma:WeightedTerms} the automaton
$\treeAutomaton(\newslicealphabet,\automataweightingfunction_{\Z_m},l)$ can be constructed in time $|\newslicealphabet|\cdot |\Z_m|^{O(1)}$
and the automaton $\treeAutomaton(\newslicealphabet,\automataweightingfunction_{\Omega},\alpha)$ 
can be constructed in time $|\newslicealphabet|\cdot |\Omega|^{O(1)}$.
Therefore, given $\treeAutomaton(\varphi,k,z)$, the tree automaton
$\treeAutomaton(\varphi,k,z,l,\alpha)$ can be constructed in time $|\treeAutomaton(\varphi,k,z)|\cdot |\newslicealphabet|^{O(1)} \cdot |\Omega|^{O(k\cdot z)} \cdot m^{O(1)}$.
Note that the size of the alphabet $\newslicealphabet(k\cdot z,q,\vertexlabel,\vertexlabel\times \Omega)$ is bounded by
$2^{O(k\cdot z \log k\cdot z)}\cdot |\vertexlabel| \cdot |\edgelabel|^{O(k\cdot z)}\cdot |\Omega|^{O(k\cdot z)} \cdot q^{O(k\cdot z)}$.  
Thus, the tree-automaton $\treeAutomaton(\varphi,k,z,l,\alpha)$ can be constructed in time 

$$g(\varphi,k,z,|\vertexlabel|,|\edgelabel|) \cdot q^{O(k\cdot z)} \cdot |\Omega|^{O(k\cdot z)} \cdot m^{O(1)},$$

where $g(\varphi,k,z,|\vertexlabel|,\edgelabel)$ is the time necessary to construct $\treeAutomaton(\varphi,k,z)$ times $2^{O(k\cdot z \log k\cdot z)}$. 
\end{proof}

\subsection{Proof of Theorem \ref{theorem:MainTheoremDirectedTreewidth}}
\label{subsection:TheProof}

The proof of Theorem \ref{theorem:MainTheoremDirectedTreewidth} will follow as 
a corollary of the following theorem, whose proof is obtained by plugging 
the automaton $\treeAutomaton(\varphi,k,z,l,\alpha)$, constructed in Lemma \ref{lemma:AutomatonSizeWeight},
into Theorem \ref{theorem:CountingSubgraphs}. 

\begin{theorem}
\label{theorem:CountingSizeWeight}
Let $\boldT\in \lang(\newslicealphabet(q,\vertexlabel,\edgelabel\times \Omega))$ be a normalized unit decomposition of width $q$ and tree-zig-zag number $z$. 
Let $\varphi$ be an \msotwo sentence in the vocabulary of $(\vertexlabel,\edgelabel)$-labeled digraphs. 
Then for each $k,l\in \N$ and each $\alpha\in \Omega$ one can count in time $f(\varphi,k,z)\cdot \boldT^{O(1)}\cdot q^{O(k\cdot z)}\cdot |\Omega|^{O(k\cdot z)}$
the number of subgraphs $H$ of $\composedT$ simultaneously satisfying the following four properties: 
\begin{enumerate}
	\item \label{CountingSizeWeight-One} $H\models \varphi$,
	\item \label{CountingSizeWeight-Two} $H$ is the union of $k$ directed paths, 
	\item \label{CountingSizeWeight-Three} $H$ has $l$ vertices,
	\item \label{CountingSizeWeight-Four} $H$ has weight $\graphweightingfunction(H) = \alpha$. 
\end{enumerate}
\end{theorem}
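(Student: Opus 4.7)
The plan is to assemble the pieces already developed in earlier sections into a direct two-step reduction. First, I would apply Lemma \ref{lemma:AutomatonSizeWeight} with modulus $m = l+1$ (so that the congruence $l \pmod{m}$ coincides with the exact value $l$, since any counted subgraph has at most $|\boldT|$ vertices) to construct in time $g(\varphi,k,z,|\vertexlabel|,|\edgelabel|) \cdot q^{O(k\cdot z)} \cdot |\Omega|^{O(k\cdot z)} \cdot m^{O(1)}$ the $z$-saturated deterministic slice tree-automaton $\treeAutomaton(\varphi,k,z,l,\alpha)$ over the slice alphabet $\newslicealphabet(k\cdot z,q,\vertexlabel,\edgelabel\times \Omega)$, whose graph language is precisely the set of (canonical forms of) digraphs that simultaneously satisfy the four properties \ref{CountingSizeWeight-One}--\ref{CountingSizeWeight-Four} of the statement.

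Second, since every digraph accepted (as a graph) by $\treeAutomaton(\varphi,k,z,l,\alpha)$ is by construction the union of $k$ directed paths, and since $\boldT$ is a unit decomposition of tree-zig-zag number $z$ over $\newslicealphabet(q,\vertexlabel,\edgelabel\times \Omega)$, the hypotheses of Theorem \ref{theorem:CountingSubgraphs} are met with this automaton playing the role of $\treeAutomaton$. Applying that theorem yields, in time $|\boldT|^{O(1)} \cdot q^{O(k\cdot z)} \cdot |\treeAutomaton(\varphi,k,z,l,\alpha)|^{O(1)}$, the number of subgraphs of $\composedT$ isomorphic to some digraph in $\lang_{\graph}(\treeAutomaton(\varphi,k,z,l,\alpha))$; by the characterization of the graph language this is exactly the quantity we wish to count. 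Combining the two time bounds and folding $m \leq |\boldT|+1$ and the $2^{O(kz\log kz)}$ factor coming from $|\newslicealphabet(k\cdot z,q,\vertexlabel,\edgelabel\times\Omega)|$ into $f(\varphi,k,z)$, we obtain the claimed running time $f(\varphi,k,z)\cdot |\boldT|^{O(1)}\cdot q^{O(k\cdot z)}\cdot |\Omega|^{O(k\cdot z)}$.

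The step that requires care, and which is really the heart of the argument, is verifying that the graph language of the intersection automaton produced inside Theorem \ref{theorem:CountingSubgraphs} equals the intersection of the graph languages; this is where the two compatibility invariants must line up. The automaton $\treeAutomaton(\boldT,k\cdot z)$ from Lemma \ref{lemma:SubgraphsC} has tree-zig-zag number at most $z$ because its accepted unit decompositions are sub-decompositions of $\boldT$ (hence compatible with the same olive-tree decomposition), while $\treeAutomaton(\varphi,k,z,l,\alpha)$ is $z$-saturated by Lemma \ref{lemma:AutomatonSizeWeight}. Under these two conditions, Proposition \ref{proposition:Intersection} guarantees $(\lang\cap \lang')_{\graph} = \lang_{\graph}\cap \lang_{\graph}'$, which is precisely what makes the reduction of subgraph counting to term counting valid. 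Without the saturation property one could only obtain the trivial inclusion and would miss canonical representatives of some subgraphs.

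Once the intersection automaton is in hand, the final counting step is routine: $\treeAutomaton(\boldT,k\cdot z) \cap \treeAutomaton(\varphi,k,z,l,\alpha)$ is deterministic (both factors are), and by Lemma \ref{lemma:IntersectionSubdecompositions}(\ref{lemma:IntersectionSubdecompositions-Two}) its accepted unit decompositions are in bijection with the target subgraphs of $\composedT$. Since all accepted terms have depth equal to $\mathit{depth}(\boldT) \leq |\boldT|$, Lemma \ref{lemma:CountingTrees} counts them by straightforward dynamic programming in polynomial time, completing the proof of Theorem \ref{theorem:CountingSizeWeight}. Theorem \ref{theorem:MainTheoremDirectedTreewidth} then follows by first producing a good arboreal decomposition via Theorem \ref{theorem:ConstructionGoodArborealDecomposition}, converting it to an olive-tree decomposition of tree-zig-zag number $\leq 9w+18$ through Lemma \ref{lemma:FromArborealPredecompositionToOliveTreeDecomposition}, normalizing it into a unit decomposition via Proposition \ref{proposition:OliveTreeDecompositionUnitDecomposition}, and invoking Theorem \ref{theorem:CountingSizeWeight}.
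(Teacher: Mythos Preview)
Your overall strategy matches the paper's proof exactly: build the $z$-saturated automaton $\treeAutomaton(\varphi,k,z,l,\alpha)$ via Lemma~\ref{lemma:AutomatonSizeWeight} and feed it into Theorem~\ref{theorem:CountingSubgraphs}. The extra paragraphs unpacking the internals of Theorem~\ref{theorem:CountingSubgraphs} (Proposition~\ref{proposition:Intersection}, Lemma~\ref{lemma:IntersectionSubdecompositions}, Lemma~\ref{lemma:CountingTrees}) are correct but redundant, since that theorem is already a black box.

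There is, however, one concrete error: your choice of modulus $m = l+1$ does not work. With that value, the automaton from Lemma~\ref{lemma:AutomatonSizeWeight} accepts unit decompositions whose digraphs have any number of vertices congruent to $l \pmod{l+1}$, i.e.\ $l,\ 2l+1,\ 3l+2,\ldots$. If $l$ is small relative to $|\boldT|$, subgraphs of $\composedT$ with $2l+1$ (or more) vertices satisfying the other three conditions would be counted as well. Your own parenthetical justification (``any counted subgraph has at most $|\boldT|$ vertices'') actually points to the correct fix: one needs $m > |\boldT|$, and the paper takes $m = |\boldT|+1$, so that the only integer in $\{0,\ldots,|\boldT|\}$ congruent to $l$ modulo $m$ is $l$ itself. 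With this single correction your argument is complete and identical to the paper's, and your later estimate ``$m \leq |\boldT|+1$'' becomes an equality rather than an unjustified bound.
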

\begin{proof}
The proof follows by a combination of Theorem \ref{theorem:CountingSubgraphs} with 
Lemma \ref{lemma:AutomatonSizeWeight}. First, Theorem \ref{theorem:CountingSubgraphs} says that given a $z$-saturated slice tree automaton $\treeAutomaton$ 
over $\newslicealphabet(k\cdot z,q,\vertexlabel,\edgelabel\times \Omega)$, we can count in time $|\boldT|^{O(k\cdot z)}\cdot |\treeAutomaton|^{O(1)}$
the number of subgraphs of $\composedT$ which are isomorphic to some digraph in $\lang_{\graph}(\treeAutomaton)$.
Second by Lemma \ref{lemma:AutomatonSizeWeight}, we can construct a $z$-saturated slice tree-automaton $\treeAutomaton(\varphi,k,z,l,\alpha)$ 
such that a digraph $H$ belongs to the graph language $\lang_{\graph}(\treeAutomaton(\varphi,k,z,l,\alpha))$ if and only if 
$H$ satisfies $\varphi$, is the union of $k$ directed paths, has $l\;(\mathrm{mod}\;m)$ vertices, and weight $\alpha$.
Since any subgraph of $\composedT$ has at most $|\boldT|$ vertices, if we set $m=|\boldT|+1$ and 
$\treeAutomaton = \treeAutomaton(\varphi,k,z,l,\alpha)$, then Theorem \ref{theorem:CountingSubgraphs} 
provides us with an algorithm for counting all the subgraphs of $\composedT$ that 
satisfy Conditions \ref{CountingSizeWeight-One}-\ref{CountingSizeWeight-Four} of the present theorem. 
Since $|\treeAutomaton(\varphi,k,z,l,\alpha)| \leq g(\varphi,k,z,|\vertexlabel|,|\edgelabel|) \cdot q^{O(k\cdot z)}\cdot |\Omega|^{O(k\cdot z)}\cdot m^{O(1)}$, 
and since the label sets $\vertexlabel$ and $\edgelabel$ are fixed, the algorithm runs in time 
$$f(\varphi,k,z)\cdot \boldT^{O(1)}\cdot q^{O(k\cdot z)}\cdot |\Omega|^{O(k\cdot z)},$$
\noindent where $f(\varphi,k,z) = g(\varphi,k,z,|\vertexlabel|,|\edgelabel|)^{O(1)}$ and $|\vertexlabel|$ and $|\edgelabel|$ are treated as constants. 
\end{proof}

Finally, the proof of our main theorem (Theorem \ref{theorem:MainTheoremDirectedTreewidth})  follows as an 
application of Theorem \ref{theorem:CountingSizeWeight}.  

\paragraph{\textbf{Proof of Theorem \ref{theorem:MainTheoremDirectedTreewidth}}}
Let $G=(V,E,\vertexlabeling,\edgelabeling\times \graphweightingfunction)$ be a $(\vertexlabel,\edgelabel\times \Omega)$-labeled digraph of {\em directed}
treewidth $w$. By Theorem  \ref{theorem:ConstructionGoodArborealDecomposition}, 
we can construct in time $|G|^{O(w)}$ a good arboreal decomposition $\arborealdecomposition$ of $G$ of width $O(w)$. 
By Theorem \ref{theorem:ComparisonWithOtherMeasures}, from $\arborealdecomposition$ we can construct an olive-tree decomposition 
$\olivetreedecomposition$ of tree-zig-zag number $z$ for some $z\leq 9w+18$.
Using Proposition \ref{proposition:OliveTreeDecompositionUnitDecomposition} we can use $\olivetreedecomposition$ 
to construct a normalized unit decomposition $\boldT$ over $\newslicealphabet(q,\vertexlabel,\edgelabel\times \Omega)$ such that $\boldT$
has  tree-zig-zag number $z$ and $\composedT=G$. Therefore given an \msotwo sentence $\varphi$, and positive integers $k,z\in \N$, we  can 
apply Theorem \ref{theorem:CountingSizeWeight} to count in time $f(\varphi,k,z)\cdot |\boldT|^{O(1)}\cdot q^{O(k\cdot z)}\cdot |\Omega|^{O(k\cdot z)}$,
the number of subgraphs of $\composedT$ that are the union of $k$ directed paths, satisfy $\varphi$, have $l$ vertices and weight 
$\alpha$. Since $|\boldT|\leq |G|^{O(1)}$, $q\leq |E|$, and by assumption $|\Omega|\leq |G|^{O(1)}$, we have that the total running time of 
the algorithm is $f(\varphi,k,z)\cdot |G|^{O(k\cdot z)}$. Since $z\leq 9w+18$, the running time of the algorithm stated in terms 
of directed treewidth is $f(\varphi,k,z)\cdot |G|^{O(k(w+1))}$.  Here we write $w+1$ in the exponent, to emphasize that the treewidth of 
$G$ can be $0$. 
$\square$

\section{Conclusion}
\label{section:Conclusion}

In this work we devised the first algorithmic metatheorem for digraphs of constant directed treewidth. 
We showed that most of the previously known positive algorithmic results for this class of digraphs 
can be re-stated in terms of our metatheorem. Additionally, we showed 
how to use our metatheorem to provide polynomial time algorithms for
two classes of counting problems whose polynomial-time solvability is not 
implied by previously existing techniques. Namely, for each fixed $k$, we showed how to count in polynomial time on digraphs of 
constant directed treewidth, the number of minimum spanning  strong subgraphs that are the union of $k$ directed paths, 
and the number of subgraphs that are the union of $k$ directed paths and satisfy a given minor closed property.

To prove our main theorem we introduced two new theoretical tools which in our opinion are of 
independent interest. The first, the tree-zig-zag number of a digraph, is a new directed width 
measure that is at most a constant times its directed treewidth. Concerning this measure, 
we leave open the problem of determining whether there exist families of digraphs of constant 
tree-zig-zag number but unbounded directed treewidth, or whether the directed treewidth of a digraph 
is always bounded by a function of its tree-zig-zag number.  The second theoretical tool we have introduced 
is the notion of $z$-saturated tree-automata. 
By Theorem \ref{theorem:CountingSubgraphs}, given a digraph $G$ of constant directed treewidth, and 
a $z$-saturated tree-automaton $\treeAutomaton$ generating only digraphs that are the union of $k$ directed paths, 
one can count the number of subgraphs of $G$ that are isomorphic to some digraph in $\lang_{\graph}(\treeAutomaton)$. It would be 
interesting to study ways of constructing $z$-saturated tree-automata without the help of \msotwo logic. 
Such a construction would open the possibility of using Theorem \ref{theorem:CountingSubgraphs} to solve counting problems, 
on digraphs of constant directed treewidth, that may not be approachable via Theorem \ref{theorem:MainTheoremDirectedTreewidth}.

\section{Acknowledgements.}
The author is currently supported by the European Research Council, ERC grant agreement 339691, within the context 
of the project Feasibility, Logic and Randomness (FEALORA).

\bibliographystyle{abbrv}
%\bibliography{zSaturationTrees}

\end{document}